\newcommand{\mi}{\mathit} 
\DeclareMathAlphabet{\mib}{T1}{cmr}{bx}{it}
\newcommand{\mc}{\mathcal}
\newcommand{\mf}{\mathfrak}
\newcommand{\mbb}{\mathbb} 
\newcommand{\mbbm}{\mathbbm}
\newcommand{\mbi}{\boldsymbol}
\newcommand{\cref}[1]{\arabic{#1}}
\newcommand{\ctag}[2]{\addtocounter{equation}{1}\setcounter{#1}{\value{equation}}
                      \tag*{#2 (\cref{#1})}}
\newcommand{\geql}{\eqcirc}
\newcommand{\gle}{\prec}   
\newcommand{\feql}{\operatorname{\pmb{\eqcirc}}}
\newcommand{\fle}{\operatorname{\pmb{\prec}}}
\newcommand{\ceql}{\operatorname{\triangleq}}
\newcommand{\cle}{\operatorname{\vartriangleleft}}
\newcommand{\nceql}{\operatorname{\not\triangleq}}  
\newcommand{\ncle}{\operatorname{\ntriangleleft}}
\newcommand{\gz}{\bar{0}}
\newcommand{\gu}{\bar{1}}
\newcommand{\fcom}[1]{\pmb{\overline{\phantom{#1}}}\mspace{-9.75mu}#1}
\newcommand{\fcomd}[1]{\pmb{\overline{\phantom{#1}}}\mspace{-47.50mu}#1}
\newcommand{\fcoml}{\pmb{\overline{\parbox{7pt}{\mbox{}\vspace{6.6pt}\mbox{}}}}}
\newcommand{\fvee}{\operatorname{\pmb{\vee}}}
\newcommand{\bigfvee}{\operatornamewithlimits{\pmb{\bigvee}}}
\newcommand{\fwedge}{\operatorname{\pmb{\wedge}}}
\newcommand{\bigfwedge}{\operatornamewithlimits{\pmb{\bigwedge}}}
\newcommand{\frightarrow}{\operatorname{\pmb{\Rightarrow}}}
\newcommand{\fdiamond}{\operatorname{\pmb{\diamond}}}
\newcommand{\del}{\operatorname{\Delta}}
\newcommand{\fdel}{\operatorname{\pmb{\Delta}}}
\newcommand{\eqvl}{\equiv}
\newcommand{\emptyseq}{\ell}
\newtheorem{theorem}{Theorem}
\newtheorem{lemma}[theorem]{Lemma}
\newtheorem{corollary}[theorem]{Corollary}
\newenvironment{proof}{\begin{IEEEproof}}
                      {\end{IEEEproof}}  
\newcommand{\qed}{\mbox{}}
\begin{document}

\title{Hyperresolution for Multi-step Fuzzy Inference in G\"{o}del Logic}

\author{Du\v{s}an~Guller%
\thanks{D. Guller is with the Department of Applied Informatics, Comenius University, Mlynsk\'a dolina, 842 48 Bratislava, Slovakia
        (e-mail: guller@fmph.uniba.sk).}%
}

\markboth{Du\v{s}an Guller \hfill Hyperresolution for Multi-step Fuzzy Inference in G\"{o}del Logic \hspace{5mm}}{Hyperresolution for Multi-step Fuzzy Inference in G\"{o}del Logic}

\maketitle

\begin{abstract}
This paper is a continuation of our work concerning the logical and computational foundations of multi-step fuzzy inference. 
We bring further results on the implementation of the Mamdani-Assilian type of fuzzy rules and inference in G\"{o}del logic with truth constants.
In our previous work, we have provided translation of Mamdani-Assilian fuzzy rules to formulae of G\"{o}del logic, and subsequently, 
to suitable clausal form.
Moreover, we have outlined a class of problems regarding general properties of fuzzy inference and 
shown its reduction to a class of deduction/unsatisfiability problems. 
We now focus on solving such problems using an adapted hyperresolution calculus.
\end{abstract}

\begin{IEEEkeywords}
fuzzy inference, fuzzy rules, G\"{o}del logic, resolution proof method
\end{IEEEkeywords}

\section{Introduction}
\label{S1}

Steady progress in artificial, computational intelligence, soft computing, data science, and knowledge representation brings new challenges
of processing complex, large amount of data, with deep and non-trivial semantical dependencies.
Among the standard methods (including machine learning, evolutionary computation, Bayesian, Markov networks, etc.), 
fuzzy logic, based on the many-valued logic approach, is well suited to cope with these challenges.
The reason is that it can provide mathematically correct and advanced semantics of uncertainty in general 
for capturing the semantics of knowledge in a specific area.
Besides a strong declarative power, the fuzzy logic approach can offer computationally efficient solutions, 
e.g. exploiting methods of automated reasoning.
The conjunction of both the methods can constitute so-called fuzzy reasoning --
a kind of abstract inference, generally with multiple inference steps.

Based on our preliminary results in \cite{Guller2012b,Guller2015a,Guller2014,Guller2015c,Guller2016c,Guller2018a,Guller2018b,Guller2019b},
we have decided to provide the logical and computational foundations of multi-step fuzzy inference.
Particularly, a good choice has been the implementation of the Mamdani-Assilian type of fuzzy rules and inference 
in G\"{o}del logic with truth constants.
In \cite{Guller2023c}, we have proposed translation of Mamdani-Assilian fuzzy rules to formulae of G\"{o}del logic, and subsequently, 
to clausal form.
A fuzzy inference system is viewed as a logical theory, which allow us to simulate fuzzy inference itself, and moreover, 
to formulate general properties of fuzzy inference as deduction problems.
The translation to clausal form preserves satisfiability, so deduction problems can be reduced to unsatisfiability ones.
Notice that multi-step fuzzy inference is a kind of non-monotonic reasoning.
To deal with the non-monotonicity, we had to devise an adequate axiomatisation reflecting a kind of  
linear discrete time with the starting point $0$ and without an endpoint.
In addition, we had to incorporate the universum (often a closed interval of real numbers) of fuzzy sets (appearing in the fuzzy inference system) into our axiomatisation.
We have introduced a notion of fuzzy variable assignment and exploited it in the definition of multi-step fuzzy inference, 
viewed as a sequence of "linked" fuzzy variable assignments, where the values of fuzzy variables at the next step are computed 
from the values valid at the preceding step using the Mamdani-Assilian method.   
If the fuzzy sets in question are continuous functions on the universum, it suffices consider only the countable universum 
(e.g. a closed interval of rational numbers).  
Moreover, in certain circumstances, we can obtain a finite approximation of the universum.
We may assume that the universum is a closed interval $[u_b,u_e]$ of real numbers, and 
the fuzzy sets in question (a finite number) are 
semi-differentiable at every inner point of the interval, right-differentiable at $u_b$, left-differentiable at $u_e$. 
There exist lower and upper bounds on all left derivatives, and analogously, lower and upper bounds on all right derivatives.
If two fuzzy sets $A$ and $B$ are different at some inner point $u$, then there exists an open subinterval $(u-\delta,u+\delta)$
where $A$ and $B$ have different values at every point of $(u-\delta,u+\delta)$.
This yields that the universum $[u_b,u_e]$ can be replaced with a finite set of witness points $\{u_b<u_1<\cdots<u_\lambda<u_e\}$
which splits the closed interval $[u_b,u_e]$ into subsequent closed subintervals of the same length so that
if two fuzzy sets $A$ and $B$ are different, then there exists a witness point $u^*$ witnessing $A(u^*)\neq B(u^*)$. 
Such a finite approximation is sufficient for many problems, covering those ones posed in \cite{Guller2023c}, and
allows us apply a modified hyperresolution calculus from \cite{Guller2019b}, inferring over clausal translations.
The modified hyperresolution calculus will be proved to be refutation sound and complete.
We extend it with some admissible rules, which help shorten hyperresolution derivations.
At the end, we shall continue with the example on a fuzzy inference system modelling an engine with inner combustion and cooling medium 
from \cite{Guller2023c}.
We formulate an instance of the reachability problem, translate it to a clausal theory, and 
construct a refutation of the clausal theory using hyperresolution, which solves the given problem.

The paper is organised as follows.
Section \ref{S2} gives the basic notions and notation concerning the first-order G\"{o}del logic.
Section \ref{S3} deals with translation to clausal form.
Section \ref{S4} introduces a modified hyperresolution calculus.
Section \ref{S5} is devoted to multi-step fuzzy inference.
Section~\ref{S6} contains an illustration example on hyperresolution, a continuation of that in \cite{Guller2023c}.
Section~\ref{S7} brings conclusions.

\subsection{Preliminaries}
\label{S1.1}

$\mbb{N}$, $\mbb{Q}$, $\mbb{R}$ designates the set of natural, rational, real numbers, and
$=$, $\leq$, $<$ denotes the standard equality, order, strict order on $\mbb{N}$, $\mbb{Q}$, $\mbb{R}$.
We denote $\mbb{R}_0^+=\{c \,|\, 0\leq c\in \mbb{R}\}$, $\mbb{R}^+=\{c \,|\, 0<c\in \mbb{R}\}$,
$[0,1]=\{c \,|\, c\in \mbb{R}, 0\leq c\leq 1\}$; $[0,1]$ is called the unit interval.
Let $X$, $Y$, $Z$ be sets and $f : X\longrightarrow Y$ a mapping.
By $\|X\|$ we denote the set-theoretic cardinality of $X$.
The relationship of $X$ being a finite subset of $Y$ is denoted as $X\subseteq_{\mc F} Y$.
Let $Z\subseteq X$.
We designate 
$f[Z]=\{f(z) \,|\, z\in Z\}$; $f[Z]$ is called the image of $Z$ under $f$; 
$f|_Z=\{(z,f(z)) \,|\, z\in Z\}$; $f|_Z$ is called the restriction of $f$ onto $Z$.
Let $\gamma\leq \omega$.
A sequence $\delta$ of $X$ is a bijection $\delta : \gamma\longrightarrow X$.
Recall that $X$ is countable if and only if there exists a sequence of $X$.
Let $I$ be an index set, and $S_i\neq \emptyset$, $i\in I$, be sets.
A selector ${\mc S}$ over $\{S_i \,|\, i\in I\}$ is a mapping ${\mc S} : I\longrightarrow \bigcup \{S_i \,|\, i\in I\}$ such that
for all $i\in I$, ${\mc S}(i)\in S_i$.
We denote ${\mc S}\mi{el}(\{S_i \,|\, i\in I\})=\{{\mc S} \,|\, {\mc S}\ \text{\it is a selector over}\ \{S_i \,|\, i\in I\}\}$.
Let $c\in \mbb{R}^+$.
$\log c$ denotes the binary logarithm of $c$.
Let $f, g : \mbb{N}\longrightarrow \mbb{R}_0^+$.
$f$ is of the order of $g$, in symbols $f\in O(g)$, iff there exist $n_0$ and $c^*\in \mbb{R}_0^+$ such that
for all $n\geq n_0$, $f(n)\leq c^*\cdot g(n)$.

\section{First-order G\"{o}del logic}
\label{S2}

Throughout the paper, we shall use the common notions and notation of first-order logic.
By ${\mc L}$ we denote a first-order language. 
$\mi{Var}_{\mc L}$, $\mi{Func}_{\mc L}$, $\mi{Pred}_{\mc L}$, $\mi{Term}_{\mc L}$, $\mi{GTerm}_{\mc L}$, 
$\mi{Atom}_{\mc L}$ denotes
the set of all variables, function symbols, predicate symbols, terms, ground terms, atoms of ${\mc L}$.
$\mi{ar}_{\mc L} : \mi{Func}_{\mc L}\cup \mi{Pred}_{\mc L}\longrightarrow \mbb{N}$ denotes 
the mapping assigning an arity to every function and predicate symbol of ${\mc L}$.
$\mi{cn}\in \mi{Func}_{\mc L}$ such that $\mi{ar}_{\mc L}(\mi{cn})=0$ is called a constant symbol.
Let $\{0,1\}\subseteq C_{\mc L}\subseteq [0,1]$ be countable.
We assume a countable set of truth constants of ${\mc L}$
$\overline{C}_{\mc L}=\{\bar{c} \,|\, c\in C_{\mc L}\}$;
$\gz$, $\gu$ denotes the false, the true in ${\mc L}$; $\bar{c}$, $0<c<1$, is called an intermediate truth constant.
Let $x\in \overline{C}_{\mc L}$ and $X\subseteq \overline{C}_{\mc L}$.
Then there exists a unique $c\in C_{\mc L}$ such that $\bar{c}=x$.
We denote $\underline{x}=c$ and
$\underline{X}=\{\underline{x} \,|\, \underline{x}\in C_{\mc L}, x\in X\}$.
By $\mi{Form}_{\mc L}$ we designate the set of all formulae of ${\mc L}$ built up 
from $\mi{Atom}_{\mc L}$, $\overline{C}_{\mc L}$, $\mi{Var}_{\mc L}$
using the connectives: $\neg$, negation, $\del$, Delta, $\wedge$, conjunction, $\vee$, disjunction, $\rightarrow$, implication,  
$\leftrightarrow$, equivalence, $\geql$, equality, $\gle$, strict order, and
the quantifiers: $\forall$, the universal one, $\exists$, the existential
one.\footnote{We assume a decreasing connective and quantifier precedence:
              $\forall$, $\exists$, $\neg$, $\del$, $\geql$, $\gle$, $\wedge$, $\vee$, $\rightarrow$, $\leftrightarrow$.}
In the paper, we shall assume that ${\mc L}$ is a countable first-order language; 
hence, all the above mentioned sets of symbols and expressions are 
countable.\footnote{If the first-order language in question is not explicitly designated,
                    we shall write denotations without index.} 
Let $\varepsilon$, $\varepsilon_i$, $1\leq i\leq m$, $\upsilon_i$, $1\leq i\leq n$, be
either an expression or a set of expressions or a set of sets of expressions of ${\mc L}$, in general.
By $\mi{vars}(\varepsilon_1,\dots,\varepsilon_m)\subseteq \mi{Var}_{\mc L}$,
$\mi{freevars}(\varepsilon_1,\dots,\varepsilon_m)\subseteq \mi{Var}_{\mc L}$,
$\mi{boundvars}(\varepsilon_1,\dots,\varepsilon_m)\subseteq \mi{Var}_{\mc L}$,
$\mi{funcs}(\varepsilon_1,\dots,\varepsilon_m)\subseteq \mi{Func}_{\mc L}$,
$\mi{preds}(\varepsilon_1,\dots,\varepsilon_m)\subseteq \mi{Pred}_{\mc L}$,
$\mi{atoms}(\varepsilon_1,\dots,\varepsilon_m)\subseteq \mi{Atom}_{\mc L}$,
$\mi{tcons}(\varepsilon_1,\dots,\varepsilon_m)\subseteq \overline{C}_{\mc L}$
we denote the set of all variables, free variables, bound variables, function symbols, predicate symbols, atoms, truth constants of ${\mc L}$
occurring in $\varepsilon_1,\dots,\varepsilon_m$.
$\varepsilon$ is closed iff $\mi{freevars}(\varepsilon)=\emptyset$.
By $\emptyseq$ we denote the empty sequence.
Let $\varepsilon_1,\dots,\varepsilon_m$ and $\upsilon_1,\dots,\upsilon_n$ be sequences.
The length of $\varepsilon_1,\dots,\varepsilon_m$ is defined as $|\varepsilon_1,\dots,\varepsilon_m|=m$.
We define the concatenation of $\varepsilon_1,\dots,\varepsilon_m$ and $\upsilon_1,\dots,\upsilon_n$
as $(\varepsilon_1,\dots,\varepsilon_m),(\upsilon_1,\dots,\upsilon_n)=\varepsilon_1,\dots,\varepsilon_m,\upsilon_1,\dots,\upsilon_n$.
Note that the concatenation is 
associative.\footnote{Several simultaneous applications of the concatenation will be written without parentheses.}

Let $t\in \mi{Term}_{\mc L}$, $\phi\in \mi{Form}_{\mc L}$, $T\subseteq_{\mc F} \mi{Form}_{\mc L}$.
We define the size of $t$ by recursion on the structure of $t$ as follows:
\begin{equation*}
|t|=\left\{\begin{array}{ll}
           1                    &\ \text{\it if}\ t\in \mi{Var}_{\mc L}, \\[1mm]
           1+\sum_{i=1}^n |t_i| &\ \text{\it if}\ t=f(t_1,\dots,t_n).
           \end{array}
    \right. 
\end{equation*}
Subsequently, we define the size of $\phi$ by recursion on the structure of $\phi$ as follows:
\begin{equation*}
|\phi|=\left\{\begin{array}{ll}
              1+\sum_{i=1}^n |t_i| &\ \text{\it if}\ \phi=p(t_1,\dots,t_n)\in \mi{Atom}_{\mc L}, \\[1mm]
              1                    &\ \text{\it if}\ \phi\in \overline{C}_{\mc L}, \\[1mm]
              1+|\phi_1|           &\ \text{\it if}\ \phi=\diamond \phi_1, \\[1mm]
              1+|\phi_1|+|\phi_2|  &\ \text{\it if}\ \phi=\phi_1\diamond \phi_2, \\[1mm]
              2+|\phi_1|           &\ \text{\it if}\ \phi=Q x\, \phi_1.
              \end{array}
       \right.
\end{equation*}
Note that $|t|, |\phi|\geq 1$.
The size of $T$ is defined as $|T|=\sum_{\phi\in T} |\phi|$.
By $\mi{varseq}(\phi)$, $\mi{vars}(\mi{varseq}(\phi))\subseteq \mi{Var}_{\mc L}$, 
we denote the sequence of all variables of ${\mc L}$ occurring in $\phi$ which is built up via the left-right preorder traversal of $\phi$.
For example, $\mi{varseq}(\exists w\, (\forall x\, p(x,x,z)\vee \exists y\, q(x,y,z)))=w,x,x,x,z,y,x,y,z$ and $|w,x,x,x,z,y,x,y,z|=9$. 
A sequence of variables will often be denoted as $\bar{x}$, $\bar{y}$, $\bar{z}$, etc.
Let $Q\in \{\forall,\exists\}$ and $\bar{x}=x_1,\dots,x_n$ be a sequence of variables of ${\mc L}$.
By $Q \bar{x}\, \phi$ we denote $Q x_1\dots Q x_n\, \phi$.
Let $I\subseteq_{\mc F} \mbb{N}$ and $t_i\in \mi{Term}_{\mc L}$, $i\in I$.
We denote $\big\langle \{(i,t_i) \,|\, i\in I\} \big\rangle=t_{j_1},\dots,t_{j_\kappa}$
where $\{j_k \,|\, 1\leq k\leq \kappa\}=I$, and for all $1\leq k<\kappa$, $j_k<j_{k+1}$.
For example, $\big\langle \{(5,t_5),(2,t_2),(7,t_7),(4,t_4)\} \big\rangle=t_2,t_4,t_5,t_7$.
Let $x_1,\dots,x_n\in \mi{freevars}(\phi)$.
$\phi$ may be denoted as $\phi(x_1,\dots,x_n)$.
Let $t_1,\dots,t_n\in \mi{Term}_{\mc L}$ be closed.
By $\phi(t_1,\dots,t_n)\in \mi{Form}_{\mc L}$ or $\phi(x_1/t_1,\dots,x_n/t_n)\in \mi{Form}_{\mc L}$
we denote the instance of $\phi(x_1,\dots,x_n)$ built up by substituting $t_1,\dots,t_n$ for $x_1,\dots,x_n$
in the standard manner.

G\"{o}del logic is interpreted by the standard $\mbi{G}$-algebra 
augmented by the operators $\feql$, $\fle$, $\fdel$ for the connectives $\geql$, $\gle$, $\del$, respectively.
\begin{equation*}
\mbi{G}=([0,1],\leq,\fvee,\fwedge,\frightarrow,\fcoml,\feql,\fle,\fdel,0,1)
\end{equation*}
where $\fvee$, $\fwedge$ denotes the supremum, infimum operator on $[0,1]$;
\begin{alignat*}{2}
a\frightarrow b &= \left\{\begin{array}{ll}
                          1 &\ \text{\it if}\ a\leq b, \\[1mm]
                          b &\ \text{\it else};
                          \end{array}
                   \right. 
& 
\fcom{a}        &= \left\{\begin{array}{ll}
                          1 &\ \text{\it if}\ a=0, \\[1mm]
                          0 &\ \text{\it else};
                          \end{array}
                   \right. 
\\[2mm]
a\feql b        &= \left\{\begin{array}{ll}
                          1 &\ \text{\it if}\ a=b, \\[1mm]
                          0 &\ \text{\it else};
                          \end{array}
                   \right. 
& \qquad
a\fle b         &= \left\{\begin{array}{ll}
                          1 &\ \text{\it if}\ a<b, \\[1mm]
                          0 &\ \text{\it else};
                          \end{array}
                   \right. 
\\[2mm]
\fdel a         &= \left\{\begin{array}{ll}
                          1 &\ \text{\it if}\ a=1, \\[1mm]
                          0 &\ \text{\it else}.
                          \end{array}
                   \right.
\end{alignat*}
Recall that $\mbi{G}$ is a complete linearly ordered lattice algebra;
$\fvee$, $\fwedge$ is commutative, associative, idempotent, monotone; 
$0$, $1$ is its neutral 
element; 
%
%
the residuum operator $\frightarrow$ of $\fwedge$ satisfies the condition of residuation:
\begin{equation}
\label{eq0a}
\text{for all}\ a, b, c\in \mbi{G},\ a\fwedge b\leq c\Longleftrightarrow a\leq b\frightarrow c;
\end{equation}
G\"{o}del negation $\fcoml$ satisfies the condition:
\begin{equation}
\label{eq0b}
\text{for all}\ a\in \mbi{G},\ \fcom{a}=a\frightarrow 0;
\end{equation}
$\fdel$ satisfies the 
condition:\footnote{We assume a decreasing operator precedence: $\fcoml$, $\fdel$, $\feql$, $\fle$, $\fwedge$, $\fvee$, $\frightarrow$.}
\begin{equation}
\label{eq0kk}
\text{for all}\ a\in \mbi{G},\ \fdel a=a\feql 1.
\end{equation}
Note that the following properties hold:
\begin{alignat}{1}
\notag
& \hspace{-2.24mm} \text{for all}\ a, b, c\in \mbi{G}, \\[1mm]
\ctag{ceq0d}{(distributivity of $\fvee$ over $\fwedge$)}
& a\fvee b\fwedge c=(a\fvee b)\fwedge (a\fvee c), \\[1mm]
\ctag{ceq0c}{(distributivity of $\fwedge$ over $\fvee$)}
& a\fwedge (b\fvee c)=a\fwedge b\fvee a\fwedge c, \\[1mm]
\label{eq0f}
& a\frightarrow b\fvee c=(a\frightarrow b)\fvee (a\frightarrow c), \\[1mm]
\label{eq0e}
& a\frightarrow b\fwedge c=(a\frightarrow b)\fwedge (a\frightarrow c), \\[1mm]
\label{eq0h}
& a\fvee b\frightarrow c=(a\frightarrow c)\fwedge (b\frightarrow c), \\[1mm]
\label{eq0g}
& a\fwedge b\frightarrow c=(a\frightarrow c)\fvee (b\frightarrow c), \\[1mm]
\label{eq0i}
& a\frightarrow (b\frightarrow c)=a\fwedge b\frightarrow c, \\[1mm]
\label{eq0j}
& ((a\frightarrow b)\frightarrow b)\frightarrow b=a\frightarrow b, \\[1mm]
\label{eq0k}
& (a\frightarrow b)\frightarrow c=((a\frightarrow b)\frightarrow b)\fwedge (b\frightarrow c)\fvee c, \\[1mm]
\label{eq0jj}
& (a\frightarrow b)\frightarrow 0=((a\frightarrow 0)\frightarrow 0)\fwedge (b\frightarrow 0).
\end{alignat}

An interpretation ${\mc I}$ for ${\mc L}$ is a triple
$\big({\mc U}_{\mc I},\{f^{\mc I} \,|\, f\in \mi{Func}_{\mc L}\},\{p^{\mc I} \,|\, p\in \mi{Pred}_{\mc L}\}\big)$ defined as follows: 
${\mc U}_{\mc I}\neq \emptyset$ is the universum of ${\mc I}$;
every $f\in \mi{Func}_{\mc L}$ is interpreted as a function $f^{\mc I} : {\mc U}_{\mc I}^{\mi{ar}_{\mc L}(f)}\longrightarrow {\mc U}_{\mc I}$;
every $p\in \mi{Pred}_{\mc L}$ is interpreted as a $[0,1]$-relation $p^{\mc I} : {\mc U}_{\mc I}^{\mi{ar}_{\mc L}(p)}\longrightarrow [0,1]$.
A variable assignment in ${\mc I}$ is a mapping $\mi{Var}_{\mc L}\longrightarrow {\mc U}_{\mc I}$. 
We denote the set of all variable assignments in ${\mc I}$ as ${\mc S}_{\mc I}$.
Let $e\in {\mc S}_{\mc I}$ and $u\in {\mc U}_{\mc I}$.
A variant $e[x/u]\in {\mc S}_{\mc I}$ of $e$ with respect to $x$ and $u$ is defined by
\begin{equation*}
e[x/u](z)=\left\{\begin{array}{ll}
                 u    &\ \text{\it if}\ z=x, \\[1mm]
                 e(z) &\ \text{\it else}.
                 \end{array}
          \right.
\end{equation*}  
In ${\mc I}$ with respect to $e$, 
we define the value $\|t\|_e^{\mc I}\in {\mc U}_{\mc I}$ of $t$ by recursion on the structure of $t$,
the value $\|\bar{x}\|_e^{\mc I}\in {\mc U}_{\mc I}^{|\bar{x}|}$ of $\bar{x}$,
the truth value $\|\phi\|_e^{\mc I}\in [0,1]$ of $\phi$ by recursion on the structure of $\phi$, as follows:
{\footnotesize
\begin{alignat*}{2}
&    t\in \mi{Var}_{\mc L}, 
& &\ \|t\|_e^{\mc I}=e(t); \\[1mm]
&    t=f(t_1,\dots,t_n), 
& &\ \|t\|_e^{\mc I}=f^{\mc I}(\|t_1\|_e^{\mc I},\dots,\|t_n\|_e^{\mc I}); \\[2mm]
&    \bar{x}=x_1,\dots,x_{|\bar{x}|}, 
& &\ \|\bar{x}\|_e^{\mc I}=e(x_1),\dots,e(x_{|\bar{x}|}); \\[2mm]
&    \phi=p(t_1,\dots,t_n), 
& &\ \|\phi\|_e^{\mc I}=p^{\mc I}(\|t_1\|_e^{\mc I},\dots,\|t_n\|_e^{\mc I}); \\[1mm]
&    \phi=c\in \overline{C}_{\mc L},     
& &\ \|\phi\|_e^{\mc I}=\underline{c}; \\[1mm]
&    \phi=\neg \phi_1,
& &\ \|\phi\|_e^{\mc I}=\fcomd{\|\phi_1\|_e^{\mc I}}; \\[1mm]
&    \phi=\del \phi_1,
& &\ \|\phi\|_e^{\mc I}=\fdel \|\phi_1\|_e^{\mc I}; \\[1mm]
&    \phi=\phi_1\diamond \phi_2,
& &\ \|\phi\|_e^{\mc I}=\|\phi_1\|_e^{\mc I}\fdiamond \|\phi_2\|_e^{\mc I}, \quad \diamond\in \{\wedge,\vee,\rightarrow,\geql,\gle\}; \\[1mm]
&    \phi=\phi_1\leftrightarrow \phi_2,
& &\ \|\phi\|_e^{\mc I}=(\|\phi_1\|_e^{\mc I}\frightarrow \|\phi_2\|_e^{\mc I})\fwedge
                        (\|\phi_2\|_e^{\mc I}\frightarrow \|\phi_1\|_e^{\mc I}); \\[1mm]
&    \phi=\forall x\, \phi_1,
& &\ \|\phi\|_e^{\mc I}=\bigfwedge_{u\in {\mc U}_{\mc I}} \|\phi_1\|_{e[x/u]}^{\mc I}; \\[1mm]
&    \phi=\exists x\, \phi_1,
& &\ \|\phi\|_e^{\mc I}=\bigfvee_{u\in {\mc U}_{\mc I}} \|\phi_1\|_{e[x/u]}^{\mc I}.
\end{alignat*}}%
Let $\phi$ be closed.
Then, for all $e, e'\in {\mc S}_{\mc I}$, $\|\phi\|_e^{\mc I}=\|\phi\|_{e'}^{\mc I}$.
Note that ${\mc S}_{\mc I}\neq \emptyset$.
We denote $\|\phi\|^{\mc I}=\|\phi\|_e^{\mc I}$.

Let ${\mc L}'$ be a first-order language, and ${\mc I}$, ${\mc I}'$ be interpretations for ${\mc L}$, ${\mc L}'$, respectively.
${\mc L}'$ is an expansion of ${\mc L}$ iff $\mi{Func}_{{\mc L}'}\supseteq \mi{Func}_{\mc L}$ and 
$\mi{Pred}_{{\mc L}'}\supseteq \mi{Pred}_{\mc L}$;
on the other side, we say that ${\mc L}$ is a reduct of ${\mc L}'$.
${\mc I}'$ is an expansion of ${\mc I}$ to ${\mc L}'$
iff ${\mc L}'$ is an expansion of ${\mc L}$, ${\mc U}_{{\mc I}'}={\mc U}_{\mc I}$,
for all $f\in \mi{Func}_{\mc L}$, $f^{{\mc I}'}=f^{\mc I}$,
for all $p\in \mi{Pred}_{\mc L}$, $p^{{\mc I}'}=p^{\mc I}$;
on the other side, we say that ${\mc I}$ is a reduct of ${\mc I}'$ to ${\mc L}$, in symbols ${\mc I}={\mc I}'|_{\mc L}$.

A theory of ${\mc L}$ is a set of formulae of ${\mc L}$.
$\phi$ is true in ${\mc I}$ with respect to $e$, written as ${\mc I}\models_e \phi$, iff $\|\phi\|_e^{\mc I}=1$.
${\mc I}$ is a model of $\phi$, in symbols ${\mc I}\models \phi$, iff, for all $e\in {\mc S}_{\mc I}$, ${\mc I}\models_e \phi$.
Let $\phi'\in \mi{Form}_{\mc L}$ and $T\subseteq \mi{Form}_{\mc L}$.
${\mc I}$ is a model of $T$, in symbols ${\mc I}\models T$, iff, for all $\phi\in T$, ${\mc I}\models \phi$.
$\phi$ is a logically valid formula iff, for every interpretation ${\mc I}$ for ${\mc L}$, ${\mc I}\models \phi$.
$\phi$ is equivalent to $\phi'$, in symbols $\phi\eqvl \phi'$, 
iff, for every interpretation ${\mc I}$ for ${\mc L}$ and $e\in {\mc S}_{\mc I}$, $\|\phi\|_e^{\mc I}=\|\phi'\|_e^{\mc I}$.

\section{Translation to clausal form}
\label{S3}

In \cite{Guller2023c}, Section III, we have introduced a clausal fragment in G\"{o}del logic and 
proposed translation of formulae to clausal form.
We briefly recall the basic notions and notation.

Let $a\in \mi{Form}_{\mc L}$.
$a$ is a quantified atom of ${\mc L}$ iff $a=Q x\, p(t_0,\dots,t_n)$
where $p(t_0,\dots,t_n)\in \mi{Atom}_{\mc L}$, $x\in \mi{vars}(p(t_0,\dots,t_n))$,
for all $i\leq n$, either $t_i=x$ or $x\not\in \mi{vars}(t_i)$.
$\mi{QAtom}_{\mc L}\subseteq \mi{Form}_{\mc L}$ denotes the set of all quantified atoms of ${\mc L}$.
$\mi{QAtom}_{\mc L}^Q\subseteq \mi{QAtom}_{\mc L}$, $Q\in \{\forall,\exists\}$, denotes the set of all quantified atoms of ${\mc L}$ 
of the form $Q x\, a$.
Let $\varepsilon_i$, $1\leq i\leq m$, be
either an expression or a set of expressions or a set of sets of expressions of ${\mc L}$, in general.
By $\mi{qatoms}(\varepsilon_1,\dots,\varepsilon_m)\subseteq \mi{QAtom}_{\mc L}$ we denote the set of all quantified atoms of ${\mc L}$
occurring in $\varepsilon_1,\dots,\varepsilon_m$.
We denote $\mi{qatoms}^Q(\varepsilon_1,\dots,\varepsilon_m)=\mi{qatoms}(\varepsilon_1,\dots,\varepsilon_m)\cap \mi{QAtom}_{\mc L}^Q$, 
$Q\in \{\forall,\exists\}$.
Let $p(t_1,\dots,t_n)\in \mi{Atom}_{\mc L}$, $c\in \overline{C}_{\mc L}$, $Q x\, p(t_0,\dots,t_n)\in \mi{QAtom}_{\mc L}$.
We denote
\begin{alignat*}{1}
& \mi{freetermseq}(p(t_1,\dots,t_n))=t_1,\dots,t_n, \\
& \mi{freetermseq}(c)=\emptyseq, \\
& \mi{freetermseq}(Q x\, p(t_0,\dots,t_n))= \\
& \hspace{36.85mm} \big\langle \{(i,t_i) \,|\, i\leq n, x\not\in \mi{vars}(t_i)\} \big\rangle, \\
& \mi{boundindset}(Q x\, p(t_0,\dots,t_n))=\{i \,|\, i\leq n, t_i=x\}\neq \emptyset.
\end{alignat*}

Order clauses in G\"{o}del logic are introduced as follows.
Let $l\in \mi{Form}_{\mc L}$.
$l$ is an order literal of ${\mc L}$ iff $l=\varepsilon_1\diamond \varepsilon_2$,
$\varepsilon_i\in \mi{Atom}_{\mc L}\cup \overline{C}_{\mc L}\cup \mi{QAtom}_{\mc L}$, $\diamond\in \{\geql,\gle\}$.
The set of all order literals of ${\mc L}$ is designated as $\mi{OrdLit}_{\mc L}\subseteq \mi{Form}_{\mc L}$.
An order clause of ${\mc L}$ is a finite set of order literals of ${\mc L}$.
Since $=$ is symmetric, $\geql$ is commutative;
hence, for all $\varepsilon_1\geql \varepsilon_2\in \mi{OrdLit}_{\mc L}$, we identify
$\varepsilon_1\geql \varepsilon_2$ with $\varepsilon_2\geql \varepsilon_1\in \mi{OrdLit}_{\mc L}$ with respect to order clauses.
An order clause $\{l_0,\dots,l_n\}\neq \emptyset$ is written in the form $l_0\vee\cdots\vee l_n$.
The empty order clause $\emptyset$ is denoted as $\square$.
An order clause $\{l\}$ is called unit and denoted as $l$;
if it does not cause the ambiguity with the denotation of the single order literal $l$ in a given context.
We designate the set of all order clauses of ${\mc L}$ as $\mi{OrdCl}_{\mc L}$.
Let $l, l_0,\dots,l_n\in \mi{OrdLit}_{\mc L}$ and $C, C'\in \mi{OrdCl}_{\mc L}$.
We define the size of $C$ as $|C|=\sum_{l\in C} |l|$.
By $l_0\vee\cdots\vee l_n\vee C$ we denote $\{l_0,\dots,l_n\}\cup C$
where, for all $i, i'\leq n$ and $i\neq i'$, $l_i\not\in C$, $l_i\neq l_{i'}$.
By $C\vee C'$ we denote $C\cup C'$.
$C$ is a subclause of $C'$, in symbols $C\sqsubseteq C'$, iff $C\subseteq C'$.
An order clausal theory of ${\mc L}$ is a set of order clauses of ${\mc L}$.
A unit order clausal theory is a set of unit order clauses; in other words, we say that an order clausal theory is unit.

Let $\phi, \phi'\in \mi{Form}_{\mc L}$, $T, T'\subseteq \mi{Form}_{\mc L}$, $S, S'\subseteq \mi{OrdCl}_{\mc L}$,
${\mc I}$ be an interpretation for ${\mc L}$, $e\in {\mc S}_{\mc I}$.
$C$ is true in ${\mc I}$ with respect to $e$, written as ${\mc I}\models_e C$, iff there exists $l^*\in C$ such that ${\mc I}\models_e l^*$.
${\mc I}$ is a model of $C$, in symbols ${\mc I}\models C$, iff, for all $e\in {\mc S}_{\mc I}$, ${\mc I}\models_e C$.
${\mc I}$ is a model of $S$, in symbols ${\mc I}\models S$, iff, for all $C\in S$, ${\mc I}\models C$.
Let $\varepsilon_1\in \{\phi,T,C,S\}$ and $\varepsilon_2\in \{\phi',T',C',S'\}$.
$\varepsilon_2$ is a logical consequence of $\varepsilon_1$, in symbols $\varepsilon_1\models \varepsilon_2$,
iff, for every interpretation ${\mc I}$ for ${\mc L}$, if ${\mc I}\models \varepsilon_1$, then ${\mc I}\models \varepsilon_2$.
$\varepsilon_1$ is satisfiable iff there exists a model of $\varepsilon_1$.
$\varepsilon_1$ is equisatisfiable to $\varepsilon_2$ iff $\varepsilon_1$ is satisfiable if and only if $\varepsilon_2$ is satisfiable.
Let $S\subseteq_{\mc F} \mi{OrdCl}_{\mc L}$.
We define the size of $S$ as $|S|=\sum_{C\in S} |C|$.
We shall assume a fresh function symbol $\tilde{f}_0$ such that $\tilde{f}_0\not\in \mi{Func}_{\mc L}$.
We denote $\mbb{I}=\mbb{N}\times \mbb{N}$; $\mbb{I}$ will be exploited as a countably infinite set of indices.
We shall assume a countably infinite set of fresh predicate symbols $\tilde{\mbb{P}}=\{\tilde{p}_\mbbm{i} \,|\, \mbbm{i}\in \mbb{I}\}$ such that 
$\tilde{\mbb{P}}\cap \mi{Pred}_{\mc L}=\emptyset$.

The main Theorem 5 of Section III, \cite{Guller2023c} provides 
the reduction of a deduction problem $T\models \phi$, $\phi\in \mi{Form}_{\mc L}$, $T\subseteq \mi{Form}_{\mc L}$, 
to an unsatisfiability problem of a certain order clausal theory.

\begin{theorem}[\mbox{\rm Theorem 5, Section III, \cite{Guller2023c}}]
\label{T1}
Let $n_0\in \mbb{N}$, $\phi\in \mi{Form}_{\mc L}$, $T\subseteq \mi{Form}_{\mc L}$. 
There exist $J_T^\phi\subseteq \{(i,j) \,|\, i\geq n_0\}\subseteq \mbb{I}$ and
$S_T^\phi\subseteq \mi{OrdCl}_{{\mc L}\cup \{\tilde{p}_\mbbm{j} \,|\, \mbbm{j}\in J_T^\phi\}}$ such that 
\begin{enumerate}[\rm (i)]
\item
there exists an interpretation ${\mf A}$ for ${\mc L}$, and ${\mf A}\models T$, ${\mf A}\not\models \phi$, if and only if 
there exists an interpretation ${\mf A}'$ for ${\mc L}\cup \{\tilde{p}_\mbbm{j} \,|\, \mbbm{j}\in J_T^\phi\}$ and ${\mf A}'\models S_T^\phi$, 
satisfying ${\mf A}={\mf A}'|_{\mc L}$; 
\item
$T\models \phi$ if and only if $S_T^\phi$ is unsatisfiable;
\item
if $T\subseteq_{\mc F} \mi{Form}_{\mc L}$, then $J_T^\phi\subseteq_{\mc F} \{(i,j) \,|\, i\geq n_0\}\subseteq \mbb{I}$, 
$\|J_T^\phi\|\in O(|T|+|\phi|)$, 
$S_T^\phi\subseteq_{\mc F} \mi{OrdCl}_{{\mc L}\cup \{\tilde{p}_\mbbm{j} \,|\, \mbbm{j}\in J_T^\phi\}}$, $|S_T^\phi|\in O(|T|^2+|\phi|^2)$; 
the number of all elementary operations of the translation of $T$ and $\phi$ to $S_T^\phi$ is in $O(|T|^2+|\phi|^2)$;
the time complexity of the translation of $T$ and $\phi$ to $S_T^\phi$ is 
in $O(|T|^2\cdot \log (1+n_0+|T|)+|\phi|^2\cdot (\log (1+n_0)+\log |\phi|))$;
\item
$\mi{tcons}(S_T^\phi)-\{\gz,\gu\}\subseteq (\mi{tcons}(\phi)\cup \mi{tcons}(T))-\{\gz,\gu\}$.
\end{enumerate}
\end{theorem}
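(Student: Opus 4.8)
The plan is to construct $S_T^\phi$ by a structural, renaming-style translation that attaches a fresh predicate symbol to each subformula occurrence of $T$ and $\phi$, intended to denote that subformula's truth value, and that records the G\"{o}del semantics of the top connective by a bounded set of order clauses. Since the statement is quoted verbatim from Theorem 5 of \cite{Guller2023c}, the task reduces to reconstructing that translation and verifying the four claimed properties. First I would fix an enumeration of the subformula occurrences and, to each occurrence $\psi$ with free-variable sequence $\bar{x}$, assign a fresh atom $\tilde{p}_{\mbbm{i}}(\bar{x})$ with $\mbbm{i}\in \{(i,j)\mid i\geq n_0\}$. For each connective I would emit the order clauses, over the literals $\geql$ and $\gle$ relating the labels of $\psi$ and its immediate subformulae, that force $\tilde{p}_{\mbbm{i}}$ to take the value prescribed by $\mbi{G}$: implication is governed by residuation (\ref{eq0a}) and the definition of $\frightarrow$, negation and $\del$ by (\ref{eq0b}) and (\ref{eq0kk}), and the quantifiers $\forall,\exists$ are captured through \emph{quantified atoms} $Qx\,\tilde{p}_{\mbbm{i}}(\dots)$ rather than by Skolemisation, exploiting that the $\bigfwedge$/$\bigfvee$ semantics is expressible by comparisons against the quantified label. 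Finally, to encode $\mf{A}\not\models \phi$ I would add the order clause asserting that the label of $\phi$ lies strictly below $\gu$ at a witnessing instance supplied by the fresh symbol $\tilde{f}_0$.

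The heart of the argument is part (i), the expansion correspondence. In the forward direction, given an interpretation $\mf{A}$ for $\mc{L}$ with $\mf{A}\models T$ and $\mf{A}\not\models \phi$, I would define $\mf{A}'$ by interpreting every fresh predicate as the truth function of the subformula it names, and then verify, by induction on subformula structure, that each emitted order clause holds in $\mf{A}'$; by construction $\mf{A}={\mf A}'|_{\mc L}$. In the backward direction, given $\mf{A}'\models S_T^\phi$, the same induction shows that the defining clauses force each label to coincide with the intended truth value, so the reduct ${\mf A}'|_{\mc L}$ models $T$ while the witness clause guarantees it refutes $\phi$. Part (ii) is then immediate: $T\models \phi$ fails exactly when some interpretation separates $T$ from $\phi$, which by (i) is precisely the satisfiability of $S_T^\phi$; hence $T\models \phi$ if and only if $S_T^\phi$ is unsatisfiable.

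For the complexity claims in (iii) I would count occurrences: there are $O(|T|+|\phi|)$ subformula occurrences, giving the bound on $\|J_T^\phi\|$; each occurrence contributes a constant number of clauses, but each clause carries the free-variable sequence as arguments, so the cumulative size, and likewise the number of elementary operations, is $O(|T|^2+|\phi|^2)$. The extra logarithmic factors in the time bound arise from manipulating the indices $\mbbm{i}\in \mbb{I}$ and the free-variable lists, with the $\log(1+n_0+|T|)$ and $\log(1+n_0)+\log|\phi|$ terms tracking the bit-lengths of the index coordinates bounded below by $n_0$. Property (iv) follows by inspection: the only truth constants the translation introduces beyond $\gz$ and $\gu$ are those already present in the clauses emitted for truth-constant subformulae, so $\mi{tcons}(S_T^\phi)-\{\gz,\gu\}\subseteq (\mi{tcons}(\phi)\cup \mi{tcons}(T))-\{\gz,\gu\}$.

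The main obstacle is the quantifier case of the inductive correctness proof for (i). I must ensure that the finitary order clauses over the quantified atoms $Qx\,\tilde{p}_{\mbbm{i}}$ faithfully reproduce the infinitary infimum and supremum semantics of $\forall$ and $\exists$, and, in the backward direction, that the model extracted from $\mf{A}'\models S_T^\phi$ genuinely realises these infima and suprema rather than merely bounding them; this is where the interaction between the G\"{o}del order semantics and the $[0,1]$-valued quantifier evaluation is most delicate and where the careful design of the quantified-atom clauses must be justified.
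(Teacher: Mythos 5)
Your proposal is correct and matches the paper's approach: the paper's proof is a deferral to the interpolation rules of \cite{Guller2023c} (reproduced here in Tables~\ref{tab2} and~\ref{tab3}), which implement exactly the structure-preserving renaming you describe — fresh labelling atoms $\tilde{p}_{\mbbm{i}}(\bar{x})$ for subformula occurrences, constant-size order-clause encodings of each connective's $\mbi{G}$-semantics, quantified atoms $Qx\,\tilde{p}_{\mbbm{i}_1}(\bar{x})$ in place of Skolemisation, and the clause $\tilde{p}_{0,0}(\tau,x)\gle\gu$ to express ${\mf A}\not\models\phi$. Your quadratic size accounting via the free-variable lists and your identification of the quantifier case as the delicate point likewise agree with the cited construction.
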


\begin{proof}
The translation is based on the interpolation rules, Tables II and III, \cite{Guller2023c}.
Cf. Theorem 5, \cite{Guller2023c}.
\qed
\end{proof}

Notice that the renaming subformulae technique and structure preserving translation
have been described, among others, in \cite{Tse70,PLGR86,Boy92,Hah94b,NOROWE98,She04}.

\section{Hyperresolution over order clauses}
\label{S4}

In this section, we propose a modified order hyperresolution calculus from \cite{Guller2019b}, operating over order clausal theories, and 
prove its refutational soundness/completeness.

\subsection{Order hyperresolution rules}
\label{S4.1}

At first, we introduce some basic notions and notation concerning chains of order literals.
A chain of ${\mc L}$ is a sequence $\varepsilon_0\diamond_0 \varepsilon_1,\dots,\varepsilon_n\diamond_n \varepsilon_{n+1}$,
$\varepsilon_i\diamond_i \varepsilon_{i+1}\in \mi{OrdLit}_{\mc L}$. 
Let $\Xi=\varepsilon_0\diamond_0 \varepsilon_1,\dots,\varepsilon_n\diamond_n \varepsilon_{n+1}$ be a chain of ${\mc L}$.
$\varepsilon_0$ is the beginning element of $\Xi$ and $\varepsilon_{n+1}$ the ending element of $\Xi$.
$\varepsilon_0\, \Xi\, \varepsilon_{n+1}$ denotes $\Xi$ together with its beginning and ending element.
$\Xi$ is an equality chain iff, for all $i\leq n$, $\diamond_i=\geql$.
$\Xi$ is an increasing chain iff there exists $i^*\leq n$ such that $\diamond_{i^*}=\gle$.
$\Xi$ is a contradiction iff $\Xi$ is an increasing chain of the form 
$\varepsilon_0\, \Xi\, \gz$ or $\gu\, \Xi\, \varepsilon_{n+1}$ or $\varepsilon_0\, \Xi\, \varepsilon_0$.
Let $S\subseteq \mi{OrdCl}_{\mc L}$ be unit. 
$\Xi$ is a chain, an equality chain, an increasing chain, a contradiction of $S$ 
iff, for all $i\leq n$, $\varepsilon_i\diamond_i \varepsilon_{i+1}\in S$.

We assume that the reader is familiar with the standard notions and notation of substitutions.
In Appendix, Subsection~\ref{S10.1},
we introduce a few definitions and denotations; some of them are slightly different from the standard ones, but found to be more convenient.

We shall assume a countably infinite set of fresh function symbols $\tilde{\mbb{W}}=\{\tilde{w}_\mbbm{i} \,|\, \mbbm{i}\in \mbb{I}\}$ such that
$\tilde{\mbb{W}}\cap (\mi{Func}_{\mc L}\cup \{\tilde{f}_0\})=\emptyset$.
Let $S\subseteq \mi{OrdCl}_{\mc L}$ and ${\mc L}'$ be an expansion of ${\mc L}$.
We denote $\mi{GOrdCl}_{\mc L}=\{C \,|\, C\in \mi{OrdCl}_{\mc L}\ \text{\it is closed}\}$,
$\mi{GInst}_{{\mc L}'}(S)=\{C \,|\, C\in \mi{GOrdCl}_{{\mc L}'}\ \text{\it is an instance of}\ S\}$,
$\mi{ordtcons}(S)=\{c_1\gle c_2 \,|\, c_1, c_2\in \mi{tcons}(S)\cup \{\gz,\gu\}, \underline{c_1}<\underline{c_2}\}\subseteq \mi{GOrdCl}_{\mc L}$.
Let $x_1,\dots,x_n\in \mi{freevars}(S)$.
$S$ may be denoted as $S(x_1,\dots,x_n)$.
Let $t_1,\dots,t_n\in \mi{Term}_{\mc L}$ be closed.
By $S(t_1,\dots,t_n)\subseteq \mi{OrdCl}_{\mc L}$ or $S(x_1/t_1,\dots,x_n/t_n)\subseteq \mi{OrdCl}_{\mc L}$
we denote the instance of $S(x_1,\dots,x_n)$ built up by substituting $t_1,\dots,t_n$ for $x_1,\dots,x_n$
in the standard manner.

We now introduce an order hyperresolution calculus.
Particularly, we provide two variants of it: basic and general one.
The basic calculus operates over closed order clauses, whereas the general one over arbitrary order clauses.
In Theorem \ref{T3} (Refutational Soundness and Completeness), a completeness argument is firstly given for the basic variant, and subsequently,
for the general one by means of Lemma~\ref{le6} (Lifting Lemma).
Let $\kappa\geq 1$, ${\mc L}_{\kappa-1}$, ${\mc L}_\kappa$ be first-order languages;
$S_{\kappa-1}\subseteq \mi{GOrdCl}_{{\mc L}_{\kappa-1}}$, $S_\kappa\subseteq \mi{GOrdCl}_{{\mc L}_\kappa}$ for the basic variant;
$S_{\kappa-1}\subseteq \mi{OrdCl}_{{\mc L}_{\kappa-1}}$, $S_\kappa\subseteq \mi{OrdCl}_{{\mc L}_\kappa}$ for the general variant.
Order hyperresolution rules are defined with respect to $\kappa$, ${\mc L}_{\kappa-1}$, $S_{\kappa-1}$.
The first rule is the central order hyperresolution one with an obvious intuition.
\begin{alignat}{1}
\ctag{ceq4hr0}{({\it Basic order hyperresolution rule})} \\[1mm]
\notag
& \dfrac{l_0\vee C_0,\dots,l_n\vee C_n\in S_{\kappa-1}}
        {\displaystyle{\bigvee_{i=0}^n C_i\in S_\kappa}}; \\[1mm]
\notag
& l_0,\dots,l_n\ \text{\it is a contradiction}. 
\end{alignat}
We say that $\bigvee_{i=0}^n C_i$ is a basic order hyperresolvent of $l_0\vee C_0,\dots,l_n\vee C_n$.
The basic order hyperresolution rule can be generalised as follows:
\begin{alignat}{1}
\ctag{ceq4hr5}{({\it Order hyperresolution rule})} \\[1mm]
\notag
& \dfrac{\displaystyle{
         \bigvee_{j=0}^{k_0} \varepsilon_j^0\diamond_j^0 \upsilon_j^0\vee \bigvee_{j=1}^{m_0} l_j^0,...,\!
         \bigvee_{j=0}^{k_n} \varepsilon_j^n\diamond_j^n \upsilon_j^n\vee \bigvee_{j=1}^{m_n} l_j^n\in \mi{Vrnt}(S_{\kappa-1})}}
        {\displaystyle{\Big(\bigvee_{i=0}^n \bigvee_{j=1}^{m_i} l_j^i\Big)\theta\in S_\kappa}}; \\[1mm]
\notag
& \begin{array}{l}
  \text{\it for all}\ i<i'\leq n, \\
  \quad
  \mi{freevars}(\bigvee_{j=0}^{k_i} \varepsilon_j^i\diamond_j^i \upsilon_j^i\vee \bigvee_{j=1}^{m_i} l_j^i)\cap \mbox{} \\
  \quad
  \mi{freevars}(\bigvee_{j=0}^{k_{i'}} \varepsilon_j^{i'}\diamond_j^{i'} \upsilon_j^{i'}\vee \bigvee_{j=1}^{m_{i'}} l_j^{i'})=\emptyset, \\[1mm]
  \theta\in \mi{mgu}\Big(\bigvee_{j=0}^{k_0} \varepsilon_j^0\diamond_j^0 \upsilon_j^0,l_1^0,\dots,l_{m_0}^0,\dots, \\
  \phantom{\theta\in \mi{mgu}\Big(}
                         \bigvee_{j=0}^{k_n} \varepsilon_j^n\diamond_j^n \upsilon_j^n,l_1^n,\dots,l_{m_n}^n, \\   
  \phantom{\theta\in \mi{mgu}\Big(}
                         \{\upsilon_0^0,\varepsilon_0^1\},\dots,\{\upsilon_0^{n-1},\varepsilon_0^n\},\{a,b\}\Big), \\
  \mi{dom}(\theta)=\mi{freevars}\big(\{\varepsilon_j^i\diamond_j^i \upsilon_j^i \,|\, j\leq k_i, i\leq n\}, \\
  \phantom{\mi{dom}(\theta)=\mi{freevars}\big(} 
                                     \{l_j^i \,|\, 1\leq j\leq m_i, i\leq n\}\big), \\[1mm]
  a=\varepsilon_0^0, b=\gu,\ \text{\it or}\ a=\upsilon_0^n, b=\gz,\ \text{\it or}\ a=\varepsilon_0^0, b=\upsilon_0^n, \\
  \text{\it there exists}\ i^*\leq n\ \text{\it such that}\ \diamond_0^{i^*}=\gle.    
  \end{array}
\end{alignat}
$\big(\bigvee_{i=0}^n \bigvee_{j=1}^{m_i} l_j^i\big)\theta$ is an order hyperresolvent
of $\bigvee_{j=0}^{k_0} \varepsilon_j^0\diamond_j^0 \upsilon_j^0\vee \bigvee_{j=1}^{m_0} l_j^0,\dots,
    \bigvee_{j=0}^{k_n} \varepsilon_j^n\diamond_j^n \upsilon_j^n\vee \bigvee_{j=1}^{m_n} l_j^n$.
Other auxiliary basic and general order rules are defined in Tables \ref{tab4}--\ref{tab6}.
\begin{table}[t]
\caption{Auxiliary basic and general order trichotomy rules}\label{tab4}
\vspace{-6mm}
\centering
\begin{minipage}[t]{\linewidth}
\footnotesize
\begin{IEEEeqnarray*}{L}
\hline \hline \\[-6mm]
\end{IEEEeqnarray*}
\begin{alignat}{1}
\ctag{ceq4hr000}{({\it Basic order trichotomy rule})} \\[1mm]
\notag
& \dfrac{\begin{array}{l}
         a\in \mi{atoms}(S_{\kappa-1}), b\in \mi{tcons}(S_{\kappa-1})-\{\gz,\gu\}, \\
         \mi{qatoms}(S)=\emptyset
         \end{array}}
        {a\gle b\vee a\geql b\vee b\gle a\in S_\kappa}.
\end{alignat}
\begin{alignat}{1}
\ctag{ceq4hr00}{({\it Basic order trichotomy rule})} \\[1mm]
\notag
& \dfrac{\begin{array}{l}
         a, b\in \mi{atoms}(S_{\kappa-1})\cup (\mi{tcons}(S_{\kappa-1})-\{\gz,\gu\}), \{a,b\}\not\subseteq \overline{C}_{\mc L}, \\
         \mi{qatoms}(S)\neq \emptyset
         \end{array}}
        {a\gle b\vee a\geql b\vee b\gle a\in S_\kappa}.
\end{alignat}
$a\gle b\vee a\geql b\vee b\gle a$ is a basic order trichotomy resolvent of $a$ and $b$.
\begin{alignat}{1}
\ctag{ceq4hr555}{({\it Order trichotomy rule})} \\[1mm]
\notag
& \dfrac{\begin{array}{l}
         a\in \mi{atoms}(S_{\kappa-1}), b\in \mi{tcons}(S_{\kappa-1})-\{\gz,\gu\}, \\
         \mi{qatoms}(S)=\emptyset
         \end{array}}
        {a\gle b\vee a\geql b\vee b\gle a\in S_\kappa}.
\end{alignat}
\begin{alignat}{1}
\ctag{ceq4hr55}{({\it Order trichotomy rule})} \\[1mm]
\notag
& \dfrac{\begin{array}{l}
         a, b\in \mi{atoms}(\mi{Vrnt}(S_{\kappa-1}))\cup (\mi{tcons}(S_{\kappa-1})-\{\gz,\gu\}), \{a,b\}\not\subseteq \overline{C}_{\mc L}, \\
         \mi{qatoms}(S)\neq \emptyset
         \end{array}}
        {a\gle b\vee a\geql b\vee b\gle a\in S_\kappa}; \\[1mm]
\notag
& \mi{vars}(a)\cap \mi{vars}(b)=\emptyset.
\end{alignat}
$a\gle b\vee a\geql b\vee b\gle a$ is an order trichotomy resolvent of $a$ and $b$.
\begin{IEEEeqnarray*}{L}
\hline \hline \\[2mm]
\end{IEEEeqnarray*}
\end{minipage}
\vspace{-6mm}
\end{table}
The trichotomy rules induce some total order over derived atoms and truth constants
in both cases $\mi{qatoms}(S)=\emptyset$ and $\mi{qatoms}(S)\neq \emptyset$, 
which is exploited in the proof of the completeness of the calculus, Theorem \ref{T3}.
\begin{table}[t]  
\caption{Auxiliary basic and general order quantification rules}\label{tab5}
\vspace{-6mm}
\centering
\begin{minipage}[t]{\linewidth}
\footnotesize
\begin{IEEEeqnarray*}{L}
\hline \hline \\[-6mm] 
\end{IEEEeqnarray*}
\begin{alignat}{1}
\ctag{ceq4hr1}{({\it Basic order $\forall$-quantification rule})} \\[1mm]
\notag
& \dfrac{\forall x\, a\in \mi{qatoms}^\forall(S_{\kappa-1})}
        {\forall x\, a\gle a\gamma\vee \forall x\, a\geql a\gamma\in S_\kappa}; \\[1mm]
\notag
& t\in \mi{GTerm}_{{\mc L}_{\kappa-1}}, \gamma=x/t\in \mi{Subst}_{{\mc L}_{\kappa-1}}, \mi{dom}(\gamma)=\{x\}=\mi{vars}(a).
\end{alignat}
$\forall x\, a\gle a\gamma\vee \forall x\, a\geql a\gamma$ is a basic order $\forall$-quantification resolvent of $\forall x\, a$.
\begin{alignat}{1}
\ctag{ceq4hr2}{({\it Basic order $\exists$-quantification rule})} \\[1mm]
\notag
& \dfrac{\exists x\, a\in \mi{qatoms}^\exists(S_{\kappa-1})}
        {a\gamma\gle \exists x\, a\vee a\gamma\geql \exists x\, a\in S_\kappa}; \\[1mm]
\notag
& t\in \mi{GTerm}_{{\mc L}_{\kappa-1}}, \gamma=x/t\in \mi{Subst}_{{\mc L}_{\kappa-1}}, \mi{dom}(\gamma)=\{x\}=\mi{vars}(a).
\end{alignat}
$a\gamma\gle \exists x\, a\vee a\gamma\geql \exists x\, a$ is a basic order $\exists$-quantification resolvent of $\exists x\, a$.
\begin{alignat}{1}
\ctag{ceq4hr6}{({\it Order $\forall$-quantification rule})} \\[1mm]
\notag
& \dfrac{\forall x\, a\in \mi{qatoms}^\forall(S_{\kappa-1})}
        {\forall x\, a\gle a\vee \forall x\, a\geql a\in S_\kappa}.
\end{alignat}
$\forall x\, a\gle a\vee \forall x\, a\geql a$ is an order $\forall$-quantification resolvent of $\forall x\, a$.
\begin{alignat}{1}
\ctag{ceq4hr7}{({\it Order $\exists$-quantification rule})} \\[1mm]
\notag
& \dfrac{\exists x\, a\in \mi{qatoms}^\exists(S_{\kappa-1})}
        {a\gle \exists x\, a\vee a\geql \exists x\, a\in S_\kappa}.
\end{alignat}
$a\gle \exists x\, a\vee a\geql \exists x\, a$ is an order $\exists$-quantification resolvent of $\exists x\, a$.
\begin{IEEEeqnarray*}{L}
\hline \hline \\[2mm]
\end{IEEEeqnarray*}  
\end{minipage}
\vspace{-6mm}
\end{table}   
The quantification rules order a derived quantified atom and its instances with respect to the semantic interpretation of the quantifiers 
by the infimum operator $\fwedge$ and the supremum one $\fvee$ for the universal and existential quantifier, respectively.
\begin{table}[t]
\caption{Auxiliary basic and general order witnessing rules}\label{tab6}
\vspace{-6mm}
\centering
\begin{minipage}[t]{\linewidth}
\footnotesize
\begin{IEEEeqnarray*}{L}
\hline \hline \\[-6mm]   
\end{IEEEeqnarray*}      
\begin{alignat}{1}
\ctag{ceq4hr3}{({\it Basic order $\forall$-witnessing rule})} \\[1mm]
\notag
& \dfrac{\begin{array}{l}
         \forall x\, a\in \mi{qatoms}^\forall(S_{\kappa-1}), \\
         b\in \mi{atoms}(S_{\kappa-1})\cup \mi{tcons}(S_{\kappa-1})\cup \mi{qatoms}(S_{\kappa-1})
         \end{array}}
        {a\gamma\gle b\vee b\geql \forall x\, a\vee b\gle \forall x\, a\in S_\kappa}; \\[1mm]
\notag
& \begin{array}{l}
  \tilde{w}\in \tilde{\mbb{W}}-\mi{Func}_{{\mc L}_{\kappa-1}},
  \mi{ar}(\tilde{w})=|\mi{freetermseq}(\forall x\, a),\mi{freetermseq}(b)|, \\
  \gamma=x/\tilde{w}(\mi{freetermseq}(\forall x\, a),\mi{freetermseq}(b))\in \mi{Subst}_{{\mc L}_{\kappa-1}\cup \{\tilde{w}\}}, \\
  \mi{dom}(\gamma)=\{x\}=\mi{vars}(a).
  \end{array}
\end{alignat}
$a\gamma\gle b\vee b\geql \forall x\, a\vee b\gle \forall x\, a$ is a basic order $\forall$-witnessing resolvent of $\forall x\, a$ and $b$.
\begin{alignat}{1}
\ctag{ceq4hr4}{({\it Basic order $\exists$-witnessing rule})} \\[1mm]
\notag
& \dfrac{\begin{array}{l}
         \exists x\, a\in \mi{qatoms}^\exists(S_{\kappa-1}), \\
         b\in \mi{atoms}(S_{\kappa-1})\cup \mi{tcons}(S_{\kappa-1})\cup \mi{qatoms}(S_{\kappa-1})
         \end{array}}
        {b\gle a\gamma\vee \exists x\, a\geql b\vee \exists x\, a\gle b\in S_\kappa}; \\[1mm]
\notag
& \begin{array}{l}
  \tilde{w}\in \tilde{\mbb{W}}-\mi{Func}_{{\mc L}_{\kappa-1}},
  \mi{ar}(\tilde{w})=|\mi{freetermseq}(\exists x\, a),\mi{freetermseq}(b)|, \\
  \gamma=x/\tilde{w}(\mi{freetermseq}(\exists x\, a),\mi{freetermseq}(b))\in \mi{Subst}_{{\mc L}_{\kappa-1}\cup \{\tilde{w}\}}, \\
  \mi{dom}(\gamma)=\{x\}=\mi{vars}(a).
  \end{array}
\end{alignat}
$b\gle a\gamma\vee \exists x\, a\geql b\vee \exists x\, a\gle b$ is a basic order $\exists$-witnessing resolvent of $\exists x\, a$ and $b$.
\begin{alignat}{1}
\ctag{ceq4hr8}{({\it Order $\forall$-witnessing rule})} \\[1mm]
\notag
& \dfrac{\begin{array}{l}
         \forall x\, a\in \mi{qatoms}^\forall(\mi{Vrnt}(S_{\kappa-1})), \\
         b\in \mi{atoms}(\mi{Vrnt}(S_{\kappa-1}))\cup \mi{tcons}(S_{\kappa-1})\cup \mi{qatoms}(\mi{Vrnt}(S_{\kappa-1}))
         \end{array}}
        {a\gamma\gle b\vee b\geql \forall x\, a\vee b\gle \forall x\, a\in S_\kappa}; \\[1mm]
\notag
& \begin{array}{l}
  \mi{freevars}(\forall x\, a)\cap \mi{freevars}(b)=\emptyset, \\
  \tilde{w}\in \tilde{\mbb{W}}-\mi{Func}_{{\mc L}_{\kappa-1}},
  \mi{ar}(\tilde{w})=|\mi{freetermseq}(\forall x\, a),\mi{freetermseq}(b)|, \\
  \gamma=x/\tilde{w}(\mi{freetermseq}(\forall x\, a),\mi{freetermseq}(b))\cup \mi{id}_{\mc L}|_{\mi{freevars}(\forall x\, a)} \\
  \phantom{\gamma}
  \in \mi{Subst}_{{\mc L}_{\kappa-1}\cup \{\tilde{w}\}}, \\
  \mi{dom}(\gamma)=\{x\}\cup \mi{freevars}(\forall x\, a)=\mi{vars}(a).
  \end{array}
\end{alignat}
$a\gamma\gle b\vee b\geql \forall x\, a\vee b\gle \forall x\, a$ is an order $\forall$-witnessing resolvent of $\forall x\, a$ and $b$.
\begin{alignat}{1}
\ctag{ceq4hr9}{({\it Order $\exists$-witnessing rule})} \\[1mm]
\notag
& \dfrac{\begin{array}{l}
         \exists x\, a\in \mi{qatoms}^\exists(\mi{Vrnt}(S_{\kappa-1})), \\
         b\in \mi{atoms}(\mi{Vrnt}(S_{\kappa-1}))\cup \mi{tcons}(S_{\kappa-1})\cup \mi{qatoms}(\mi{Vrnt}(S_{\kappa-1}))
         \end{array}}
        {b\gle a\gamma\vee \exists x\, a\geql b\vee \exists x\, a\gle b\in S_\kappa}; \\[1mm]
\notag
& \begin{array}{l}
  \mi{freevars}(\exists x\, a)\cap \mi{freevars}(b)=\emptyset, \\
  \tilde{w}\in \tilde{\mbb{W}}-\mi{Func}_{{\mc L}_{\kappa-1}},
  \mi{ar}(\tilde{w})=|\mi{freetermseq}(\exists x\, a),\mi{freetermseq}(b)|, \\
  \gamma=x/\tilde{w}(\mi{freetermseq}(\exists x\, a),\mi{freetermseq}(b))\cup \mi{id}_{\mc L}|_{\mi{freevars}(\exists x\, a)} \\
  \phantom{\gamma}
  \in \mi{Subst}_{{\mc L}_{\kappa-1}\cup \{\tilde{w}\}}, \\
  \mi{dom}(\gamma)=\{x\}\cup \mi{freevars}(\exists x\, a)=\mi{vars}(a).
  \end{array}
\end{alignat}
$b\gle a\gamma\vee \exists x\, a\geql b\vee \exists x\, a\gle b$ is an order $\exists$-witnessing resolvent of $\exists x\, a$ and $b$.
\begin{IEEEeqnarray*}{L}
\hline \hline \\[2mm]
\end{IEEEeqnarray*}  
\end{minipage}
\vspace{-6mm}
\end{table}   
The witnessing rules introduce a witness of both the infimum operator $\fwedge$ and the supremum one $\fvee$
for both cases $\forall x\, a\gle b$ and $b\gle \exists x\, a$, respectively,
where $\forall x\, a$, $\exists x\, a$ is a derived quantified atom, and $b$ is another derived atom or truth constant or quantified atom.
The witness is formed as a term with a fresh main function symbol $\tilde{w}$
from $\mi{freetermseq}(\forall x\, a)$, $\mi{freetermseq}(\exists x\, a)$ and $\mi{freetermseq}(b)$.
The witnessing rules together with the quantification ones also induce some total order 
over the derived quantified atom $\forall x\, a$, $\exists x\, a$ and other derived atoms, truth constants, quantified atoms,
which is exploited in the proof of the completeness.

We put ${\mc L}_0={\mc L}$ and $S_0=\emptyset\subseteq \mi{GOrdCl}_{{\mc L}_0}, \mi{OrdCl}_{{\mc L}_0}$.
Let ${\mc D}=C_1,\dots,C_n$, 
$C_\kappa\in \mi{GOrdCl}_{{\mc L}\cup \tilde{\mbb{W}}}, \mi{OrdCl}_{{\mc L}\cup \tilde{\mbb{W}}}$, $n\geq 1$.
${\mc D}$ is a deduction of $C_n$ from $S$ by basic order hyperresolution
iff, for all $1\leq \kappa\leq n$, $C_\kappa\in \mi{ordtcons}(S)\cup \mi{GInst}_{{\mc L}_{\kappa-1}}(S)$, or 
there exist $1\leq j_k\leq \kappa-1$, $k=1,\dots,m$, such that $C_\kappa$ is
a basic order resolvent of $C_{j_1},\dots,C_{j_m}\in S_{\kappa-1}$ 
using Rule~(\cref{ceq4hr0}), (\cref{ceq4hr000}), (\cref{ceq4hr00}), (\cref{ceq4hr1}), (\cref{ceq4hr2}), (\cref{ceq4hr3}), (\cref{ceq4hr4})
with respect to ${\mc L}_{\kappa-1}$ and $S_{\kappa-1}$;
${\mc D}$ is a deduction of $C_n$ from $S$ by order hyperresolution 
iff, for all $1\leq \kappa\leq n$, $C_\kappa\in \mi{ordtcons}(S)\cup S$, or 
there exist $1\leq j_k\leq \kappa-1$, $k=1,\dots,m$, such that $C_\kappa$ is 
an order resolvent of $C_{j_1},\dots,C_{j_m}\in \mi{Vrnt}(S_{\kappa-1})$ 
using Rule~(\cref{ceq4hr5}), (\cref{ceq4hr555}), (\cref{ceq4hr55}), (\cref{ceq4hr6}), (\cref{ceq4hr7}), (\cref{ceq4hr8}), (\cref{ceq4hr9})
with respect to ${\mc L}_{\kappa-1}$ and $S_{\kappa-1}$;
${\mc L}_\kappa$ and $S_\kappa$ are defined by recursion on $1\leq \kappa\leq n$ as follows:
\begin{alignat*}{1}
{\mc L}_\kappa &= \left\{\begin{array}{ll}
                         {\mc L}_{\kappa-1}\cup \{\tilde{w}\} 
                         &\ \text{\it in case of Rule}\ \text{(\cref{ceq4hr3})}, \text{(\cref{ceq4hr4})}; 
                                                        \text{(\cref{ceq4hr8})}, \text{(\cref{ceq4hr9})}; \\[1mm]
                         {\mc L}_{\kappa-1} 
                         &\ \text{\it else};
                         \end{array}
                  \right. \\[2mm]
S_\kappa       &= S_{\kappa-1}\cup \{C_\kappa\}\subseteq \mi{GOrdCl}_{{\mc L}_\kappa}, \mi{OrdCl}_{{\mc L}_\kappa};
\end{alignat*}
for all $\kappa\leq n$, ${\mc L}_\kappa$ is an expansion of ${\mc L}$ and a reduct of ${\mc L}\cup \tilde{\mbb{W}}$,
$S_\kappa=\{C_1,\dots,C_\kappa\}\subseteq \mi{GOrdCl}_{{\mc L}_\kappa}, \mi{OrdCl}_{{\mc L}_\kappa}\subseteq 
                                          \mi{GOrdCl}_{{\mc L}\cup \tilde{\mbb{W}}}, \mi{OrdCl}_{{\mc L}\cup \tilde{\mbb{W}}}$.
${\mc D}$ is a refutation of $S$ iff $C_n=\square$.
We denote
\begin{alignat*}{1}
\mi{clo}^{{\mc B}{\mc H}}(S)        &= \{C \,|\, \text{\it there exists a deduction of}\ C\ \text{\it from}\ S \\
&\phantom{\mbox{}=\{C \,|\, \mbox{}} 
                                       \text{\it by basic order hyperresolution}\} \\ 
                                    &\subseteq 
                                       \mi{GOrdCl}_{{\mc L}\cup \tilde{\mbb{W}}}, \\[1mm]
\mi{clo}^{\mc H}(S)                 &= \{C \,|\, \text{\it there exists a deduction of}\ C\ \text{\it from}\ S \\ 
&\phantom{\mbox{}=\{C \,|\, \mbox{}}
                                       \text{\it by order hyperresolution}\}\subseteq 
                                       \mi{OrdCl}_{{\mc L}\cup \tilde{\mbb{W}}}.
\end{alignat*}

\subsection{Refutational soundness and completeness}
\label{S4.2}

We are in position to prove the refutational soundness and completeness of the order hyperresolution calculus.
At first, we list some auxiliary lemmata.

\begin{lemma}[Lifting Lemma]  
\label{le6}
Let ${\mc L}$ contain at least one constant symbol and $S\subseteq \mi{OrdCl}_{\mc L}$.
Let $C\in \mi{clo}^{{\mc B}{\mc H}}(S)$.
There exists $C^*\in \mi{clo}^{\mc H}(S)$ such that $C$ is an instance of $C^*$.
\end{lemma}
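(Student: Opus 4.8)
The plan is to establish the Lifting Lemma by induction on the length $n$ of the basic order hyperresolution deduction ${\mc D}=C_1,\dots,C_n$ witnessing $C\in \mi{clo}^{{\mc B}{\mc H}}(S)$, showing that every closed clause derivable by the basic calculus is a ground instance of some clause derivable by the general calculus. The inductive hypothesis will assert that for each $\kappa\leq n$ there exists $C_\kappa^*\in \mi{clo}^{\mc H}(S)$ such that $C_\kappa$ is an instance of $C_\kappa^*$; the final case $\kappa=n$ then yields the desired $C^*$. The assumption that ${\mc L}$ contains at least one constant symbol guarantees that $\mi{GTerm}_{\mc L}\neq \emptyset$, so that ground instances used by the basic rules always exist and the lifted general clauses remain instantiable.

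\emph{First} I would treat the base and non-inferential cases: if $C_\kappa\in \mi{ordtcons}(S)$, then the very same clause lies in $\mi{clo}^{\mc H}(S)$ (the general calculus admits $\mi{ordtcons}(S)$ verbatim) and is trivially an instance of itself; if $C_\kappa\in \mi{GInst}_{{\mc L}_{\kappa-1}}(S)$, then $C_\kappa$ is by definition a ground instance of some clause of $S$, and that clause of $S$ serves as $C_\kappa^*$. \emph{Then} I would handle the inferential step, where $C_\kappa$ is a basic resolvent of previously derived closed clauses $C_{j_1},\dots,C_{j_m}$ via one of the basic rules (\cref{ceq4hr0}), (\cref{ceq4hr000}), (\cref{ceq4hr00}), (\cref{ceq4hr1})--(\cref{ceq4hr4}). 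By the inductive hypothesis each premise $C_{j_k}$ is an instance, under some grounding substitution $\sigma_k$, of a general clause $C_{j_k}^*\in \mi{clo}^{\mc H}(S)$. After renaming these general clauses apart so their free variables are pairwise disjoint (as required by the side conditions of the general rules), the combined substitution $\sigma=\bigcup_k \sigma_k$ is well defined, and the closed premises are exactly the images of the renamed general premises under $\sigma$.

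The crux is to verify, rule by rule, that the corresponding \emph{general} rule applies to the lifted premises and produces a clause $C_\kappa^*$ of which $C_\kappa$ is an instance. For the central rule this is the standard unification argument: $\sigma$ unifies the literals that the basic rule matched syntactically at the ground level, so a most general unifier $\theta$ exists with $\sigma$ factoring through it as $\sigma=\theta\rho$ for some $\rho$; the general hyperresolvent $(\bigvee_{i,j} l_j^i)\theta$ then has the ground basic hyperresolvent as its $\rho$-instance, and the side condition requiring some $\diamond_0^{i^*}=\gle$ transfers directly from the ground contradiction condition. For the trichotomy rules one matches the atom/truth-constant membership conditions, noting the $\mi{Vrnt}$ and disjoint-variable provisos; for the quantification and witnessing rules one checks that the fresh function symbol $\tilde{w}\in \tilde{\mbb{W}}$ and the term $\gamma$ built from $\mi{freetermseq}(\cdot)$ lift correctly, with the general rule's identity component $\mi{id}_{\mc L}|_{\mi{freevars}(\cdot)}$ accounting for the free variables that the basic rule had already instantiated to ground terms.

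\textbf{The main obstacle} I expect lies in the witnessing rules (\cref{ceq4hr3}), (\cref{ceq4hr4}) lifting to (\cref{ceq4hr8}), (\cref{ceq4hr9}): one must ensure that the fresh Skolem-style function symbol $\tilde{w}$ and its argument sequence $\mi{freetermseq}(\forall x\, a),\mi{freetermseq}(b)$ behave consistently between the ground and general settings, since at the general level these arguments carry free variables that get instantiated by $\sigma$ to exactly the ground terms the basic rule used. The delicate point is that the same $\tilde{w}$ with the same arity must be reusable, so that the witness term in the lifted clause, once instantiated by $\rho$, reproduces the ground witness term of the basic deduction; this requires aligning the choice of $\tilde{w}\in \tilde{\mbb{W}}-\mi{Func}_{{\mc L}_{\kappa-1}}$ in both derivations and verifying that the induced language extensions ${\mc L}_\kappa$ match up along the induction. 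Once this coordination of fresh symbols is secured, composing the instance relations across the deduction completes the proof.
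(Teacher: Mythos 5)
Your proposal is correct and is precisely the "standard" lifting argument that the paper itself invokes (its proof of Lemma~\ref{le6} is only the remark "Technical, analogous to the standard one"): induction on the length of the basic deduction, trivial handling of the $\mi{ordtcons}(S)$ and $\mi{GInst}$ cases, and the mgu-factoring argument for each inference rule, with the coordination of the fresh witness symbols $\tilde{w}\in\tilde{\mbb{W}}$ being the only genuinely delicate point, which you correctly identify and resolve. No discrepancy with the paper's (unstated) approach.
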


\begin{proof}
Technical, analogous to the standard one.
\qed
\end{proof}

\begin{lemma}[Reduction Lemma]
\label{le7}
Let ${\mc L}$ contain at least one constant symbol and $S\subseteq \mi{OrdCl}_{\mc L}$.
Let $\{\bigvee_{j=0}^{k_i} \varepsilon_j^i\diamond_j^i \upsilon_j^i\vee C_i \,|\, i\leq n\}\subseteq \mi{clo}^{{\mc B}{\mc H}}(S)$ such that 
for all ${\mc S}\in {\mc S}\mi{el}(\{\{j \,|\, j\leq k_i\}_i \,|\, i\leq n\})$, 
there exists a contradiction of $\{\varepsilon_{{\mc S}(i)}^i\diamond_{{\mc S}(i)}^i \upsilon_{{\mc S}(i)}^i \,|\, i\leq n\}$.
There exists $\emptyset\neq I^*\subseteq \{i \,|\, i\leq n\}$ such that $\bigvee_{i\in I^*} C_i\in \mi{clo}^{{\mc B}{\mc H}}(S)$.
\end{lemma}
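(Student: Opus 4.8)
The plan is to argue by induction on the total number of head literals $M=\sum_{i\le n}(k_i+1)$, where I call $\{\varepsilon_j^i\diamond_j^i\upsilon_j^i\mid j\le k_i\}$ the head and $C_i$ the tail of the given clause $D_i=\bigvee_{j=0}^{k_i}\varepsilon_j^i\diamond_j^i\upsilon_j^i\vee C_i\in\mi{clo}^{{\mc B}{\mc H}}(S)$. The guiding intuition is semantic: if in some interpretation all tails $C_i$ were false, then, each $D_i$ being true, at least one head literal of each clause would be true; selecting one such true literal per clause defines a selector $\mc{S}$, and by hypothesis $\{\varepsilon_{\mc{S}(i)}^i\diamond_{\mc{S}(i)}^i\upsilon_{\mc{S}(i)}^i\mid i\le n\}$ then contains a contradiction, which is unsatisfiable -- absurd. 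Thus some sub\-disjunction of the tails is forced, and the induction will actually produce it using the basic order hyperresolution rule.

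In the base case $M=n+1$, every $k_i=0$, so each $D_i$ is a unit\-headed clause $\varepsilon_0^i\diamond_0^i\upsilon_0^i\vee C_i$ and there is a single selector. By hypothesis the set $\{\varepsilon_0^i\diamond_0^i\upsilon_0^i\mid i\le n\}$ contains a contradiction; passing to a minimal such contradiction chain $\Xi$ (minimality rules out repeated literals), I assign to each literal of $\Xi$ one clause of which it is the head literal, obtaining a nonempty $I^*\subseteq\{i\mid i\le n\}$. Applying Rule~(\cref{ceq4hr0}) to $\{\varepsilon_0^i\diamond_0^i\upsilon_0^i\vee C_i\mid i\in I^*\}$, whose head literals form $\Xi$, yields the basic order hyperresolvent $\bigvee_{i\in I^*}C_i\in\mi{clo}^{{\mc B}{\mc H}}(S)$, as required.

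In the inductive step $M>n+1$, some clause, say clause $0$, has $k_0\ge 1$; I isolate its last head literal $\ell=\varepsilon_{k_0}^0\diamond_{k_0}^0\upsilon_{k_0}^0$. First I re\-read the \emph{same} derivable clause $D_0$ as having the smaller head $\{\varepsilon_j^0\diamond_j^0\upsilon_j^0\mid j<k_0\}$ and the enlarged tail $\ell\vee C_0$; the selector hypothesis restricted to selectors avoiding $\ell$ is inherited, and $M$ dropped by one, so the inductive hypothesis returns a nonempty $I_A$ with $\bigvee_{i\in I_A}(\text{tail})_i\in\mi{clo}^{{\mc B}{\mc H}}(S)$. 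If $0\notin I_A$ this is already $\bigvee_{i\in I_A}C_i$ and we are done; otherwise it is $R_A=\ell\vee\bigvee_{i\in I_A}C_i$. Now I replace clause $0$ by $R_A$, viewed as a clause with singleton head $\{\ell\}$ and tail $\bigvee_{i\in I_A}C_i$, leaving $D_1,\dots,D_n$ unchanged. Every selector of this new family is forced to pick $\ell$, so its picked set is the image of an original selector that picks $\ell$ for clause $0$ and hence contains a contradiction; the family is derivable and its head count is $1+\sum_{i\ge1}(k_i+1)=M-k_0<M$, so the inductive hypothesis applies and returns $\bigvee_{i\in I'}(\text{tail})_i\in\mi{clo}^{{\mc B}{\mc H}}(S)$ for a nonempty $I'$. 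Unfolding the tail of $R_A$ turns this into a clause $\bigvee_{i\in I^*}C_i$ with nonempty $I^*\subseteq\{i\mid i\le n\}$, which is the claim.

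The principal obstacle is that the step must shrink a head while never leaving $\mi{clo}^{{\mc B}{\mc H}}(S)$: the naive move of deleting $\ell$ from clause $0$ produces the subclause $\bigvee_{j<k_0}\varepsilon_j^0\diamond_j^0\upsilon_j^0\vee C_0$, which need not be derivable. The device that circumvents this is to relocate $\ell$ from the head into the tail, so the premise handed to the inductive hypothesis is verbatim the given clause $D_0$, while the accounting parameter $M$ still decreases; the second recursive call then re\-inserts $\ell$ as the sole head literal of $R_A$. The remaining verification -- that each reduced family still meets the selector hypothesis -- reduces to the observation that selectors of the two reduced families are exactly the original selectors that respectively avoid or select $\ell$. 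The only routine side point is the base\-case passage to a minimal contradiction chain, which guarantees one distinct premise per chain literal so that Rule~(\cref{ceq4hr0}) is applicable.
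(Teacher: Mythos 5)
Your argument is correct, and it fills in a proof that the paper itself does not spell out: the paper's ``proof'' of Lemma~\ref{le7} is only the remark that it is technical and analogous to Proposition~2 of the cited earlier work. The induction you use -- on the excess head-literal count $M=\sum_{i\le n}(k_i+1)$, with the base case discharged by one application of Rule~(\cref{ceq4hr0}) along a minimal contradiction chain, and the inductive step made by relocating one head literal $\ell$ into the tail and making two recursive calls (one for selectors avoiding $\ell$, one for selectors forced to choose $\ell$) -- is exactly the classical excess-literal argument for semantic/hyperresolution completeness, which is almost certainly what the cited Proposition~2 carries out. Two small points worth making explicit if you write this up: (i) to apply Rule~(\cref{ceq4hr0}) to premises that are each separately in $\mi{clo}^{{\mc B}{\mc H}}(S)$ you need to merge their deductions into a single one, which requires renaming the fresh witnessing symbols $\tilde{w}_{\mbbm{i}}$ apart (the paper treats this as routine, cf.\ its unproved auxiliary statement~(\ref{eq7kk})); and (ii) in the step where you form $R_A=\ell\vee\bigvee_{i\in I_A}C_i$, the degenerate case $\ell\in\bigvee_{i\in I_A}C_i$ should be noted -- there $R_A$ already equals $\bigvee_{i\in I_A}C_i$ as a set and the conclusion is immediate with $I^*=I_A$, so nothing breaks.
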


\begin{proof}
Technical, analogous to the one of Proposition 2, \cite{Guller2009}.
\qed
\end{proof}

\begin{lemma}[Unit Lemma]
\label{le8}
Let ${\mc L}$ contain at least one constant symbol and $S\subseteq \mi{OrdCl}_{\mc L}$.
Let $\square\not\in \mi{clo}^{{\mc B}{\mc H}}(S)=
                    \{\bigvee_{j=0}^{k_\iota} \varepsilon_j^\iota\diamond_j^\iota \upsilon_j^\iota \,|\, \iota<\gamma\}$, $\gamma\leq \omega$.
There exists ${\mc S}^*\in {\mc S}\mi{el}(\{\{j \,|\, j\leq k_\iota\}_\iota \,|\, \iota<\gamma\})$ such that 
there does not exist a contradiction 
of $\{\varepsilon_{{\mc S}^*(\iota)}^\iota\diamond_{{\mc S}^*(\iota)}^\iota \upsilon_{{\mc S}^*(\iota)}^\iota \,|\, \iota<\gamma\}$.
\end{lemma}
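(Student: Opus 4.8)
The plan is to prove the contrapositive of a reduction step: I show that if \emph{no} selector yields a contradiction-free selection, then $\square\in \mi{clo}^{{\mc B}{\mc H}}(S)$, contrary to the hypothesis. Since $\mi{clo}^{{\mc B}{\mc H}}(S)\subseteq \mi{GOrdCl}_{{\mc L}\cup \tilde{\mbb{W}}}$, every clause $\bigvee_{j=0}^{k_\iota} \varepsilon_j^\iota\diamond_j^\iota \upsilon_j^\iota$ in the closure is a finite disjunction of \emph{ground} order literals, and a selector ${\mc S}$ merely chooses one disjunct $\varepsilon_{{\mc S}(\iota)}^\iota\diamond_{{\mc S}(\iota)}^\iota \upsilon_{{\mc S}(\iota)}^\iota$ from each clause. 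The decisive structural fact is that a contradiction is, by definition, a \emph{finite} increasing chain; hence whether a selection contains a contradiction depends on only finitely many of its members, and being contradiction-free is preserved under passing to a subselection.

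Using the enumeration $\langle C_\iota\mid \iota<\gamma\rangle$ supplied by the statement, I would organise the search for a good selector as a tree $T$ whose level-$n$ nodes are the partial selections over $C_0,\dots,C_{n-1}$ that contain no contradiction, ordered by restriction. By the fact above $T$ is a genuine tree, and it is finitely branching because each index set $\{j\mid j\leq k_\iota\}$ is finite. Assume toward a contradiction that no contradiction-free total selection exists. If $\gamma=\omega$, then $T$ has no infinite branch---such a branch would be a contradiction-free total selection, because any contradiction among its members, being finite, would already appear at some finite level---so K\"onig's Lemma forces $T$ to be finite; if $\gamma<\omega$, then $T$ has height at most $\gamma$ and is finite as well. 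In either case $T$ has an empty level $N\leq \gamma$, which means that \emph{every} partial selection over $C_0,\dots,C_{N-1}$ contains a contradiction.

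At this point I would invoke the Reduction Lemma (Lemma~\ref{le7}) applied to the finite subfamily $\{C_\iota\mid \iota<N\}\subseteq \mi{clo}^{{\mc B}{\mc H}}(S)$, reading each $C_\iota=\bigvee_{j=0}^{k_\iota}\varepsilon_j^\iota\diamond_j^\iota \upsilon_j^\iota$ as a clause of the shape required there with \emph{empty} residual part $C_i=\square$. Its hypothesis---that for every selector over these $N$ index sets the chosen literals contain a contradiction---is exactly the emptiness of level $N$, so the lemma delivers $\emptyset\neq I^*\subseteq\{0,\dots,N-1\}$ with $\bigvee_{i\in I^*}C_i=\square\in \mi{clo}^{{\mc B}{\mc H}}(S)$, contradicting $\square\notin \mi{clo}^{{\mc B}{\mc H}}(S)$. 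Hence a contradiction-free total selection exists and is the required ${\mc S}^*$. I expect the compactness step to be the main obstacle: one must verify that the finiteness of each contradiction is precisely what lets K\"onig's Lemma turn the absence of a global contradiction-free selection into an empty finite level, and that the Reduction Lemma is instantiated with empty residual clauses so that the resolvent it produces is literally $\square$ rather than a proper clause.
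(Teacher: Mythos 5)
Your proposal is correct and follows essentially the same route as the paper, which proves the Unit Lemma precisely as a consequence of K\"onig's Lemma applied to the finitely branching tree of contradiction-free partial selections together with the Reduction Lemma (Lemma~\ref{le7}) instantiated with empty residual clauses $C_i=\square$. The two points you flag as potential obstacles are both sound as you state them: finiteness of every contradiction (it is a finite chain) is exactly what makes an infinite branch yield a contradiction-free total selection, and the paper's convention $l_0\vee\cdots\vee l_n\vee C=\{l_0,\dots,l_n\}\cup C$ permits $C=\emptyset$, so the Reduction Lemma indeed returns $\square$ itself.
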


\begin{proof}
Technical, a straightforward consequence of K\"{o}nig's Lemma and Lemma \ref{le7}.
\qed
\end{proof}

\begin{theorem}[Refutational Soundness and Completeness]
\label{T3}
Let ${\mc L}$ contain at least one constant symbol; $\overline{C}_{\mc L}$ be finite;
$S\subseteq \mi{OrdCl}_{\mc L}$.
$\square\in \mi{clo}^{\mc H}(S)$ if and only if $S$ is unsatisfiable.
\end{theorem}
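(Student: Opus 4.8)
The plan is to prove the two directions of the biconditional separately, treating soundness as the routine direction and completeness as the substantial one. For \textbf{soundness}, I would show that each order hyperresolution rule preserves satisfiability: if $\mc I\models S$ and $C$ is any resolvent derived by Rule~(\cref{ceq4hr5}), (\cref{ceq4hr555}), (\cref{ceq4hr55}), (\cref{ceq4hr6}), (\cref{ceq4hr7}), (\cref{ceq4hr8}), or (\cref{ceq4hr9}), then $\mc I\models C$, and likewise $\mc I\models \mi{ordtcons}(S)$ automatically since the truth-constant order literals hold in every interpretation. The central case is the hyperresolution rule: under the contradiction hypothesis on the chain $\varepsilon_0^0\diamond_0^0\upsilon_0^0,\dots,\varepsilon_0^n\diamond_0^n\upsilon_0^n$ (with the endpoints pinned to $\gz$, $\gu$, or closed into a cycle and at least one strict link), the linear order of $\mbi G$ forces at least one of the ``side'' disjuncts $l_j^i$ to be satisfied. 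By induction along the deduction ${\mc D}$, $\square$ being derivable would force $\mc I\models\square$, which is impossible; hence $\square\in\mi{clo}^{\mc H}(S)$ implies $S$ unsatisfiable. Each rule check is a short semantic argument using the definitions of $\feql$, $\fle$, $\fvee$, $\fwedge$ together with properties~\eqref{eq0a}--\eqref{eq0jj}.

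For \textbf{completeness} I would argue the contrapositive: assume $\square\notin\mi{clo}^{\mc H}(S)$ and construct a model of $S$. First I would reduce to the basic (ground) calculus using Lemma~\ref{le6} (Lifting Lemma): since $\square$ is its own only instance, $\square\notin\mi{clo}^{\mc H}(S)$ yields $\square\notin\mi{clo}^{{\mc B}{\mc H}}(S)$. Then I invoke Lemma~\ref{le8} (Unit Lemma) to extract a selector ${\mc S}^*$ choosing one order literal $\varepsilon_{{\mc S}^*(\iota)}^\iota\diamond_{{\mc S}^*(\iota)}^\iota\upsilon_{{\mc S}^*(\iota)}^\iota$ from each clause of the ground closure so that the selected unit set contains no contradiction. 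The absence of a contradiction means the selected strict/equality literals generate no increasing cycle and no chain from $\gu$ down or up to $\gz$; in order-theoretic terms the selected literals are consistent with some linear preorder on the ground atoms, truth constants, and quantified atoms appearing in the closure.

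The heart of the proof is then to turn this contradiction-free selected unit theory into an interpretation $\mf A$ for ${\mc L}\cup\tilde{\mbb W}$. I would define a preorder on $\mi{atoms}(\mi{clo}^{{\mc B}{\mc H}}(S))\cup\overline C_{\mc L}$ by taking the reflexive-transitive closure of the selected $\geql$ and $\gle$ literals, collapsing $\geql$-classes and ordering them by $\gle$; the trichotomy rules~(\cref{ceq4hr000})--(\cref{ceq4hr55}) guarantee this preorder is total on the relevant atoms and truth constants, and the absence of a contradiction guarantees it embeds into $[0,1]$ with $\gz\mapsto 0$, $\gu\mapsto 1$, and each intermediate constant $\bar c$ mapped consistently to $\underline c$ (here finiteness of $\overline C_{\mc L}$ lets me reserve enough room in $[0,1]$ for the intervening atom-values). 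I would then set $p^{\mf A}$ on ground arguments to the position of the corresponding atom in this embedding. The quantification rules~(\cref{ceq4hr6}), (\cref{ceq4hr7}) ensure $\forall x\,a$ lies at or below every instance $a\gamma$ and $\exists x\,a$ at or above, while the witnessing rules~(\cref{ceq4hr8}), (\cref{ceq4hr9}) supply, via the fresh Skolem-style symbols $\tilde w$, instances that actually realise the infimum and supremum, so that $\|\forall x\,a\|^{\mf A}=\bigfwedge_u\|a\|^{\mf A}_{e[x/u]}$ and dually for $\exists$; this is exactly what makes the constructed interpretation of the quantified atoms agree with the G\"odel semantics. Finally I would verify $\mf A\models S$: every $C\in S$ reappears (through $\mi{GInst}$) in the ground closure, its selected literal is true in $\mf A$ by construction, hence $\mf A\models_e C$ for all $e$, and $\mf A|_{\mc L}$ witnesses the satisfiability of the original $S$. \textbf{The hard part} will be the simultaneous coherence of this construction: ensuring the embedding of the total preorder into $[0,1]$ respects all three constraints at once (the fixed truth-constant values, the strict inequalities among atoms, and the infimum/supremum equalities forced by the quantification and witnessing resolvents), and checking that the witnessing terms built with $\tilde w\in\tilde{\mbb W}$ are genuinely realised in the universum ${\mc U}_{\mf A}$ without introducing a hidden cycle. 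Managing the interaction between the witnessing rules and the selector — so that the chosen witnesses are consistent with the chosen selector ${\mc S}^*$ — is where the combinatorial subtlety of the argument lies.
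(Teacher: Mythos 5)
Your proposal follows essentially the same route as the paper: soundness by induction on the length of a deduction with a semantic check of each rule (the paper phrases this as the existence of an expansion $\mf A'$ of the model to ${\mc L}\cup \tilde{\mbb{W}}$ satisfying each derived clause, which is the precise way to handle the fresh witnessing symbols $\tilde{w}$ your sketch glosses over), and completeness by the contrapositive via the Lifting Lemma, the Unit Lemma, and the construction of a valuation of the contradiction-free selected unit theory into $[0,1]$ that is then turned into a Herbrand-style model. The paper realises the embedding you describe by a transfinite recursion assigning to each new atom either the value of an $\operatorname{\triangleq}$-equivalent element or the arithmetic mean of the supremum of its lower set and the infimum of its upper set, which is exactly the coherence issue you correctly flag as the hard part.
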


\begin{proof}
($\Longrightarrow$)
Let ${\mf A}$ be a model of $S$ for ${\mc L}$ and $C\in \mi{clo}^{\mc H}(S)\subseteq \mi{OrdCl}_{{\mc L}\cup \tilde{\mbb{W}}}$.
Then there exists an expansion ${\mf A}'$ of ${\mf A}$ to ${\mc L}\cup \tilde{\mbb{W}}$ such that ${\mf A}'\models C$.
The proof is by complete induction on the length of a deduction of $C$ from $S$ by order hyperresolution.
Let $\square\in \mi{clo}^{\mc H}(S)$ and $S$ be satisfiable.
Hence, there exists a model ${\mf A}$ of $S$ for ${\mc L}$;
there exists an expansion ${\mf A}'$ of ${\mf A}$ to ${\mc L}\cup \tilde{\mbb{W}}$ such that ${\mf A}'\models \square$, 
which is a contradiction;
$S$ is unsatisfiable.

($\Longleftarrow$)
Let $\square\not\in \mi{clo}^{\mc H}(S)$.
Then, by Lemma \ref{le6} for $\square$, $\square\not\in \mi{clo}^{{\mc B}{\mc H}}(S)$;
we have that ${\mc L}$, $\tilde{\mbb{W}}$ are countable;
$\mi{clo}^{{\mc B}{\mc H}}(S)$ is countable;
there exists $\gamma_1\leq \omega$ and
$\square\not\in \mi{clo}^{{\mc B}{\mc H}}(S)=
                \{\bigvee_{j=0}^{k_\iota} \varepsilon_j^\iota\diamond_j^\iota \upsilon_j^\iota \,|\, \iota<\gamma_1\}$;
by Lemma \ref{le8},
there exists ${\mc S}^*\in {\mc S}\mi{el}(\{\{j \,|\, j\leq k_\iota\}_\iota \,|\, \iota<\gamma_1\})$, and
there does not exist a contradiction 
of $\{\varepsilon_{{\mc S}^*(\iota)}^\iota\diamond_{{\mc S}^*(\iota)}^\iota \upsilon_{{\mc S}^*(\iota)}^\iota \,|\, \iota<\gamma_1\}$.
We put 
$\mbb{S}=\{\varepsilon_{{\mc S}^*(\iota)}^\iota\diamond_{{\mc S}^*(\iota)}^\iota \upsilon_{{\mc S}^*(\iota)}^\iota \,|\, \iota<\gamma_1\}\subseteq
         \mi{GOrdCl}_{{\mc L}\cup \tilde{\mbb{W}}}$.
We have that $\overline{C}_{\mc L}$ is finite.
Then $\mi{ordtcons}(S)\subseteq_{\mc F} \mi{clo}^{{\mc B}{\mc H}}(S)$ is unit;
$\mbb{S}\supseteq \mi{ordtcons}(S)$ is countable and unit, $\mi{tcons}(\mbb{S})\subseteq_{\mc F} \overline{C}_{\mc L}$;
there does not exist a contradiction of $\mbb{S}$.
We have that ${\mc L}$ contains at least one constant symbol.
Hence, there exists $\mi{cn}^*\in \mi{Func}_{\mc L}$ such that $\mi{ar}_{\mc L}(\mi{cn}^*)=0$.
We put $\tilde{\mbb{W}}^*=\mi{funcs}(\mi{clo}^{{\mc B}{\mc H}}(S))\cap \tilde{\mbb{W}}$,
$\tilde{\mbb{W}}^*\cap \mi{Func}_{\mc L}\subseteq \tilde{\mbb{W}}\cap (\mi{Func}_{\mc L}\cup \{\tilde{f}_0\})=\emptyset$,
\begin{alignat*}{1}
{\mc U}_{\mf A} &= \mi{GTerm}_{{\mc L}\cup \tilde{\mbb{W}}^*}, \mi{cn}^*\in {\mc U}_{\mf A}\neq \emptyset, \\
{\mc B}         &= \mi{atoms}(\mbb{S})\cup \mi{tcons}(\mbb{S})\cup \mi{qatoms}(\mbb{S}).
\end{alignat*}
Then $\mi{atoms}(\mbb{S})\cup \mi{qatoms}(\mbb{S})$ is countable;
there exist $\gamma_2\leq \omega$ and a sequence $\delta_2 : \gamma_2\longrightarrow \mi{atoms}(\mbb{S})\cup \mi{qatoms}(\mbb{S})$ 
of $\mi{atoms}(\mbb{S})\cup \mi{qatoms}(\mbb{S})\subseteq {\mc B}$.
Let $\varepsilon_1, \varepsilon_2\in {\mc B}$.
$\varepsilon_1\ceql \varepsilon_2$ iff there exists an equality chain $\varepsilon_1\, \Xi\, \varepsilon_2$ of $\mbb{S}$.
Note that $\ceql$ is a binary symmetric and transitive relation on ${\mc B}$.
$\varepsilon_1\cle \varepsilon_2$ iff there exists an increasing chain $\varepsilon_1\, \Xi\, \varepsilon_2$ of $\mbb{S}$.
Note that $\cle$ is a binary transitive relation on ${\mc B}$.
\begin{equation}
\label{eq7a}
\gz\nceql \gu, \gz\cle \gu, \gu\ncle \gz,\
\text{for all}\ \varepsilon\in {\mc B}, \varepsilon\ncle \gz, \gu\ncle \varepsilon, \varepsilon\ncle \varepsilon.
\end{equation}
The proof is straightforward; we have that there does not exist a contradiction of $\mbb{S}$.
Note that $\cle$ is also irreflexive and a partial, strict order on ${\mc B}$.

Let $\mi{tcons}(\mbb{S})\subseteq X\subseteq {\mc B}$.
A partial valuation ${\mc V}$ is a mapping ${\mc V} : X\longrightarrow [0,1]$ such that 
for all $c\in \mi{tcons}(\mbb{S})$, ${\mc V}(c)=\underline{c}$.
We denote $\mi{dom}({\mc V})=X$, $\mi{tcons}(\mbb{S})\subseteq \mi{dom}({\mc V})\subseteq {\mc B}$.
We define a partial valuation ${\mc V}_\alpha$ by recursion on $\alpha\leq \gamma_2$ as follows:
\begin{alignat*}{2}
& {\mc V}_0          & &= \{(c,\underline{c}) \,|\, c\in \mi{tcons}(\mbb{S})\}; \\[2mm]
& {\mc V}_\alpha     & &= {\mc V}_{\alpha-1}\cup \{(\delta_2(\alpha-1),\lambda_{\alpha-1})\} \\
& & & \hspace{19.8mm} 
                          (1\leq \alpha\leq \gamma_2\ \text{\it is a successor ordinal}), \\[2mm]
& & &
   \begin{alignedat}{2}
   & \mbb{E}_{\alpha-1} & &= \{{\mc V}_{\alpha-1}(a) \,|\, a\in \mi{dom}({\mc V}_{\alpha-1}), a\ceql \delta_2(\alpha-1)\}, \\[1mm]
   & \mbb{D}_{\alpha-1} & &= \{{\mc V}_{\alpha-1}(a) \,|\, a\in \mi{dom}({\mc V}_{\alpha-1}), a\cle \delta_2(\alpha-1)\}, \\[1mm]
   & \mbb{U}_{\alpha-1} & &= \{{\mc V}_{\alpha-1}(a) \,|\, a\in \mi{dom}({\mc V}_{\alpha-1}), \delta_2(\alpha-1)\cle a\}, \\[1mm]
   & \lambda_{\alpha-1} & &= 
     \left\{\begin{array}{ll}
            \dfrac{\bigfvee \mbb{D}_{\alpha-1}+\bigfwedge \mbb{U}_{\alpha-1}}{2}         
                                        &\ \ \text{\it if}\ \mbb{E}_{\alpha-1}=\emptyset, \\[1mm]
            \bigfvee \mbb{E}_{\alpha-1} &\ \ \text{\it else};
            \end{array}
     \right. 
   \end{alignedat} \\[2mm]
& {\mc V}_{\gamma_2} & &= \bigcup_{\alpha<\gamma_2} {\mc V}_\alpha \quad (\gamma_2\ \text{\it is a limit ordinal}). 
\end{alignat*}
Obviously, the valuation ${\mc V}_0$ valuates all the truth constants from $\mi{tcons}(\mbb{S})$ canonically.
In case of $1\leq \alpha\leq \gamma_2$ being a successor ordinal, 
we have to determine a value ${\mc V}_\alpha(\delta_2(\alpha-1))=\lambda_{\alpha-1}\in [0,1]$ so that
all constraints induced by the binary relations $\ceql$ and $\cle$ 
between elements already valuated from $\mi{dom}({\mc V}_{\alpha-1})$, on the one side,
and $\delta_2(\alpha-1)$, on the other side, are fulfilled.
Let $a\in \mi{dom}({\mc V}_{\alpha-1})$.
If $a\ceql \delta_2(\alpha-1)$, then this forces ${\mc V}_\alpha(\delta_2(\alpha-1))={\mc V}_{\alpha-1}(a)$;
analogously, if $a\cle \delta_2(\alpha-1)$, then ${\mc V}_{\alpha-1}(a)<{\mc V}_\alpha(\delta_2(\alpha-1))$;
if $\delta_2(\alpha-1)\cle a$, then ${\mc V}_\alpha(\delta_2(\alpha-1))<{\mc V}_{\alpha-1}(a)$.
To ensure this, we define sets $\mbb{E}_{\alpha-1}$, $\mbb{D}_{\alpha-1}$, $\mbb{U}_{\alpha-1}$ of values ${\mc V}_{\alpha-1}(a)$ 
for those $a\in \mi{dom}({\mc V}_{\alpha-1})$ which are equal to, less than, greater than $\delta_2(\alpha-1)$, respectively, 
under the binary relations $\ceql$ and $\cle$.
We get two cases.

$\mbb{E}_{\alpha-1}=\emptyset$. 
Hence, there is no $a\in \mi{dom}({\mc V}_{\alpha-1})$ equal to $\delta_2(\alpha-1)$.
Then $\lambda_{\alpha-1}$ is defined as the arithmetic mean of $\bigfvee \mbb{D}_{\alpha-1}$ and $\bigfwedge \mbb{U}_{\alpha-1}$.
It can be shown $\bigfvee \mbb{D}_{\alpha-1}<\bigfwedge \mbb{U}_{\alpha-1}$.
So, $\bigfvee \mbb{D}_{\alpha-1}<\lambda_{\alpha-1}<\bigfwedge \mbb{U}_{\alpha-1}$, and for all $a\in \mi{dom}({\mc V}_{\alpha-1})$,
${\mc V}_{\alpha-1}(a)<{\mc V}_\alpha(\delta_2(\alpha-1))$ if $a\cle \delta_2(\alpha-1)$;
${\mc V}_\alpha(\delta_2(\alpha-1))<{\mc V}_{\alpha-1}(a)$ if $\delta_2(\alpha-1)\cle a$.

$\mbb{E}_{\alpha-1}\neq \emptyset$.
Then $\lambda_{\alpha-1}=\bigfvee \mbb{E}_{\alpha-1}$.
It can be shown that for all ${\mc V}_{\alpha-1}(a_1), {\mc V}_{\alpha-1}(a_2)\in \mbb{E}_{\alpha-1}$, $a_1=a_2$ or $a_1\ceql a_2$,
${\mc V}_{\alpha-1}(a_1)={\mc V}_{\alpha-1}(a_2)$.

In case of $\gamma_2$ being a limit ordinal,
we define the valuation ${\mc V}_{\gamma_2}$ as the union of all the valuations ${\mc V}_\alpha$, $\alpha<\gamma_2$.
It can be shown $\mi{dom}({\mc V}_{\gamma_2})={\mc B}$.

\begin{alignat}{1}
\label{eq7b}
& \begin{minipage}[t]{\linewidth-10mm}
  For all $\alpha\leq \alpha'\leq \gamma_2$, ${\mc V}_\alpha$ is a partial valuation, 
  $\mi{dom}({\mc V}_\alpha)=\mi{tcons}(\mbb{S})\cup \delta_2[\alpha]$, ${\mc V}_\alpha\subseteq {\mc V}_{\alpha'}$.
  \end{minipage}
\end{alignat}   
The proof is by induction on $\alpha\leq \gamma_2$.

We list some auxiliary statements without proofs:
\begin{alignat}{1}
\label{eq7h}
& \begin{minipage}[t]{\linewidth-10mm}
  If $\mi{qatoms}(S)=\emptyset$, then $\mi{qatoms}(\mi{clo}^{{\mc B}{\mc H}}(S))=\emptyset$.
  \end{minipage}
\end{alignat}
\begin{equation}
\label{eq7hh}
\mi{tcons}(\mi{clo}^{{\mc B}{\mc H}}(S))=\mi{tcons}(\mbb{S})=\mi{tcons}(S)\cup \{\gz,\gu\}.
\end{equation}   
\begin{alignat}{1}
\label{eq7j}
& \begin{minipage}[t]{\linewidth-10mm}
  For all
  $a, b\in \mi{atoms}(\mi{clo}^{{\mc B}{\mc H}}(S))\cup \mi{tcons}(\mi{clo}^{{\mc B}{\mc H}}(S))\cup \mi{qatoms}(\mi{clo}^{{\mc B}{\mc H}}(S))$,
  there exist a deduction $C_1,\dots,C_n$, $n\geq 1$, from $S$ by basic order hyperresolution,
  associated ${\mc L}_n$, $S_n\subseteq \mi{GOrdCl}_{{\mc L}_n}$ such that 
  $a, b\in \mi{atoms}(S_n)\cup \mi{tcons}(S_n)\cup \mi{qatoms}(S_n)$.
  \end{minipage}
\end{alignat}
\begin{alignat}{1}
\label{eq7kk}
& \begin{minipage}[t]{\linewidth-10mm}
  For all 
  $\emptyset\neq A\subseteq_{\mc F} 
   \mi{atoms}(\mi{clo}^{{\mc B}{\mc H}}(S))\cup \mi{tcons}(\mi{clo}^{{\mc B}{\mc H}}(S))\cup \mi{qatoms}(\mi{clo}^{{\mc B}{\mc H}}(S))$,
  there exist a deduction $C_1,\dots,C_n$, $n\geq 1$, from $S$ by basic order hyperresolution,
  associated ${\mc L}_n$, $S_n\subseteq \mi{GOrdCl}_{{\mc L}_n}$ such that
  $A\subseteq \mi{atoms}(S_n)\cup \mi{tcons}(S_n)\cup \mi{qatoms}(S_n)$.
  \end{minipage}
\end{alignat}
\begin{alignat}{1}
\label{eq7gg}
& \begin{minipage}[t]{\linewidth-10mm}
  For all $a\in \mi{tcons}(\mbb{S})-\{\gz,\gu\}$, $b\in \mi{atoms}(\mbb{S})\cup \mi{qatoms}(\mbb{S})$, 
  either $a\cle b$ or $a\ceql b$ or $b\cle a$.
  \end{minipage}
\end{alignat}   
\begin{alignat}{1}
\label{eq7l}
& \begin{minipage}[t]{\linewidth-10mm}
  Let $\mi{qatoms}(S)\neq \emptyset$.
  For all $a, b\in {\mc B}-\{\gz,\gu\}$, either $a\cle b$ or ($a=b$ or $a\ceql b$) or $b\cle a$.
  \end{minipage}
\end{alignat}
\begin{equation}
\label{eq7c}
\begin{array}[t]{l}
\text{For all}\ \alpha\leq \gamma_2,\ \text{for all}\ a, b\in \mi{dom}({\mc V}_\alpha), \\
\quad \text{if}\ a\ceql b,\ \text{then}\ {\mc V}_\alpha(a)={\mc V}_\alpha(b); \\
\quad \text{if}\ a\cle b,\ \text{then}\ {\mc V}_\alpha(a)<{\mc V}_\alpha(b); \\
\quad \text{if}\ {\mc V}_\alpha(a)=0,\ \text{then}\ a=\gz\ \text{or}\ a\ceql \gz; \\
\quad \text{if}\ {\mc V}_\alpha(a)=1,\ \text{then}\ a=\gu\ \text{or}\ a\ceql \gu.
\end{array} 
\end{equation}    
The proof is by induction on $\alpha\leq \gamma_2$.

We put ${\mc V}={\mc V}_{\gamma_2}$, 
$\mi{dom}({\mc V})\overset{\text{(\ref{eq7b})}}{=\!\!=} \mi{tcons}(\mbb{S})\cup \delta[\gamma_2]=
 \mi{tcons}(\mbb{S})\cup \mi{atoms}(\mbb{S})\cup \mi{qatoms}(\mbb{S})={\mc B}$.
We further list some other auxiliary statements without proofs:
\begin{equation}
\label{eq7k}
\begin{array}[t]{l}
\text{For all}\ a, b\in {\mc B}, \\
\quad \text{if}\ a\ceql b,\ \text{then}\ {\mc V}(a)={\mc V}(b); \\
\quad \text{if}\ a\cle b,\ \text{then}\ {\mc V}(a)<{\mc V}(b).
\end{array}
\end{equation}
\begin{alignat}{1}
\label{eq7i}
& \begin{minipage}[t]{\linewidth-10mm}
  For all $Q x\, a\in \mi{qatoms}(\mi{clo}^{{\mc B}{\mc H}}(S))$ and $u\in {\mc U}_{\mf A}$,
  $a(x/u)\in \mi{atoms}(\mi{clo}^{{\mc B}{\mc H}}(S))$.
  \end{minipage}
\end{alignat}
\begin{equation}
\label{eq7m}
\begin{array}[t]{l}
\text{For all}\ a\in {\mc B}, \\
\quad \text{if}\ a=\forall x b,\ \text{then}\ {\mc V}(a)=\bigfwedge_{u\in {\mc U}_{\mf A}} {\mc V}(b(x/u)); \\
\quad \text{if}\ a=\exists x b,\ \text{then}\ {\mc V}(a)=\bigfvee_{u\in {\mc U}_{\mf A}} {\mc V}(b(x/u)).
\end{array} 
\end{equation}
We put
\begin{alignat*}{1}
f^{\mf A}(u_1,\dots,u_n) &= f(u_1,\dots,u_n), \\[1mm]
                            &\phantom{\mbox{}=\mbox{}}f\in \mi{Func}_{{\mc L}\cup \tilde{\mbb{W}}^*}, u_i\in {\mc U}_{\mf A}; \\[2mm]
p^{\mf A}(u_1,\dots,u_n) &= \left\{\begin{array}{ll}
                                   {\mc V}(p(u_1,\dots,u_n)) &\ \text{\it if}\ p(u_1,\dots,u_n)\in \\
                                                             &\ \phantom{\text{\it if}\ \mbox{}}\mi{atoms}(\mbb{S}), \\[1mm]
                                   0                         &\ \text{\it else},
                                   \end{array}
                            \right. \\[1mm]
                            &\phantom{\mbox{}=\mbox{}}p\in \mi{Pred}_{{\mc L}\cup \tilde{\mbb{W}}^*}, u_i\in {\mc U}_{\mf A};
\end{alignat*}
\begin{alignat*}{1}
& {\mf A}=\big({\mc U}_{\mf A},\{f^{\mf A} \,|\, f\in \mi{Func}_{{\mc L}\cup \tilde{\mbb{W}}^*}\},
                               \{p^{\mf A} \,|\, p\in \mi{Pred}_{{\mc L}\cup \tilde{\mbb{W}}^*}\}\big), \\
&\phantom{{\mf A}=\mbox{}}\hspace{0.38mm}
                               \text{an interpretation for}\ {\mc L}\cup \tilde{\mbb{W}}^*.
\end{alignat*}
\begin{alignat}{1}
\label{eq7p}
& \begin{minipage}[t]{\linewidth-10mm}
  For all $C\in S$ and $e\in {\mc S}_{\mf A}$, $C(e|_{\mi{freevars}(C)})\in \mi{clo}^{{\mc B}{\mc H}}(S)$.
  \end{minipage}
\end{alignat}
The proof is straightforward using (\ref{eq7kk}); 
$\mi{funcs}(C(e|_{\mi{freevars}(C)}))\cap \tilde{\mbb{W}}^*\subseteq_{\mc F} \mi{funcs}(\mi{clo}^{{\mc B}{\mc H}}(S))$,
there exists $A\subseteq_{\mc F} \mi{atoms}(\mi{clo}^{{\mc B}{\mc H}}(S))\cup \mi{qatoms}(\mi{clo}^{{\mc B}{\mc H}}(S))$ and       
$\mi{funcs}(A)\supseteq \mi{funcs}(C(e|_{\mi{freevars}(C)}))\cap \tilde{\mbb{W}}^*$.

It is straightforward to prove that for all $a\in {\mc B}$ and $e\in {\mc S}_{\mf A}$, $\|a\|_e^{\mf A}={\mc V}(a)$;
in case of $a\in \mi{qatoms}(\mbb{S})$, we use (\ref{eq7m}).

Let $\varepsilon_1\geql \varepsilon_2\in \mbb{S}$ and $e\in {\mc S}_{\mf A}$.
Then $\varepsilon_1, \varepsilon_2\in {\mc B}$, $\varepsilon_1\ceql \varepsilon_2$, 
by (\ref{eq7k}) for $\varepsilon_1$, $\varepsilon_2$, ${\mc V}(\varepsilon_1)={\mc V}(\varepsilon_2)$,
$\|\varepsilon_1\geql \varepsilon_2\|_e^{\mf A}=
 \|\varepsilon_1\|_e^{\mf A}\feql \|\varepsilon_2\|_e^{\mf A}=
 {\mc V}(\varepsilon_1)\feql {\mc V}(\varepsilon_2)=1$.

Let $\varepsilon_1\gle \varepsilon_2\in \mbb{S}$.
Then $\varepsilon_1, \varepsilon_2\in {\mc B}$, $\varepsilon_1\cle \varepsilon_2$, 
by (\ref{eq7k}) for $\varepsilon_1$, $\varepsilon_2$, ${\mc V}(\varepsilon_1)<{\mc V}(\varepsilon_2)$,
$\|\varepsilon_1\gle \varepsilon_2\|_e^{\mf A}= 
 \|\varepsilon_1\|_e^{\mf A}\fle \|\varepsilon_2\|_e^{\mf A}=   
 {\mc V}(\varepsilon_1)\fle {\mc V}(\varepsilon_2)=1$.

So, for all $l\in \mbb{S}$ and $e\in {\mc S}_{\mf A}$, 
for both cases $l=\varepsilon_1\geql \varepsilon_2$ and $l=\varepsilon_1\gle \varepsilon_2$, $\|l\|_e^{\mf A}=1$.

Let $C\in S\subseteq \mi{OrdCl}_{\mc L}$ and $e\in {\mc S}_{\mf A}$.
Then $e : \mi{Var}_{\mc L}\longrightarrow {\mc U}_{\mf A}=\mi{GTerm}_{{\mc L}\cup \tilde{\mbb{W}}^*}$,
${\mc L}\cup \tilde{\mbb{W}}^*$ is an expansion of ${\mc L}$;
by (\ref{eq7p}), $C(e|_{\mi{freevars}(C)})\in \mi{clo}^{{\mc B}{\mc H}}(S)$,
there exists $l^*\in C(e|_{\mi{freevars}(C)})$, and $l^*\in \mbb{S}$, $\|l^*\|_e^{\mf A}=1$;
there exists $l^{**}\in C$, and $l^{**}\in \mi{OrdLit}_{\mc L}\subseteq \mi{OrdLit}_{{\mc L}\cup \tilde{\mbb{W}}^*}$,
$l^{**}(e|_{\mi{freevars}(l^{**})})=l^*$;
for all $t\in \mi{Term}_{{\mc L}\cup \tilde{\mbb{W}}^*}$, 
$a\in \mi{Atom}_{{\mc L}\cup \tilde{\mbb{W}}^*}\cup \overline{C}_{\mc L}\cup \mi{QAtom}_{{\mc L}\cup \tilde{\mbb{W}}^*}$,
$l\in \mi{OrdLit}_{{\mc L}\cup \tilde{\mbb{W}}^*}$,
$\|t\|_e^{\mf A}=t(e|_{\mi{vars}(t)})=\|t(e|_{\mi{vars}(t)})\|_e^{\mf A}$, $\|a\|_e^{\mf A}=\|a(e|_{\mi{freevars}(a)})\|_e^{\mf A}$,
$\|l\|_e^{\mf A}=\|l(e|_{\mi{freevars}(l)})\|_e^{\mf A}$; 
the proof is by induction on $t$ and by definition;
$\|l^{**}\|_e^{\mf A}=                                                                                                             \linebreak[4]
 \|l^{**}(e|_{\mi{freevars}(l^{**})})\|_e^{\mf A}=\|l^*\|_e^{\mf A}=1$;
${\mf A}\models_e C$;
for all $C\in S$ and $e\in {\mc S}_{\mf A}$, ${\mf A}\models_e C$;
${\mf A}\models S$, ${\mf A}|_{\mc L}\models S$; 
$S$ is satisfiable.
\qed
\end{proof}

The deduction problem of a formula from a theory can be solved as follows:

\begin{corollary}
\label{cor2}
Let ${\mc L}$ contain at least one constant symbol, and $\overline{C}_{\mc L}$ be finite.
Let $n_0\in \mbb{N}$, $\phi\in \mi{Form}_{\mc L}$, $T\subseteq \mi{Form}_{\mc L}$.
There exist $J_T^\phi\subseteq \{(i,j) \,|\, i\geq n_0\}\subseteq \mbb{I}$ and
$S_T^\phi\subseteq \mi{OrdCl}_{{\mc L}\cup \{\tilde{p}_\mbbm{j} \,|\, \mbbm{j}\in J_T^\phi\}}$ such that
$T\models \phi$ if and only if $\square\in \mi{clo}^{\mc H}(S_T^\phi)$.
\end{corollary}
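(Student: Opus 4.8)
The plan is to obtain the required $J_T^\phi$ and $S_T^\phi$ directly from Theorem~\ref{T1} and then chain its satisfiability characterisation with the refutational soundness and completeness established in Theorem~\ref{T3}. First I would apply Theorem~\ref{T1} to the given $n_0$, $\phi$, $T$, which yields $J_T^\phi\subseteq \{(i,j) \,|\, i\geq n_0\}\subseteq \mbb{I}$ together with $S_T^\phi\subseteq \mi{OrdCl}_{{\mc L}'}$, where I abbreviate ${\mc L}'={\mc L}\cup \{\tilde{p}_\mbbm{j} \,|\, \mbbm{j}\in J_T^\phi\}$. By part~(ii) of that theorem, $T\models \phi$ holds if and only if $S_T^\phi$ is unsatisfiable.

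The second step is to apply Theorem~\ref{T3} to the order clausal theory $S_T^\phi$ regarded over the language ${\mc L}'$, which gives $\square\in \mi{clo}^{\mc H}(S_T^\phi)$ if and only if $S_T^\phi$ is unsatisfiable. Composing the two equivalences then produces $T\models \phi$ if and only if $\square\in \mi{clo}^{\mc H}(S_T^\phi)$, exactly the assertion of the corollary, with the very same $J_T^\phi$ and $S_T^\phi$ delivered in the first step.

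The one point deserving care --- and the only genuine obstacle --- is that the hypotheses of Theorem~\ref{T3} must be checked for the expanded language ${\mc L}'$ rather than for ${\mc L}$ itself, since $S_T^\phi$ lives in $\mi{OrdCl}_{{\mc L}'}$ and not in $\mi{OrdCl}_{\mc L}$. Here I would observe that ${\mc L}'$ is obtained from ${\mc L}$ by adjoining only the fresh predicate symbols $\tilde{p}_\mbbm{j}$, $\mbbm{j}\in J_T^\phi$; this expansion neither removes nor modifies any function symbol or truth constant of ${\mc L}$. Consequently ${\mc L}'$ still contains the constant symbol guaranteed for ${\mc L}$, and $\overline{C}_{{\mc L}'}=\overline{C}_{\mc L}$ remains finite by the standing assumption of the corollary. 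With both hypotheses thus transferred to ${\mc L}'$ and $S_T^\phi\subseteq \mi{OrdCl}_{{\mc L}'}$, Theorem~\ref{T3} applies verbatim and the chain of equivalences closes.
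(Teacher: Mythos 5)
Your proposal is correct and follows essentially the same route as the paper: invoke Theorem~\ref{T1}(ii) to reduce $T\models \phi$ to unsatisfiability of $S_T^\phi$, then apply Theorem~\ref{T3} over the expanded language ${\mc L}\cup \{\tilde{p}_\mbbm{j} \,|\, \mbbm{j}\in J_T^\phi\}$, noting as the paper does that this expansion still contains a constant symbol and that $\overline{C}_{\mc L}$ remains finite. Your explicit justification of why the hypotheses of Theorem~\ref{T3} transfer to the expanded language is slightly more detailed than the paper's, but the argument is the same.
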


\begin{proof}
By Theorem \ref{T1}, there exist 
\begin{equation*}
J_T^\phi\subseteq \{(i,j) \,|\, i\geq n_0\}\subseteq \mbb{I},
S_T^\phi\subseteq \mi{OrdCl}_{{\mc L}\cup \{\tilde{p}_\mbbm{j} \,|\, \mbbm{j}\in J_T^\phi\}},
\end{equation*}
and (ii) of Theorem \ref{T1} holds;
we have by Theorem \ref{T1}(ii) that $T\models \phi$ if and only if $S_T^\phi$ is unsatisfiable;
${\mc L}\cup \{\tilde{p}_\mbbm{j} \,|\, \mbbm{j}\in J_T^\phi\}$ contains at least one constant symbol;
we have that $\overline{C}_{\mc L}$ is finite;
by Theorem~\ref{T3} for ${\mc L}\cup \{\tilde{p}_\mbbm{j} \,|\, \mbbm{j}\in J_T^\phi\}$, $S_T^\phi$, 
$S_T^\phi$ is unsatisfiable if and only if $\square\in \mi{clo}^{\mc H}(S_T^\phi)$;
$T\models \phi$ if and only if $\square\in \mi{clo}^{\mc H}(S_T^\phi)$.
\qed
\end{proof}

\section{Multi-step fuzzy inference via hyperresolution}
\label{S5}

In \cite{Guller2023c}, Section IV, we have shown some implementation of the Mamdani-Assilian type of fuzzy rules and inference in G\"{o}del logic.
To retain this section self-contained, we summarise the necessary material from \cite{Guller2023c} and bring further considerations.

Let $\mbb{U}$ be a non-empty set.
We call $\mbb{U}$ the universum.
A fuzzy set $A$ over $\mbb{U}$ is a mapping $A : \mbb{U}\longrightarrow [0,1]$.
We denote the set of all fuzzy sets over $\mbb{U}$ as ${\mc F}_\mbb{U}$.
Let $c\in [0,1]$ and $A_1, A_2\in {\mc F}_\mbb{U}$.
We define the height of $A_1$ as $\mi{height}(A_1)=\bigfvee_{u\in \mbb{U}} A_1(u)\in [0,1]$;
the cut $\mi{cut}(c,A_1)\in {\mc F}_\mbb{U}$ of $A_1$ by $c$ as 
$\mi{cut}(c,A_1) : \mbb{U}\longrightarrow [0,1],\ \mi{cut}(c,A_1)(u)=\mi{min}(c,A_1(u))$;
the union $A_1\cup A_2\in {\mc F}_\mbb{U}$ of $A_1$ and $A_2$ as 
$A_1\cup A_2 : \mbb{U}\longrightarrow [0,1],\ A_1\cup A_2(u)=\mi{max}(A_1(u),A_2(u))$;
the intersection $A_1\cap A_2\in {\mc F}_\mbb{U}$ of $A_1$ and $A_2$ as 
$A_1\cap A_2 : \mbb{U}\longrightarrow [0,1],\ A_1\cap A_2(u)=\mi{min}(A_1(u),A_2(u))$.
Let $\emptyset\neq \mbb{A}\subseteq_{\mc F} {\mc F}_\mbb{U}$.
Let $\mbb{X}$ be a non-empty finite set of fuzzy variables, having values from ${\mc F}_\mbb{U}$.
A fuzzy rule $r$ of the Mamdani-Assilian type is an expression of the form
$\mib{IF}\, X_0\, \mi{is}\, A_0\, \mi{and}\, \dots\, \mi{and}\, X_n\, \mi{is}\, A_n\, \mib{THEN}\, X\, \mi{is}\, A$,
$X_i, X\in \mbb{X}$, $A_i, A\in \mbb{A}$ \cite{MAAS75,Mam76}.
We say that $X_i$ are input fuzzy variables, whereas $X$ is the output fuzzy variable.
We denote $\mi{in}(r)=\{X_i \,|\, i\leq n\}\subseteq \mbb{X}$, $\mi{in}(r)\neq \emptyset$, and $\mi{out}(r)=X\in \mbb{X}$.
A fuzzy rule base is a non-empty finite set of fuzzy rules.
A fuzzy variable assignment is a mapping $\mbb{X}\longrightarrow {\mc F}_\mbb{U}$.
We denote the set of all fuzzy variable assignments as ${\mc S}_\mbb{U}$.
Let $e\in {\mc S}_\mbb{U}$. 
We define the value of $X$ with respect to $e$ and $r$ as 
$\|X\|_e^r=\mi{cut}(\bigfwedge_{i=0}^n \mi{height}(e(X_i)\cap A_i),A)\in {\mc F}_\mbb{U}$.
Let $B$ be a fuzzy rule base and $X\in \mbb{X}$.
We define the value of $X$ with respect to $e$ and $B$ as $\|X\|_e^B=\bigcup \{\|X\|_e^r \,|\, r\in B, \mi{out}(r)=X\}\in {\mc F}_\mbb{U}$.
Let ${\mc D}=e_0,\dots,e_\eta$, $e_\kappa\in {\mc S}_\mbb{U}$.
${\mc D}$ is a fuzzy derivation of $e_\eta$ from $e_0$ using $B$ iff, 
for all $1\leq \kappa\leq \eta$, $e_\kappa=\{(X,\|X\|_{e_{\kappa-1}}^B) \,|\, X\in \mbb{X}\}$.

Fuzzy rules can be translated to formulae of G\"{o}del logic on a reasonable assumption that the universum $\mbb{U}$ is countable.
At first, we represent natural and rational numbers as respective numerals.
We shall assume a fresh constant symbol $\tilde{z}$, a fresh unary function symbol $\tilde{s}$, and 
two fresh binary function symbols $\mi{frac}$, $\mi{-frac}$.
We denote $\tilde{\mbb{Z}}=\{\tilde{z},\tilde{s},\mi{frac},\mi{-frac}\}$.
Natural and rational numerals are defined as follows.
Let $t\in \mi{GTerm}_{\tilde{\mbb{Z}}}$.
$t$ is a natural numeral iff $t=\tilde{s}^n(\tilde{z})$.
$t$ is a rational numeral iff either $t=\mi{frac}(\tilde{s}^m(\tilde{z}),\tilde{s}^n(\tilde{z}))$, $n>0$, or
$t=\mi{-frac}(\tilde{s}^m(\tilde{z}),\tilde{s}^n(\tilde{z}))$, $m, n>0$.
We shall assume a set of four fresh unary predicate symbols $\tilde{\mbb{D}}=\{\mi{nat},\mi{rat},\mi{time},\mi{uni}\}$.
These symbols will be used for axiomatisation of certain domain properties of natural and rational numbers, of time and the universum $\mbb{U}$.
We shall assume a non-empty finite set of fresh unary predicate symbols 
$\tilde{\mbb{G}}=\{\tilde{G}_A \,|\, A\in \mbb{A}\}$ such that $\tilde{\mbb{G}}\cap \tilde{\mbb{D}}=\emptyset$.
These symbols will be used for axiomatisation of fuzzy sets appearing in fuzzy rules of the fuzzy rule base $B$.
We shall assume a non-empty finite set of fresh binary predicate symbols 
$\tilde{\mbb{H}}=\{\tilde{H}_X^r \,|\, r\in B, X\in \mbb{X}, \mi{out}(r)=X\}\cup \{\tilde{H}_X \,|\, X\in \mbb{X}\}$ such that 
$\tilde{\mbb{H}}\cap (\tilde{\mbb{D}}\cup \tilde{\mbb{G}})=\emptyset$.
These symbols will be used for axiomatisation of values of fuzzy variables (fuzzy sets) at individual time points.
$\tilde{H}_X^r$ are exploited for output fuzzy variables related to individual fuzzy rules of the fuzzy rule base $B$, 
while $\tilde{H}_X$ are for aggregated values with respect to the whole fuzzy rule base $B$.
We put ${\mc L}=\tilde{\mbb{Z}}\cup \tilde{\mbb{D}}\cup \tilde{\mbb{G}}\cup \tilde{\mbb{H}}$.
Let $e_0\in {\mc S}_\mbb{U}$ be the initial fuzzy variable assignment of a fuzzy derivation using the fuzzy rule base $B$.
We put $C_{\mc L}=\{0,1\}\cup \bigcup \{A[\mbb{U}] \,|\, A\in \mbb{A}\}\cup \bigcup \{e_0(X)[\mbb{U}] \,|\, X\in \mbb{X}\}$; 
$\{0,1\}\subseteq C_{\mc L}\subseteq [0,1]$ is countable.

In many practical cases, fuzzy sets appearing in the fuzzy rule base and as values for fuzzy variables have "reasonable" shapes,
which allows us to assume that the universum $\mbb{U}$, and subsequently, 
the set $C_{\mc L}$, the set of truth constants $\overline{C}_{\mc L}$ are even finite.  
For instance, assume that the universum $\mbb{U}$ is a closed interval $[u_b,u_e]$ of real numbers, $u_b<u_e$, $u_b, u_e\in \mbb{Q}$.
Our fuzzy inference system is in some sense finitary, that is $\mbb{A}$, $\mbb{X}$, $B$, $e_0$ are finite; 
only the universum $\mbb{U}$ and fuzzy sets in question are infinite.
Since the operators of union and intersection over fuzzy sets are idempotent, commutative, associative, 
only a finite number of fuzzy sets can be derived.
For every initial fuzzy variable assignment $e_0$, there exists $\eta^*$ such that
the fuzzy derivation ${\mc D}=e_0,\dots,e_\eta^*$, $e_\kappa\in {\mc S}_\mbb{U}$, contains all the derivable fuzzy variable assignments
from $e_0$ using the fuzzy rule base $B$; 
other inference steps after $e_\eta^*$ will derive only the fuzzy variable assignments already appearing in ${\mc D}$.
This gives rise to some finite approximation of the universum $\mbb{U}=[u_b,u_e]$.
Suppose that the original fuzzy sets appearing in $\mbb{A}$ and $e_0$ are semi-differentiable 
at every point of $(u_b,u_e)$, and right-differentiable at $u_b$, left-differentiable at $u_e$.
In addition, there exist $\mi{left}_\mi{min}, \mi{left}_\mi{max}, \mi{right}_\mi{min}, \mi{right}_\mi{max}\in \mbb{R}$,
$\mi{left}_\mi{min}<=\mi{left}_\mi{max}$, $\mi{right}_\mi{min}<=\mi{right}_\mi{max}$, such that 
$\mi{left}_\mi{min}$ and $\mi{left}_\mi{max}$ are lower and upper bounds, respectively, on every left derivative, 
and analogously for right derivatives.
It is straightforward to see that this assumption is preserved for all the derived fuzzy sets using the fuzzy rule base $B$ as well. 
If two fuzzy sets $A$ and $B$ are different, then they may have different values in the endpoint $u_b$ or $u_e$, or
there may exist some point $u\in (u_b,u_e)$ where they have different values, and moreover,
there exists a non-empty open subinterval $u\in (u-\delta,u+\delta)\subseteq (u_b,u_e)$, $0<\delta\in \mbb{Q}$, such that
$A$ and $B$ have different values at every point of $(u-\delta,u+\delta)$.
Recall that we have only a finite number of the fuzzy sets (original/derived).
Then the universum $\mbb{U}=[u_b,u_e]$ can be replaced with a finite set of witness points $\mbb{U}^*=\{u_b<u_1<\cdots<u_\lambda<u_e\}$
which splits the closed interval $[u_b,u_e]$ into closed subintervals (determined by two subsequent points) 
of the same length (an equilength split) so that
if two fuzzy sets $A$ and $B$ are different, then there exists $u^*\in \mbb{U}^*$ witnessing $A(u^*)\neq B(u^*)$.
Such a finite approximation $\mbb{U}^*$ is sufficient for a broad class of problems, 
covering the reachability, stability, and the existence of a $k$-cycle problems from \cite{Guller2023c}.
A lower bound on the equal length of closed subintervals of the split will be a subject of further research.

Let $\tilde{\mbb{P}}^*\subseteq \tilde{\mbb{P}}$ and $S\subseteq \mi{OrdCl}_{{\mc L}\cup \tilde{\mbb{P}}^*}$.
Notice that ${\mc L}\cup \tilde{\mbb{P}}^*$ contains the constant symbol $\tilde{z}\in \tilde{\mbb{Z}}$, and 
$\overline{C}_{\mc L}$ is assumed to be finite.
From the proof of Theorem \ref{T3} for ${\mc L}\cup \tilde{\mbb{P}}^*$, 
it follows that if $\square\not\in \mi{clo}^{\mc H}(S)$,
we can construct a model ${\mf A}$ of $S$ for ${\mc L}\cup \tilde{\mbb{P}}^*\cup \tilde{\mbb{W}}^*$
with the universum ${\mc U}_{\mf A}=\mi{GTerm}_{\tilde{\mbb{Z}}\cup \tilde{\mbb{W}}^*}\neq \emptyset$, $\tilde{z}\in {\mc U}_{\mf A}$, 
for some $\tilde{\mbb{W}}^*\subseteq \tilde{\mbb{W}}$,
where the function symbols from $\tilde{\mbb{Z}}\cup \tilde{\mbb{W}}^*$ are interpreted as constructors.
We denote 
${\mc K}=\{{\mc I} \,|\, {\mc I}\ \text{\it is an interpretation for}\ {\mc L}\cup \tilde{\mbb{P}}^*\cup \tilde{\mbb{W}}^*,
                         \tilde{\mbb{P}}^*\subseteq \tilde{\mbb{P}}, \tilde{\mbb{W}}^*\subseteq \tilde{\mbb{W}},
                         \mbox{${\mc U}_{\mc I}=\mi{GTerm}_{\tilde{\mbb{Z}}\cup \tilde{\mbb{W}}^*}$},\
                         \text{\it the function symbols from}\ \mbox{$\tilde{\mbb{Z}}\cup \tilde{\mbb{W}}^*$}                      \linebreak[4]  
                         \text{\it are interpreted as constructors}\}$. 
This yields that for purposes of the implementation, we may confine ourselves to interpretations from ${\mc K}$.
$S$ is satisfiable with respect to ${\mc K}$ iff there exists a model of $S$ from ${\mc K}$.
So, using Theorem \ref{T3}, 
$\square\in \mi{clo}^{\mc H}(S)$ if and only if $S$ is unsatisfiable if and only if $S$ is unsatisfiable with respect to ${\mc K}$.
Let $n_0\in \mbb{N}$, $\phi\in \mi{Form}_{\mc L}$, $T\subseteq \mi{Form}_{\mc L}$.
$\phi$ is a logical consequence of $T$ with respect to ${\mc K}$, in symbols $T\models_{\mc K} \phi$,
iff, for every interpretation ${\mc I}\in {\mc K}$, if ${\mc I}\models T$, then ${\mc I}\models \phi$.
We conclude by Theorem \ref{T1} that
there exist $J_T^\phi\subseteq \{(i,j) \,|\, i\geq n_0\}\subseteq \mbb{I}$,
$\tilde{\mbb{P}}^*=\{\tilde{p}_\mbbm{j} \,|\, \mbbm{j}\in J_T^\phi\}\subseteq \tilde{\mbb{P}}$,
$S_T^\phi\subseteq \mi{OrdCl}_{{\mc L}\cup \tilde{\mbb{P}}^*}$, and (i,ii) of Theorem \ref{T1} hold;
there exists an interpretation ${\mf A}$ for ${\mc L}$, and ${\mf A}\models T$, ${\mf A}\not\models \phi$, if and only if
there exists an interpretation ${\mf A}'$ for ${\mc L}\cup \tilde{\mbb{P}}^*$ and ${\mf A}'\models S_T^\phi$,
satisfying ${\mf A}={\mf A}'|_{\mc L}$;
using Theorem \ref{T3} for $S_T^\phi$, 
there exists an interpretation ${\mf A}$ for ${\mc L}$, and ${\mf A}\models T$, ${\mf A}\not\models \phi$, if and only if
there exists an interpretation ${\mf A}'\in {\mc K}$ for ${\mc L}\cup \tilde{\mbb{P}}^*\cup \tilde{\mbb{W}}^*$  
for some $\tilde{\mbb{W}}^*\subseteq \tilde{\mbb{W}}$ and ${\mf A}'\models S_T^\phi$;
$T\models \phi$ if and only if $S_T^\phi$ is unsatisfiable if and only if $S_T^\phi$ is unsatisfiable with respect to ${\mc K}$ if and only if
$T\models_{\mc K} \phi$.

The domains of natural, rational numbers, 
a domain of time (a kind of linear discrete time with the starting point $0$, represented as $\tilde{z}$, and without an endpoint), and 
the universum $\mbb{U}$ can be axiomatised as follows.
We define 
$T_D=\{\mi{nat}(\tilde{z}), \mi{nat}(\tilde{s}(x))\leftrightarrow \mi{nat}(x), 
       \mi{rat}(\mi{frac}(x,\tilde{s}(y)))\leftrightarrow \mi{nat}(x)\wedge \mi{nat}(y),
       \mi{rat}(\mi{-frac}(\tilde{s}(x),\tilde{s}(y)))\leftrightarrow \mi{nat}(x)\wedge \mi{nat}(y)\}\cup
       \{\mi{nat}(\mi{frac}(x,y))\geql \gz,\mi{nat}(\mi{-frac}(x,y))\geql \gz,\mi{rat}(\tilde{z})\geql \gz,\mi{rat}(\tilde{s}(x))\geql \gz\}\cup
       \{\mi{time}(x)\leftrightarrow \mi{nat}(x),\mi{uni}(x)\rightarrow \mi{rat}(x)\}\subseteq_{\mc F} \mi{Form}_{\mc L}$. 
The domains of natural and rational numbers are axiomatised by the first four formulae.
The domains of natural and rational numbers contain only respective natural and rational numerals in every ${\mc I}\in {\mc K}$, and no other
ground terms of ${\mc L}$, which is ensured by the fifth, sixth, seventh, and eighth axioms.
The domain of time is axiomatised as the domain of natural numbers (the ninth axiom), and 
the universum $\mbb{U}$ (finite/countable) as a subdomain of rational numbers (the tenth axiom).
Hence, there exists an injection $\gamma : \mbb{U}\longrightarrow \mbb{Q}$.
Let $t\in \mi{GTerm}_{\tilde{\mbb{Z}}}$ be a rational numeral.
If $t=\mi{frac}(\tilde{s}^m(\tilde{z}),\tilde{s}^n(\tilde{z}))$, $n>0$, 
then we define the value of $t$ as $\|t\|=\dfrac{m}{n}\in \mbb{Q}$.
If $t=\mi{-frac}(\tilde{s}^m(\tilde{z}),\tilde{s}^n(\tilde{z}))$, $m, n>0$,
then we define the value of $t$ as $\|t\|=-\dfrac{m}{n}\in \mbb{Q}$.
We axiomatise the universum $\mbb{U}$ as follows.
Let 
$\tilde{\mbb{U}}\subseteq \{t \,|\, t\in \mi{GTerm}_{\tilde{\mbb{Z}}}\ \text{\it is a rational numeral},\ \|t\|\in \gamma[\mbb{U}]\}$ such that 
$\{\|\tilde{u}\| \,|\, \tilde{u}\in \tilde{\mbb{U}}\}=\gamma[\mbb{U}]$.
We define 
$S_U^+=\{\mi{uni}(\tilde{u})\geql \gu \,|\, \tilde{u}\in \tilde{\mbb{U}}\}\subseteq \mi{OrdCl}_{\mc L}$,
$S_U^-=\{\mi{uni}(t)\geql \gz \,|\, t\in \mi{GTerm}_{\tilde{\mbb{Z}}}\ \text{\it is a rational numeral},\ t\not\in \tilde{\mbb{U}}\}\subseteq 
       \mi{OrdCl}_{\mc L}$,
$S_U=S_U^+\cup S_U^-\subseteq \mi{OrdCl}_{\mc L}$.
The clausal theory $S_U$ ensures that the universum $\mbb{U}$ is interpreted exactly as $\tilde{\mbb{U}}$ in every ${\mc I}\in {\mc K}$, 
i.e. it does not contain any other rational numerals or ground terms of ${\mc L}$ (the tenth axiom).
Let $\tilde{u}\in \tilde{\mbb{U}}$.
We denote $\langle\tilde{u}\rangle=\gamma^{-1}(\|\tilde{u}\|)\in \mbb{U}$; 
$\langle\tilde{u}\rangle$ denotes a unique element $u\in \mbb{U}$ such that $\gamma(u)=\|\tilde{u}\|\in \mbb{Q}$.
The translation of $\mbb{A}$ is defined as 
$S_\mbb{A}=\{\tilde{G}_A(\tilde{u})\geql \overline{A(\langle\tilde{u}\rangle)} \,|\, A\in \mbb{A}, \tilde{u}\in \tilde{\mbb{U}}\}\subseteq 
           \mi{OrdCl}_{\mc L}$. 
The translation of $e$ is defined as
$S_e=\{\tilde{H}_X(\tau,\tilde{u})\geql \overline{e(X)(\langle\tilde{u}\rangle)} \,|\, X\in \mbb{X}, \tilde{u}\in \tilde{\mbb{U}}\}\subseteq
     \mi{OrdCl}_{\mc L}$.
The translation of $r$ is defined as
$\phi_r(\tau,y)=\big(\mi{time}(\tau)\wedge \mi{uni}(y)\rightarrow
                     \big(\tilde{H}_X^r(\tilde{s}(\tau),y)\geql 
                          ((\bigwedge_{i=0}^n \exists x\, (\mi{uni}(x)\wedge \tilde{H}_{X_i}(\tau,x)\wedge \tilde{G}_{A_i}(x)))\wedge 
                           \tilde{G}_A(y))\big)\big)\in
                \mi{Form}_{\mc L}$.
The translation of $B$ is defined as 
$T_B=\{\phi_r(\tau,y) \,|\, r\in B\}\cup 
     \big\{\mi{time}(\tau)\wedge \mi{uni}(y)\rightarrow \big(\tilde{H}_X(\tilde{s}(\tau),y)\geql 
                                                             \bigvee_{r\in B, \mi{out}(r)=X} \tilde{H}_X^r(\tilde{s}(\tau),y)\big) \,|\, 
           X\in \mbb{X}\big\}\subseteq_{\mc F} 
     \mi{Form}_{\mc L}$.

Notice that $T_D$ and $T_B$ are finite theories.
In case of the universum $\mbb{U}$ being finite, $S_U^+$, $S_\mbb{A}$, $S_e$ are finite clausal theories, 
while $S_U^-$, $S_U$ are countably infinite clausal theories. 
If the universum $\mbb{U}$ is countably infinite, so are $S_U^+$, $S_U$, $S_\mbb{A}$, $S_e$; $S_U^-$ is countable and may even be finite.

\begin{lemma}[\mbox{\rm Lemma 6, Section IV, \cite{Guller2023c}}]
\label{le2}
Let ${\mc D}=e_0,\dots,e_\eta$ be a fuzzy derivation.
$T_D\cup S_U\cup S_\mbb{A}\cup T_B\cup S_{e_0}(\tau/\tilde{z})\models_{\mc K} S_{e_\eta}(\tau/\tilde{s}^\eta(\tilde{z}))$.
\end{lemma}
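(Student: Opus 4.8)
The plan is to prove a uniform strengthening: in \emph{every} interpretation ${\mc I}\in {\mc K}$ that models the left-hand side theory, the interpretation of $\tilde{H}_X$ at time point $\tilde{s}^\kappa(\tilde{z})$ computes exactly $e_\kappa(X)$, and then to read off the goal clauses at $\kappa=\eta$. First I would fix an arbitrary ${\mc I}\in {\mc K}$ with ${\mc I}\models T_D\cup S_U\cup S_\mbb{A}\cup T_B\cup S_{e_0}(\tau/\tilde{z})$, recalling that ${\mc U}_{\mc I}=\mi{GTerm}_{\tilde{\mbb{Z}}\cup \tilde{\mbb{W}}^*}$ with the function symbols acting as constructors. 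Using $T_D$ I would record that $\|\mi{time}(\tilde{s}^\kappa(\tilde{z}))\|^{\mc I}=1$ for every $\kappa$ (from $\mi{nat}(\tilde{z})$, $\mi{nat}(\tilde{s}(x))\leftrightarrow \mi{nat}(x)$, and $\mi{time}(x)\leftrightarrow \mi{nat}(x)$); and, combining the $\mi{nat}$/$\mi{rat}$ axioms with $\mi{uni}(x)\rightarrow \mi{rat}(x)$ and with $S_U^+$, $S_U^-$, that $\|\mi{uni}(w)\|^{\mc I}=1$ exactly for $w\in \tilde{\mbb{U}}$ and $\|\mi{uni}(w)\|^{\mc I}=0$ for every other $w\in {\mc U}_{\mc I}$. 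This is precisely the step where membership in ${\mc K}$ is essential: it pins $\mi{uni}$ down to the characteristic map of $\tilde{\mbb{U}}$ over the whole Herbrand universe.

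The core is an induction on $\kappa$ establishing, simultaneously for all $X\in \mbb{X}$ and $\tilde{u}\in \tilde{\mbb{U}}$, that $\|\tilde{H}_X(\tilde{s}^\kappa(\tilde{z}),\tilde{u})\|^{\mc I}=e_\kappa(X)(\langle\tilde{u}\rangle)$. The base case $\kappa=0$ is immediate from the unit clauses of $S_{e_0}(\tau/\tilde{z})$, which force the equalities because $\geql$ is interpreted by $\feql$. For the step I fix a rule $r$ with $\mi{out}(r)=X$ and instantiate $\phi_r$ at $\tau/\tilde{s}^{\kappa-1}(\tilde{z})$, $y/\tilde{u}$; since the premise $\mi{time}(\tilde{s}^{\kappa-1}(\tilde{z}))\wedge \mi{uni}(\tilde{u})$ evaluates to $1$, the G\"{o}del implication forces the $\geql$-equality in the consequent. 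I would then evaluate each existential subformula: because $\mi{uni}$ vanishes off $\tilde{\mbb{U}}$, the supremum over ${\mc U}_{\mc I}$ collapses to $\bigfvee_{\tilde{u}'\in \tilde{\mbb{U}}}(\|\tilde{H}_{X_i}(\tilde{s}^{\kappa-1}(\tilde{z}),\tilde{u}')\|^{\mc I}\fwedge \|\tilde{G}_{A_i}(\tilde{u}')\|^{\mc I})$. Applying the induction hypothesis together with $S_\mbb{A}$ (which gives $\|\tilde{G}_{A_i}(\tilde{u}')\|^{\mc I}=A_i(\langle\tilde{u}'\rangle)$), and transporting along the bijection $\langle\cdot\rangle : \tilde{\mbb{U}}\longrightarrow \mbb{U}$, this equals $\bigfvee_{u\in \mbb{U}}\mi{min}(e_{\kappa-1}(X_i)(u),A_i(u))=\mi{height}(e_{\kappa-1}(X_i)\cap A_i)$.

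Meeting these over $i$ gives the cut level $\bigfwedge_{i=0}^n \mi{height}(e_{\kappa-1}(X_i)\cap A_i)$, and taking $\fwedge$ with $\|\tilde{G}_A(\tilde{u})\|^{\mc I}=A(\langle\tilde{u}\rangle)$ yields $\|\tilde{H}_X^r(\tilde{s}^\kappa(\tilde{z}),\tilde{u})\|^{\mc I}=\mi{cut}(\bigfwedge_{i=0}^n \mi{height}(e_{\kappa-1}(X_i)\cap A_i),A)(\langle\tilde{u}\rangle)=\|X\|_{e_{\kappa-1}}^r(\langle\tilde{u}\rangle)$. The aggregation clause of $T_B$, whose premise again evaluates to $1$, then forces $\|\tilde{H}_X(\tilde{s}^\kappa(\tilde{z}),\tilde{u})\|^{\mc I}=\bigfvee_{r\in B,\,\mi{out}(r)=X}\|X\|_{e_{\kappa-1}}^r(\langle\tilde{u}\rangle)=\|X\|_{e_{\kappa-1}}^B(\langle\tilde{u}\rangle)=e_\kappa(X)(\langle\tilde{u}\rangle)$, closing the induction. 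Taking $\kappa=\eta$ reproduces exactly the unit clauses of $S_{e_\eta}(\tau/\tilde{s}^\eta(\tilde{z}))$, hence ${\mc I}\models S_{e_\eta}(\tau/\tilde{s}^\eta(\tilde{z}))$, and since ${\mc I}\in {\mc K}$ was arbitrary the consequence $\models_{\mc K}$ follows. I expect the main obstacle to be the existential-collapse step: justifying purely from $T_D$ and $S_U$, inside an arbitrary ${\mc K}$-interpretation, that the supremum over the full Herbrand universe reduces to the supremum over $\tilde{\mbb{U}}$ and matches $\mi{height}$ under $\langle\cdot\rangle$, with care taken so that the argument remains valid when $\mbb{U}$ (and hence $\tilde{\mbb{U}}$) is countably infinite and $\mi{height}$ is a genuine supremum rather than a maximum.
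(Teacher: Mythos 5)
Your proposal is correct and follows exactly the route the paper intends: the paper's proof is just ``by straightforward induction on $\eta$'' (deferring details to Lemma~6 of the cited work), and your induction on the time step $\kappa$ — pinning down $\mi{time}$ and $\mi{uni}$ in an arbitrary ${\mc K}$-interpretation, forcing the $\geql$-equalities through the G\"{o}del implication, and collapsing the existential suprema to $\tilde{\mbb{U}}$ so that they compute $\mi{height}(e_{\kappa-1}(X_i)\cap A_i)$ — is precisely that induction, carried out in full. The ``main obstacle'' you flag is handled correctly by your observation that $\mi{uni}$ is the characteristic map of $\tilde{\mbb{U}}$, which is exactly what $S_U$ together with membership in ${\mc K}$ is designed to guarantee.
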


\begin{proof}
By straightforward induction on $\eta$.
Cf. Lemma 6, \cite{Guller2023c}.
\qed
\end{proof}

In Table \ref{tab44}, we introduce some admissible general order rules, which are useful to get shorter and more readable derivations, 
but superfluous for the refutational completeness argument.
\begin{table}[t]
\caption{Admissible order rules}\label{tab44}
\vspace{-6mm}
\centering
\begin{minipage}[t]{\linewidth}
\footnotesize
\begin{IEEEeqnarray*}{L}
\hline \hline \\[-6mm]
\end{IEEEeqnarray*}
\begin{alignat}{1}
\ctag{ceq4hr11}{({\it Order $\gz$-dichotomy rule})} \\[1mm]
\notag
& \dfrac{a\in \mi{atoms}(\mi{Vrnt}(S_{\kappa-1}))\cup \mi{qatoms}(\mi{Vrnt}(S_{\kappa-1}))}
        {a\geql \gz\vee \gz\gle a\in S_\kappa}. 
\end{alignat}
$a\geql \gz\vee \gz\gle a$ is an order $\gz$-dichotomy resolvent of $a$.
\begin{alignat}{1}
\ctag{ceq4hr10}{({\it Order $\gu$-dichotomy rule})} \\[1mm]
\notag
& \dfrac{a\in \mi{atoms}(\mi{Vrnt}(S_{\kappa-1}))\cup \mi{qatoms}(\mi{Vrnt}(S_{\kappa-1}))}
        {a\gle \gu\vee a\geql \gu\in S_\kappa}.                                 
\end{alignat}
$a\gle \gu\vee a\geql \gu$ is an order $\gu$-dichotomy resolvent of $a$.
\begin{alignat}{1}
\ctag{ceq4hr12}{({\it Order $\wedge$-commutativity rule})} \\[1mm]
\notag
& \dfrac{\begin{array}{l}
         \tilde{p}_{\mbbm{i}_1}(\bar{x})\gle \tilde{p}_{\mbbm{i}_2}(\bar{x})\vee 
         \tilde{p}_{\mbbm{i}_1}(\bar{x})\geql \tilde{p}_{\mbbm{i}_2}(\bar{x})\vee 
         \tilde{p}_\mbbm{i}(\bar{x})\geql \tilde{p}_{\mbbm{i}_2}(\bar{x}), \\
         \tilde{p}_{\mbbm{i}_2}(\bar{x})\gle \tilde{p}_{\mbbm{i}_1}(\bar{x})\vee 
         \tilde{p}_\mbbm{i}(\bar{x})\geql \tilde{p}_{\mbbm{i}_1}(\bar{x})\in \mi{Vrnt}(S_{\kappa-1})               
         \end{array}} 
        {\tilde{p}_{\mbbm{i}_1}(\bar{x})\gle \tilde{p}_{\mbbm{i}_2}(\bar{x})\vee 
         \tilde{p}_\mbbm{i}(\bar{x})\geql \tilde{p}_{\mbbm{i}_2}(\bar{x})\in S_\kappa}; \\[1mm]
\notag
& \tilde{p}_\mbbm{i}, \tilde{p}_{\mbbm{i}_1}, \tilde{p}_{\mbbm{i}_2}\in \tilde{\mbb{P}},\
  \bar{x}\ \text{\it is a sequence of variables of}\ {\mc L}.
\end{alignat}
$\tilde{p}_{\mbbm{i}_1}(\bar{x})\gle \tilde{p}_{\mbbm{i}_2}(\bar{x})\vee 
 \tilde{p}_\mbbm{i}(\bar{x})\geql \tilde{p}_{\mbbm{i}_2}(\bar{x})$ is an order $\wedge$-commutativity resolvent of  
$\tilde{p}_{\mbbm{i}_1}(\bar{x})\gle \tilde{p}_{\mbbm{i}_2}(\bar{x})\vee 
 \tilde{p}_{\mbbm{i}_1}(\bar{x})\geql \tilde{p}_{\mbbm{i}_2}(\bar{x})\vee 
 \tilde{p}_\mbbm{i}(\bar{x})\geql \tilde{p}_{\mbbm{i}_2}(\bar{x})$ and
$\tilde{p}_{\mbbm{i}_2}(\bar{x})\gle \tilde{p}_{\mbbm{i}_1}(\bar{x})\vee
 \tilde{p}_\mbbm{i}(\bar{x})\geql \tilde{p}_{\mbbm{i}_1}(\bar{x})$.
\begin{alignat}{1}
\ctag{ceq4hr14}{({\it Order $\tilde{\mbb{U}}$-bounded $\forall$-quantification rule})} \\[1mm]
\notag
& \dfrac{\forall \tilde{u}\in \tilde{\mbb{U}}\ C(x/\tilde{u})\in \mi{Vrnt}(S_{\kappa-1})}
        {\mi{uni}(x)\geql \gz\vee C(x)\in S_\kappa}; \\[1mm]
\notag
& C(x)\in \mi{OrdCl}_{{\mc L}_{\kappa-1}}.
\end{alignat}
$\mi{uni}(x)\geql \gz\vee C(x)$ is an order $\tilde{\mbb{U}}$-bounded $\forall$-quantification resolvent 
of $C(x/\tilde{u})$, $\tilde{u}\in \tilde{\mbb{U}}$.
\begin{IEEEeqnarray*}{L}
\hline \hline \\[2mm]
\end{IEEEeqnarray*}
\end{minipage}
\vspace{-6mm}
\end{table}

\section{An example}
\label{S6}

In \cite{Guller2023c}, at the end of Section IV, 
we have illustrated the implementation of the Mamdani-Assilian type of fuzzy rules and inference in G\"{o}del logic 
by an example on a fuzzy inference system modelling a simple thermodynamic system.
We have modelled an engine with inner combustion and cooling medium.
We have considered three physical quantities: the temperature (t), density (d) of the cooling medium, and the rotation (r) of the engine
together with their first derivatives.
For simplicity, the universum $\mbb{U}=\{0,1,2,3,4\}$ has been finite.
For every physical quantity $k$, we have defined fuzzy sets $\mi{low}_k$, $\mi{medium}_k$, $\mi{high}_k$, and
for its derivative $\dot{k}$, fuzzy sets $\mi{negative}_{\dot{k}}$, $\mi{zero}_{\dot{k}}$, $\mi{positive}_{\dot{k}}$, Table~\ref{tab10}.
The set of all fuzzy sets appearing in the underlying fuzzy rule base $B$, Table IV, \cite{Guller2023c}, has been defined as 
$\mbb{A}=\bigcup_{k\in \{t,d,r\}} \{\mi{low}_k,\mi{medium}_k,\mi{high}_k,\mi{negative}_{\dot{k}},\mi{zero}_{\dot{k}},              \linebreak[4]
                                    \mi{positive}_{\dot{k}}\}$.
The corresponding clausal theory $S_\mbb{A}$ is given in Table~\ref{tab15}, Appendix, $[47k-76k]$, $k\in \{t,d,r\}$.
The set of all fuzzy variables appearing in the fuzzy rule base $B$ has been defined as
$\mbb{X}=\{X_i \,|\, i\leq 5\}$ where the variables $X_0$ and $X_3$ correspond to the temperature and its derivative,
                                      $X_1$ and $X_4$ to the density and its derivative,
                                      $X_2$ and $X_5$ to the rotation and its derivative, respectively.
The universum $\mbb{U}$ has been represented as
$\tilde{\mbb{U}}=\{\mi{frac}(\tilde{z},\tilde{s}(\tilde{z})),
                   \mi{frac}(\tilde{s}(\tilde{z}),\tilde{s}(\tilde{z})),
                   \mi{frac}(\tilde{s}^2(\tilde{z}),\tilde{s}(\tilde{z})),                                                         \linebreak[4]                                                 
                   \mi{frac}(\tilde{s}^3(\tilde{z}),\tilde{s}(\tilde{z})),
                   \mi{frac}(\tilde{s}^4(\tilde{z}),\tilde{s}(\tilde{z}))\}$ 
(for simplicity, $\mi{frac}(\tilde{s}^n(\tilde{z}),\tilde{s}(\tilde{z}))$ is abbreviated as $\tilde{n}$), and
the corresponding clausal theory $S_U^+$ is given in Table \ref{tab14}, Appendix, $[42-46]$,
together with a clausal translation $S_D$ of the theory $T_D$, $[1-41]$.
\begin{table*}[t]
\vspace{-6mm}
\caption{Simplified fuzzy inference system}\label{tab10}
\vspace{-6mm}
\centering   
\begin{minipage}[t]{\linewidth-15mm}
\footnotesize
\begin{IEEEeqnarray*}{LLLL}
\hline \hline \\[1mm]
\IEEEeqnarraymulticol{4}{c}{\text{\bf \normalsize{Fuzzy sets}}} \\[2mm]
\text{\bf \small{Quantity:}}\quad & \mi{low}_k=\Big\{\frac{1}{0},\frac{0.5}{1},\frac{0}{2},\frac{0}{3},\frac{0}{4}\Big\} \quad
                                  & \mi{medium}_k=\Big\{\frac{0}{0},\frac{0.5}{1},\frac{1}{2},\frac{0.5}{3},\frac{0}{4}\Big\} \quad
                                  & \mi{high}_k=\Big\{\frac{0}{0},\frac{0}{1},\frac{0}{2},\frac{0.5}{3},\frac{1}{4}\Big\} \\[1mm]
\IEEEeqnarraymulticol{4}{l}{\text{$k$ stands for $t$ -- temperature, $d$ -- density, $r$ -- rotation}} \\[2mm]
\text{\bf \small{Derivative:}}\quad & \mi{negative}_{\dot{k}}=\Big\{\frac{1}{0},\frac{0.5}{1},\frac{0}{2},\frac{0}{3},\frac{0}{4}\Big\} \quad
                                    & \mi{zero}_{\dot{k}}=\Big\{\frac{0}{0},\frac{0.5}{1},\frac{1}{2},\frac{0.5}{3},\frac{0}{4}\Big\} \quad
                                    & \mi{positive}_{\dot{k}}=\Big\{\frac{0}{0},\frac{0}{1},\frac{0}{2},\frac{0.5}{3},\frac{1}{4}\Big\} \\[1mm]
\IEEEeqnarraymulticol{4}{l}{\text{$\dot{k}$ stands for $\dot{t}$ -- derivative of temperature,
                                                       $\dot{d}$ -- derivative of density,
                                                       $\dot{r}$ -- derivative of rotation}} 
\end{IEEEeqnarray*}
\vspace{-2mm}
\begin{IEEEeqnarray*}{LL}
\IEEEeqnarraymulticol{2}{c}{\text{\bf \normalsize{Simplified fuzzy rule base $\mbi{B^*}$}}} \\[2mm]
R_1:    & 
\mib{IF}\, X_0\, \mi{is}\, \mi{low}_t\, \mib{THEN}\, X_1\, \mi{is}\, \mi{high}_d \\[0mm]
R_6:    &
\mib{IF}\, X_2\, \mi{is}\, \mi{high}_r\, \mib{THEN}\, X_3\, \mi{is}\, \mi{positive}_{\dot{t}} \\[0mm]
R_8:    & 
\mib{IF}\, X_1\, \mi{is}\, \mi{high}_d\, \mi{and}\, X_2\, \mi{is}\, \mi{high}_r\, \mib{THEN}\, X_5\, \mi{is}\, \mi{negative}_{\dot{r}} \\[0mm]
R_{29}: &
\mib{IF}\, X_2\, \mi{is}\, \mi{medium}_r\, \mi{and}\, X_5\, \mi{is}\, \mi{positive}_{\dot{r}}\,
\mib{THEN}\, X_2\, \mi{is}\, \mi{high}_r 
\end{IEEEeqnarray*}
\vspace{-2mm}
%
%
\begin{IEEEeqnarray*}{LL}
\IEEEeqnarraymulticol{2}{c}{\text{\bf \normalsize{Translation of the simplified fuzzy rule base $\mbi{B^*}$}}} \\[2mm]
\phi_1(\tau,y) &= \big(\mi{time}(\tau)\wedge \mi{uni}(y)\rightarrow 
                       \big(\tilde{H}_{X_1}(\tilde{s}(\tau),y)\geql
                            (\exists x\, (\mi{uni}(x)\wedge \tilde{H}_{X_0}(\tau,x)\wedge \tilde{G}_{\mi{low}_t}(x))\wedge \tilde{G}_{\mi{high}_d}(y))\big)\big) \\[0mm]
\phi_6(\tau,y) &= \big(\mi{time}(\tau)\wedge \mi{uni}(y)\rightarrow 
                       \big(\tilde{H}_{X_3}(\tilde{s}(\tau),y)\geql
                            (\exists x\, (\mi{uni}(x)\wedge \tilde{H}_{X_2}(\tau,x)\wedge \tilde{G}_{\mi{high}_r}(x))\wedge \tilde{G}_{\mi{positive}_{\dot{t}}}(y))\big)\big) \\[0mm]
\phi_8(\tau,y) &= \big(\mi{time}(\tau)\wedge \mi{uni}(y)\rightarrow \\[0mm]
&\phantom{\mbox{}=\big(}
                       \big(\tilde{H}_{X_5}(\tilde{s}(\tau),y)\geql
                            (\exists x\, (\mi{uni}(x)\wedge \tilde{H}_{X_1}(\tau,x)\wedge \tilde{G}_{\mi{high}_d}(x))\wedge
                             \exists x\, (\mi{uni}(x)\wedge \tilde{H}_{X_2}(\tau,x)\wedge \tilde{G}_{\mi{high}_r}(x))\wedge
                             \tilde{G}_{\mi{negative}_{\dot{r}}}(y))\big)\big) \\[0mm]
\phi_{29}(\tau,y) &= \big(\mi{time}(\tau)\wedge \mi{uni}(y)\rightarrow \\[0mm]
&\phantom{\mbox{}=\big(}
                          \big(\tilde{H}_{X_2}(\tilde{s}(\tau),y)\geql
                               (\exists x\, (\tilde{H}_{X_2}(\tau,x)\wedge \tilde{G}_{\mi{medium}_r}(x))\wedge
                                \exists x\, (\tilde{H}_{X_5}(\tau,x)\wedge \tilde{G}_{\mi{positive}_{\dot{r}}}(x))\wedge
                                \tilde{G}_{\mi{high}_r}(y))\big)\big) 
\end{IEEEeqnarray*}
\vspace{-2mm}
%
%
\begin{IEEEeqnarray*}{LL}
\IEEEeqnarraymulticol{2}{c}{\text{\bf \normalsize{Translation of the reachability problem}}} \\[2mm]
\phi_r &= \exists \tau\, (\mi{time}(\tau)\wedge
                          \forall x\, (\mi{uni}(x)\rightarrow \tilde{H}_{X_3}(\tau,x)\geql \tilde{G}_{\mi{positive}_{\dot{t}}}(x))\wedge
                          \forall x\, (\mi{uni}(x)\rightarrow \tilde{H}_{X_5}(\tau,x)\geql \tilde{G}_{\mi{negative}_{\dot{r}}}(x))) \\[1mm]
\hline \hline 
\end{IEEEeqnarray*}
\end{minipage}  
\vspace{-2mm}
\end{table*}
As a continuation of this example, we show how can be solved a reachability problem, \cite{Guller2023c}, Section~IV, using hyperresolution.
To retain a reasonable length of this illustration, we shall consider a simplified fuzzy rule base $B^*$ containing only the four fuzzy rules:
$R_1$, $R_6$, $R_8$, $R_{29}$ of the original fuzzy rule base $B$, Table \ref{tab10}.
The initial state of a fuzzy derivation (at the starting time point $0$, represented as $\tilde{z}$) is determined as follows:
the temperature is low; the density is high; the rotation is low;
the first derivative of the temperature is zero;
the first derivative of the density is zero;
the first derivative of the rotation is positive;
hence, the initial variable assignment 
$e_0=\{(X_0,\mi{low}_t),(X_1,\mi{high}_d),(X_2,\mi{low}_r),(X_3,\mi{zero}_{\dot{t}}),                                              \linebreak[4]    
       (X_4,\mi{zero}_{\dot{d}}),(X_5,\mi{positive}_{\dot{r}})\}$, and 
the part of the corresponding clausal theory $S_{e_0}(\tau/\tilde{z})$ 
for the fuzzy variables $X_0$, $X_2$, $X_5$ is given in Table \ref{tab18}, Appendix, $[181-195]$.
Let ${\mc D}=e_0,\dots,e_\eta$ be a fuzzy derivation of $e_\eta$ from $e_0$ using $B^*$.
We show how can be solved a reachability problem of the form: 
there exists $\kappa\leq \eta$ such that $e_\kappa(X_3)=\mi{positive}_{\dot{t}}$ and $e_\kappa(X_5)=\mi{negative}_{\dot{r}}$.
In Table \ref{tab10}, the fuzzy rule base $B^*$ and the reachability problem are translated to formulae of G\"{o}del logic:
$T_{B^*}=\{\phi_1(\tau,y),\phi_6(\tau,y),\phi_8(\tau,y),\phi_{29}(\tau,y)\}$ and $\phi_r$, respectively.
Subsequently, the reachability problem formula $\phi_r$ can be translated to a clausal theory $S_{\phi_r}$, 
Tables \ref{tab11}, \ref{tab12}, \ref{tab18}, Appendix, $[196-218]$.
A clausal theory $S_{B^*}$ is given in Tables \ref{tab16}, \ref{tab17}, \ref{tab17b}, Appendix, $[77-180]$;
we have reused and simply cloned the clausal translations of $\phi_1(\tau,y)$ (one input fuzzy variable, Table~X, \cite{Guller2023c}) and 
$\phi_7(\tau,y)$ (two input fuzzy variables, Table~XIV, \cite{Guller2023c}). 
So, by Theorem \ref{T1} (for $n_0=0$), the deduction problem $T_D\cup S_U\cup S_\mbb{A}\cup T_{B^*}\cup S_{e_0}(\tau/\tilde{z})\models \phi_r$
has been reduced to an unsatisfiability problem for the clausal theory 
$S_D\cup S_U\cup S_\mbb{A}\cup S_{B^*}\cup S_{e_0}(\tau/\tilde{z})\cup S_{\phi_r}$, $[1-218]$, Tables \ref{tab14}--\ref{tab18}, Appendix,
which can be solved using hyperresolution, Theorem \ref{T3}.
There exists a refutation of $S_D\cup S_U\cup S_\mbb{A}\cup S_{B^*}\cup S_{e_0}(\tau/\tilde{z})\cup S_{\phi_r}$. 
In Table~\ref{tab10b}, we outline a digest of this refutation 
(the full refutation is given in Tables \ref{tab19}--\ref{tab24}, Appendix, $[219-389]$).
\begin{table}[t]
\vspace{-6mm}
\caption{A digest of the refutation of $S_D\cup S_U\cup S_\mbb{A}\cup S_B\cup S_{e_0}(\tau/\tilde{z})\cup S_{\phi_r}$}\label{tab10b}
\vspace{-6mm}
\centering   
\begin{minipage}[t]{\linewidth}
\scriptsize
\begin{IEEEeqnarray*}{LR}
\hline \hline \\[2mm]
\mi{nat}(\tilde{z})\geql \gu 
& [220] \\
\mi{nat}(\tilde{s}(\tilde{z}))\geql \gu
& [223] \\
\mi{nat}(\tilde{s}(\tilde{s}(\tilde{z})))\geql \gu
& [226] \\
\mi{time}(\tilde{z})\geql \gu
& [229] \\
\mi{time}(\tilde{s}(\tilde{z}))\geql \gu
& [232] \\
\mi{time}(\tilde{s}(\tilde{s}(\tilde{z})))\geql \gu
& [235] \\
\tilde{H}_{X_1}(\tilde{s}(\tilde{z}),\tilde{u})\geql \tilde{G}_{\mi{high}_d}(\tilde{u}), \tilde{u}\in \tilde{\mbb{U}}
& [258-262] \\
\tilde{H}_{X_2}(\tilde{s}(\tilde{z}),\tilde{u})\geql \tilde{G}_{\mi{high}_r}(\tilde{u}), \tilde{u}\in \tilde{\mbb{U}}
& [291-295] \\
\tilde{H}_{X_3}(\tilde{s}(\tilde{s}(\tilde{z})),\tilde{u})\geql \tilde{G}_{\mi{positive}_{\dot{t}}}(\tilde{u}), \tilde{u}\in \tilde{\mbb{U}}
& [318-322] \\
\tilde{H}_{X_5}(\tilde{s}(\tilde{s}(\tilde{z})),\tilde{u})\geql \tilde{G}_{\mi{negative}_{\dot{r}}}(\tilde{u}), \tilde{u}\in \tilde{\mbb{U}}
& [351-355] \\
\framebox{$\forall x\, \tilde{p}_{0,6}(\tilde{s}(\tilde{s}(\tilde{z})),x)\geql \gu$}
& [370] \\
\framebox{$\forall x\, \tilde{p}_{0,7}(\tilde{s}(\tilde{s}(\tilde{z})),x)\geql \gu$}
& [385] \\
\text{\bf Rule (\cref{ceq4hr5})} : [202] [211]; \tau/\tilde{s}(\tilde{s}(\tilde{z})) : [385] :
& \\
\framebox{$\tilde{p}_{0,3}(\tilde{s}(\tilde{s}(\tilde{z})),x)\geql \tilde{p}_{0,4}(\tilde{s}(\tilde{s}(\tilde{z})),x)$}
& [386] \\
\text{\bf Rule (\cref{ceq4hr5})} : [199] [203]; \tau/\tilde{s}(\tilde{s}(\tilde{z})) : [370] [386] :
& \\
\framebox{$\tilde{p}_{0,1}(\tilde{s}(\tilde{s}(\tilde{z})),x)\geql \tilde{p}_{0,2}(\tilde{s}(\tilde{s}(\tilde{z})),x)$}
& [387] \\
\text{\bf Rule (\cref{ceq4hr7})} : \exists \tau\, \tilde{p}_{0,1}(\tau,x) :
& \\
\framebox{$\tilde{p}_{0,1}(\tau,x)\gle \exists \tau\, \tilde{p}_{0,1}(\tau,x)\vee
           \tilde{p}_{0,1}(\tau,x)\geql \exists \tau\, \tilde{p}_{0,1}(\tau,x)$} 
& [388] \\  
\text{repeatedly \bf Rule (\cref{ceq4hr5})} : [196] [197] [235] [387] : [200] [388]; \tau/\tilde{s}(\tilde{s}(\tilde{z})) : \hspace{-1mm}
& \\
\square
& [389] \\[1mm]
\hline \hline  
\end{IEEEeqnarray*}
\end{minipage}
\vspace{-2mm} 
\end{table}  
We conclude by Corollary \ref{cor2} that the reachability problem has been solved in an affirmative way:
there exists $\kappa=2$ (represented as $\tilde{s}(\tilde{s}(\tilde{z}))$)
satisfying $e_\kappa(X_3)=\mi{positive}_{\dot{t}}$, $[318-322]$, Table~\ref{tab22}, Appendix, and 
$e_\kappa(X_5)=\mi{negative}_{\dot{r}}$, $[351-355]$, Table~\ref{tab23}, Appendix.%
\footnote{We have devised a rule-based system (download link: www.dai.fmph.uniba.sk/$\sim$guller/tfs18A.clp)
          for simulation of fuzzy inference using the fuzzy rule base $B$
          in the language (IDE) CLIPS \cite{GIRI98}.}

\section{Conclusions}
\label{S7}

In \cite{Guller2018a}, we have started our research programme with the main aim to develop 
the logical and computational foundations of multi-step fuzzy inference from the perspective of many-valued logics and automated reasoning.
As a first step, we have extent the Davis-Putnam-Logemann-Loveland procedure ({\it DPLL}) to the propositional G\"{o}del logic and proved
its refutational soundness and finite completeness.
The {\it DPLL} procedure infer over finite order clausal theories, 
where an order clause is a finite set of order literals of the form $\varepsilon_1\diamond \varepsilon_2$;
$\varepsilon_i$ is a propositional atom or a truth constant $\gz$ or $\gu$, and $\diamond$ is a connective $\geql$ or $\gle$.
As a second step, we have generalised the clausal form and translation to the first-order case in \cite{Guller2023c}.
Moreover, we have added to the first-order G\"{o}del logic intermediate truth constants, which is necessary when dealing with fuzzy sets.
In this logical calculus, we have proposed implementation of the Mamdani-Assilian type of fuzzy rules and inference.
We have outlined a class of problems concerning general properties of multi-step fuzzy inference 
(with three fundamental instances: reachability, stability, and the existence of a $k$-cycle in fuzzy inference). 
This class of problems can be reduced to a class of deduction/unsatisfiability problems.
Finally, as a third step, we have modified the hyperresolution calculus from \cite{Guller2019b}, inferring over clausal theories,  
which remains refutation sound and complete on condition that the universum of fuzzy sets is finite, and 
the underlying clausal theory is countable.
If the fuzzy sets in question have "reasonable" shapes, such a finite approximation of the universum is feasible.
The implementation in all its steps has been illustrated with an example on a fuzzy inference system 
modelling an engine with inner combustion and cooling medium (introduced in \cite{Guller2023c}).

\bibliographystyle{IEEEtran}  
\bibliography{IEEEabrv,tfs19}

\newpage
\section{Appendix}
\label{S10}

\subsection{Substitutions}
\label{S10.1}

Let $X=\{x_i \,|\, 1\leq i\leq n\}\subseteq \mi{Var}_{\mc L}$.
A substitution $\vartheta$ of ${\mc L}$ is a mapping $\vartheta : X\longrightarrow \mi{Term}_{\mc L}$.
$\vartheta$ is commonly written in the form $x_1/\vartheta(x_1),\dots,x_n/\vartheta(x_n)$.
We denote $\mi{dom}(\vartheta)=X\subseteq_{\mc F} \mi{Var}_{\mc L}$ and 
$\mi{range}(\vartheta)=\bigcup_{x\in X} \mi{vars}(\vartheta(x))\subseteq_{\mc F} \mi{Var}_{\mc L}$.
The set of all substitutions of ${\mc L}$ is designated as $\mi{Subst}_{\mc L}$.
$\vartheta$ is a variable renaming iff $\vartheta : \mi{dom}(\vartheta)\longrightarrow \mi{Var}_{\mc L}$, 
for all $x, x'\in \mi{dom}(\vartheta)$ and $x\neq x'$, $\vartheta(x)\neq \vartheta(x')$.
We define $\mi{id}_{\mc L} : \mi{Var}_{\mc L}\longrightarrow \mi{Var}_{\mc L},\ \mi{id}_{\mc L}(x)=x$.
Let $t\in \mi{Term}_{\mc L}$ and $\vartheta'\in \mi{Subst}_{\mc L}$. 
$\vartheta$ is applicable to $t$ iff $\mi{dom}(\vartheta)\supseteq \mi{vars}(t)$.
Let $\vartheta$ be applicable to $t$.
We define the application $t\vartheta\in \mi{Term}_{\mc L}$ of $\vartheta$ to $t$ by recursion on the structure of $t$ as follows:
\begin{equation*}
t\vartheta=\left\{\begin{array}{ll}
                  \vartheta(t)                       &\ \text{\it if}\ t\in \mi{Var}_{\mc L}, \\[1mm]
                  f(t_1\vartheta,\dots,t_n\vartheta) &\ \text{\it if}\ t=f(t_1,\dots,t_n).
                  \end{array}
           \right.
\end{equation*}
Let $\mi{range}(\vartheta)\subseteq \mi{dom}(\vartheta')$.
We define the composition $\vartheta\circ \vartheta'\in \mi{Subst}_{\mc L}$ of $\vartheta$ and $\vartheta'$ 
as $\vartheta\circ \vartheta' : \mi{dom}(\vartheta)\longrightarrow \mi{Term}_{\mc L},\ 
    \vartheta\circ \vartheta'(x)=\vartheta(x)\vartheta'$, 
$\mi{dom}(\vartheta\circ \vartheta')=\mi{dom}(\vartheta)$,
$\mi{range}(\vartheta\circ \vartheta')=\mi{range}(\vartheta'|_{\mi{range}(\vartheta)})$.
Note that composition of substitutions is associative.
%
%
%
%
%
Let $a\in \mi{Atom}_{\mc L}$.
$\vartheta$ is applicable to $a$ iff $\mi{dom}(\vartheta)\supseteq \mi{vars}(a)$.
Let $\vartheta$ be applicable to $a$ and $a=p(t_1,\dots,t_n)$.
We define the application $a\vartheta\in \mi{Atom}_{\mc L}$ of $\vartheta$ to $a$ as $a\vartheta=p(t_1\vartheta,\dots,t_n\vartheta)$.
Let $c\in \overline{C}_{\mc L}$.
$\vartheta$ is trivially applicable to $c$. 
We define the application $c\vartheta\in \overline{C}_{\mc L}$ of $\vartheta$ to $c$ as $c\vartheta=c$. 
Let $Q x\, a\in \mi{QAtom}_{\mc L}$.
$\vartheta$ is applicable to $Q x\, a$ 
iff $\mi{dom}(\vartheta)\supseteq \mi{freevars}(Q x\, a)$ and $x\not\in \mi{range}(\vartheta|_{\mi{freevars}(Q x\, a)})$.
Let $\vartheta$ be applicable to $Q x\, a$.
We define the application $(Q x\, a)\vartheta\in \mi{QAtom}_{\mc L}$ of $\vartheta$ to $Q x\, a$ 
as $(Q x\, a)\vartheta=Q x\, a(\vartheta|_{\mi{freevars}(Q x\, a)}\cup x/x)$,
$\mi{vars}(a)=\mi{freevars}(Q x\, a)\cup \{x\}$,
$x\not\in \mi{freevars}(Q x\, a)$,
$\mi{dom}(\vartheta|_{\mi{freevars}(Q x\, a)}\cup x/x)=\mi{freevars}(Q x\, a)\cup \{x\}=\mi{vars}(a)$,
$\mi{vars}(a(\vartheta|_{\mi{freevars}(Q x\, a)}\cup x/x))=\mi{range}(\vartheta|_{\mi{freevars}(Q x\, a)})\cup \{x\}$,
$\mi{freevars}((Q x\, a)\vartheta)=\mi{range}(\vartheta|_{\mi{freevars}(Q x\, a)})$,
$\mi{boundvars}((Q x\, a)\vartheta)=\{x\}$.
Let $\varepsilon_1\diamond \varepsilon_2\in \mi{OrdLit}_{\mc L}$.
$\vartheta$ is applicable to $\varepsilon_1\diamond \varepsilon_2$ iff, for both $i$, $\vartheta$ is applicable to $\varepsilon_i$.
Let $\vartheta$ be applicable to $\varepsilon_1\diamond \varepsilon_2$.
We define the application $(\varepsilon_1\diamond \varepsilon_2)\vartheta\in \mi{OrdLit}_{\mc L}$ 
of $\vartheta$ to $\varepsilon_1\diamond \varepsilon_2$ 
as $(\varepsilon_1\diamond \varepsilon_2)\vartheta=\varepsilon_1\vartheta\diamond \varepsilon_2\vartheta$. 
Let $E\subseteq \mbb{E}$, 
either $\mbb{E}=\mi{Term}_{\mc L}$ or $\mbb{E}=\mi{Atom}_{\mc L}$ or $\mbb{E}=\overline{C}_{\mc L}$ or 
$\mbb{E}=\mi{QAtom}_{\mc L}$ or $\mbb{E}=\mi{OrdLit}_{\mc L}$ or $\mbb{E}=\mi{OrdCl}_{\mc L}$.
$\vartheta$ is applicable to $E$ iff, for all $\varepsilon\in E$, $\vartheta$ is applicable to $\varepsilon$.
Let $\vartheta$ be applicable to $E$.
We define the application $E\vartheta\subseteq \mbb{E}$ of $\vartheta$ to $E$ as $E\vartheta=\{\varepsilon\vartheta \,|\, \varepsilon\in E\}$.
Let $\varepsilon, \varepsilon'\in \mbb{E}$. 
$\varepsilon'$ is an instance of $\varepsilon$ iff there exists $\vartheta^*\in \mi{Subst}_{\mc L}$ such that
$\varepsilon'=\varepsilon\vartheta^*$.
$\varepsilon'$ is a variant of $\varepsilon$ iff there exists a variable renaming $\rho^*\in \mi{Subst}_{\mc L}$ such that 
$\varepsilon'=\varepsilon\rho^*$.
Let $C\in \mi{OrdCl}_{\mc L}$ and $S\subseteq \mi{OrdCl}_{\mc L}$.
$C$ is an instance, a variant of $S$ iff there exists $C^*\in S$ such that $C$ is an instance, a variant of $C^*$.
We denote $\mi{Inst}(S)=\{C \,|\, C\in \mi{OrdCl}_{\mc L}\ \text{\it is an instance of}\ S\}$ and
$\mi{Vrnt}(S)=\{C \,|\, C\in \mi{OrdCl}_{\mc L}\ \text{\it is a variant of}\ S\}$.

$\vartheta$ is a unifier for $E$ iff $E\vartheta$ is a singleton set.
Note that there does not exist a unifier for $\emptyset$.
Let $\theta\in \mi{Subst}_{\mc L}$. 
$\theta$ is a most general unifier for $E$ iff $\theta$ is a unifier for $E$, and
for every unifier $\vartheta$ for $E$, there exists $\gamma^*\in \mi{Subst}_{\mc L}$ such that 
$\vartheta|_{\mi{freevars}(E)}=\theta|_{\mi{freevars}(E)}\circ \gamma^*$.
By $\mi{mgu}(E)\subseteq \mi{Subst}_{\mc L}$ we denote the set of all most general unifiers for $E$.
Let $\overline{E}=E_0,\dots,E_n$, $E_i\subseteq \mbb{E}_i$,
either $\mbb{E}_i=\mi{Term}_{\mc L}$ or $\mbb{E}_i=\mi{Atom}_{\mc L}$ or $\mbb{E}_i=\overline{C}_{\mc L}$ or
$\mbb{E}_i=\mi{QAtom}_{\mc L}$ or $\mbb{E}_i=\mi{OrdLit}_{\mc L}$.
$\vartheta$ is applicable to $\overline{E}$ iff, for all $i\leq n$, $\vartheta$ is applicable to $E_i$.
Let $\vartheta$ be applicable to $\overline{E}$.
We define the application $\overline{E}\vartheta$ of $\vartheta$ to $\overline{E}$ as $\overline{E}\vartheta=E_0\vartheta,\dots,E_n\vartheta$, 
$E_i\vartheta\subseteq \mbb{E}_i$.
Note that if $\vartheta$ is applicable to $\overline{E}$, then $\mi{dom}(\vartheta)\supseteq \mi{freevars}(\overline{E})$.
$\vartheta$ is a unifier for $\overline{E}$ iff, for all $i\leq n$, $\vartheta$ is a unifier for $E_i$.
Note that if there exists $i^*\leq n$ and $E_{i^*}=\emptyset$, then there does not exist a unifier for $\overline{E}$.
$\theta$ is a most general unifier for $\overline{E}$ iff $\theta$ is a unifier for $\overline{E}$, and
for every unifier $\vartheta$ for $\overline{E}$, there exists $\gamma^*\in \mi{Subst}_{\mc L}$ such that
$\vartheta|_{\mi{freevars}(\overline{E})}=\theta|_{\mi{freevars}(\overline{E})}\circ \gamma^*$.
By $\mi{mgu}(\overline{E})\subseteq \mi{Subst}_{\mc L}$ we denote the set of all most general unifiers for $\overline{E}$.

\begin{theorem}[Unification Theorem]
\label{T0}
Let $\overline{E}=E_0,\dots,E_n$, $E_i\subseteq_{\mc F} \mbb{E}_i$,
either $\mbb{E}_i=\mi{Term}_{\mc L}$ or $\mbb{E}_i=\mi{Atom}_{\mc L}$ or $\mbb{E}_i=\overline{C}_{\mc L}$.
If there exists a unifier for $\overline{E}$, then there exists $\theta^*\in \mi{mgu}(\overline{E})$ such that
$\mi{range}(\theta^*|_{\mi{vars}(\overline{E})})\subseteq \mi{vars}(\overline{E})$.
\end{theorem}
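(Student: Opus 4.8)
The plan is to reduce the simultaneous unification of the family $\overline{E}=E_0,\dots,E_n$ to the solution of a single finite system of elementary equations, and then to run a standard transformation-based unification procedure on that system. First I would observe that, since a unifier exists, every $E_i$ is nonempty, and that a substitution $\vartheta$ unifies $\overline{E}$ exactly when, for every $i\leq n$ and every pair $\varepsilon,\varepsilon'\in E_i$, we have $\varepsilon\vartheta=\varepsilon'\vartheta$. Fixing a representative $\varepsilon_0^i\in E_i$, this is equivalent to solving the finite set of equations $\mc{E}=\{\varepsilon_0^i\doteq \varepsilon \mid i\leq n,\ \varepsilon\in E_i\}$, for which $\mi{vars}(\mc{E})=\mi{vars}(\overline{E})$. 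Because each $E_i$ is drawn from a single $\mbb{E}_i\in\{\mi{Term}_{\mc L},\mi{Atom}_{\mc L},\overline{C}_{\mc L}\}$, the two sides of every equation are of the same syntactic kind, so the usual decomposition applies uniformly: an equation between truth constants is solvable only if the two constants coincide, and an equation $p(t_1,\dots,t_k)\doteq q(s_1,\dots,s_{k'})$ between atoms (or a function application between terms) is solvable only if the head symbols agree, in which case it decomposes into the componentwise equations $t_j\doteq s_j$.

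Next I would process $\mc{E}$ by the standard transformation rules: delete trivial equations $t\doteq t$; decompose equations with equal head symbols as above; fail on a clash of head symbols or of distinct truth constants; and, for an equation $x\doteq t$ with $x\notin\mi{vars}(t)$ (the occurs-check), eliminate $x$ by applying $x/t$ to the remaining equations and recording $x/t$ in the accumulated substitution. Termination follows from a standard well-founded measure (lexicographically, the number of unsolved variables and the total size of the system), and correctness of each individual rule---that it preserves the set of all unifiers of the current system---is the routine invariant. If any rule fails, $\mc{E}$ has no unifier, contradicting the hypothesis; hence the procedure terminates in solved form and returns a substitution $\theta^*$.

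Finally I would read off from the solved form that $\theta^*$ is a most general unifier. The key invariant to maintain throughout the run is that the variable set never grows: every recorded binding $x/t$ satisfies $\mi{vars}(t)\subseteq\mi{vars}(\mc{E})=\mi{vars}(\overline{E})$, while decomposition and deletion introduce no new symbols. This yields $\mi{dom}(\theta^*)\cup\mi{range}(\theta^*)\subseteq\mi{vars}(\overline{E})$, whence $\mi{range}(\theta^*|_{\mi{vars}(\overline{E})})\subseteq\mi{vars}(\overline{E})$, which is the required range condition; and it yields $\theta^*\in\mi{mgu}(\overline{E})$ via the standard factorisation lemma, noting that here $\mi{freevars}(\overline{E})=\mi{vars}(\overline{E})$ since terms, atoms, and truth constants carry no quantifiers. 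I expect the main obstacle to be coupling these two conclusions: one must verify, simultaneously with the most-generality argument (that every unifier $\vartheta$ factors as $\vartheta|_{\mi{vars}(\overline{E})}=\theta^*|_{\mi{vars}(\overline{E})}\circ\gamma^*$), that the occurs-check and variable-elimination steps never force a symbol outside $\mi{vars}(\overline{E})$ into the range, so that genericity is achieved without enlarging the variable inventory.
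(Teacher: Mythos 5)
Your proposal is correct and is essentially the paper's own argument: the paper proves this by induction on $\|\mi{vars}(\overline{E})\|$ as "a modification of the proof of Theorem 2.3 (Unification Theorem) in Apt", i.e.\ exactly the variable-elimination/decomposition procedure you describe, with your lexicographic measure (number of unsolved variables first) playing the role of that induction. Your observation that no transformation step introduces a variable outside $\mi{vars}(\overline{E})$ is precisely the extra invariant needed for the range condition, so nothing is missing.
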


\begin{proof} 
By induction on $\|\mi{vars}(\overline{E})\|$;
a modification of the proof of Theorem 2.3 (Unification Theorem) in \cite{Apt86}.
\qed
\end{proof}

\begin{theorem}[Extended Unification Theorem]
\label{T00}
Let $\overline{E}=E_0,\dots,E_n$, $E_i\subseteq_{\mc F} \mbb{E}_i$,
either $\mbb{E}_i=\mi{Term}_{\mc L}$ or $\mbb{E}_i=\mi{Atom}_{\mc L}$ or $\mbb{E}_i=\overline{C}_{\mc L}$ or
$\mbb{E}_i=\mi{QAtom}_{\mc L}$ or $\mbb{E}_i=\mi{OrdLit}_{\mc L}$, and 
$\mi{boundvars}(\overline{E})\subseteq V\subseteq_{\mc F} \mi{Var}_{\mc L}$.
If there exists a unifier for $\overline{E}$, then there exists $\theta^*\in \mi{mgu}(\overline{E})$ such that
$\mi{range}(\theta^*|_{\mi{freevars}(\overline{E})})\cap V=\emptyset$.
\end{theorem}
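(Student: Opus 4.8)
The plan is to reduce the statement to the basic Unification Theorem (Theorem~\ref{T0}), which already settles the cases $\mbb{E}_i\in\{\mi{Term}_{\mc L},\mi{Atom}_{\mc L},\overline{C}_{\mc L}\}$, by encoding quantified atoms and order literals as ordinary atoms and terms over a suitable expansion of $\mc L$. First I would record the structural consequences of the existence of a unifier $\vartheta$ for $\overline{E}$: for every $E_i$ consisting of order literals the connectives $\diamond_i$ must coincide, and for every $E_i$ consisting of quantified atoms the quantifiers must coincide and---since $(Q x\, a)\vartheta=Q x\, a(\vartheta|_{\mi{freevars}(Q x\, a)}\cup x/x)$ keeps the quantifier and the bound variable in place---the bound variables must coincide as well; otherwise $E_i\vartheta$ could not be a singleton. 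I would also standardise bound variables apart from $\mi{freevars}(\overline{E})$, so that the two classes of names are disjoint.

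Next I would build a reduced family $\overline{E}^\flat$ over the expansion ${\mc L}^+={\mc L}\cup\{c_x \,|\, x\in V\}$ by fresh constant symbols $c_x$: each order literal $\varepsilon_1\diamond\varepsilon_2$ is split into its two components (orienting the commutative $\geql$-literals according to the witness unifier $\vartheta$, so that left components unify among themselves and right components among themselves), and each quantified atom $Q x\, a$ is replaced by the plain atom $a(x/c_x)$. The result consists solely of finite sets of terms, atoms and truth constants, and $\mi{vars}(\overline{E}^\flat)=\mi{freevars}(\overline{E})$. The bookkeeping lemma I would prove is that a substitution whose range avoids $V$ unifies $\overline{E}$ if and only if it unifies $\overline{E}^\flat$; the nontrivial direction uses the applicability condition $x\notin\mi{range}(\vartheta|_{\mi{freevars}(Q x\, a)})$ to let the replacement $x\mapsto c_x$ commute with $\vartheta$, so that $a(x/c_x)\vartheta$ and $(Q x\, a)\vartheta$ carry the same unification content. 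In particular the given unifier for $\overline{E}$ induces a unifier for $\overline{E}^\flat$.

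I would then invoke Theorem~\ref{T0} on $\overline{E}^\flat$ to obtain $\theta_0\in\mi{mgu}(\overline{E}^\flat)$ with $\mi{range}(\theta_0|_{\mi{vars}(\overline{E}^\flat)})\subseteq\mi{vars}(\overline{E}^\flat)=\mi{freevars}(\overline{E})$, and lift it back: since its range contains neither any $c_x$ nor, after standardisation, any bound variable, $\theta_0$ is applicable to every quantified atom of $\overline{E}$ and unifies $\overline{E}$, while most-generality transfers along the bookkeeping lemma. Finally, to secure $\mi{range}(\theta^*|_{\mi{freevars}(\overline{E})})\cap V=\emptyset$, I would post-compose with a variable renaming $\rho$ sending the finitely many variables of $\mi{freevars}(\overline{E})\cap V$ to fresh variables outside $V$ and fixing the rest, and put $\theta^*=\theta_0\circ\rho$; composing an mgu with a renaming again yields an mgu, applicability is preserved, and the range now misses $V$.

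The main obstacle I anticipate is the bound-variable bookkeeping of the second paragraph: making the fresh-constant encoding faithful requires a careful application of the two applicability conditions (that bound variables never enter the range of the free part of a unifier, and that---after standardisation---encoded bound variables cannot be confused with surviving free variables), and the commutativity identification of $\geql$ forces the decomposition of order literals to be guided by the assumed unifier rather than chosen purely syntactically. Everything else is the routine induction on $\|\mi{vars}(\overline{E})\|$ already packaged inside Theorem~\ref{T0}.
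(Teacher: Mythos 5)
The paper's own proof of Theorem~\ref{T00} is a single sentence---``a straightforward generalisation of Theorem~\ref{T0}''---and your argument is a correct, fully worked-out instance of exactly that reduction: encode quantified atoms and order literals as plain atoms and terms, invoke Theorem~\ref{T0}, and post-compose with a variable renaming so that the range of the restricted mgu misses $V$ (which, since $\mi{boundvars}(\overline{E})\subseteq V$, is also precisely what keeps $\theta^*$ applicable to the quantified atoms). The two delicate points you single out---the commutative identification of $\geql$-literals and the bound-variable bookkeeping (including undoing the standardisation, which is harmless because application of a substitution to $Q x\, a$ touches only its free variables)---are real, and your treatment of them is adequate at, indeed well beyond, the level of detail the paper itself supplies.
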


\begin{proof}
A straightforward generalisation of Theorem \ref{T0}.
\qed
\end{proof}

\begin{table*}[p]
\vspace{-6mm}
\caption{Translation of the formula $\phi_r$ to clausal form}\label{tab11}
\vspace{-6mm}
\centering
\begin{minipage}[t]{\linewidth-10mm}
\footnotesize
\begin{IEEEeqnarray*}{LR}
\hline \hline \\[2mm]
\Big\{
  \tilde{p}_{0,0}(\tau,x)\gle \gu,
& \\
\phantom{\Big\{}
  \tilde{p}_{0,0}(\tau,x)\leftrightarrow
  \exists \tau\, (\underbrace{\mi{time}(\tau)\wedge
                              \forall x\, (\mi{uni}(x)\rightarrow \tilde{H}_{X_3}(\tau,x)\geql \tilde{G}_{\mi{positive}_{\dot{t}}}(x))\wedge
                              \forall x\, (\mi{uni}(x)\rightarrow \tilde{H}_{X_5}(\tau,x)\geql \tilde{G}_{\mi{negative}_{\dot{r}}}(x))}_{\tilde{p}_{0,1}(\tau,x)})\Big\}
& \quad (\ref{eq0rr6+}) \\
\Big\{
  \tilde{p}_{0,0}(\tau,x)\gle \gu,
  \tilde{p}_{0,0}(\tau,x)\geql \exists \tau\, \tilde{p}_{0,1}(\tau,x),
& \\
\phantom{\Big\{}  
  \tilde{p}_{0,1}(\tau,x)\leftrightarrow \big(\underbrace{\mi{time}(\tau)}_{\tilde{p}_{0,2}(\tau,x)}\wedge
                                              \big(\underbrace{\forall x\, (\mi{uni}(x)\rightarrow \tilde{H}_{X_3}(\tau,x)\geql \tilde{G}_{\mi{positive}_{\dot{t}}}(x))\wedge
                                                               \forall x\, (\mi{uni}(x)\rightarrow \tilde{H}_{X_5}(\tau,x)\geql \tilde{G}_{\mi{negative}_{\dot{r}}}(x))}_{\tilde{p}_{0,3}(\tau,x)}\big)\big)\Big\}
& \quad (\ref{eq0rr1+}) \\
\Big\{
  \tilde{p}_{0,0}(\tau,x)\gle \gu,
  \tilde{p}_{0,0}(\tau,x)\geql \exists \tau\, \tilde{p}_{0,1}(\tau,x),
& \\
\phantom{\Big\{}
  \tilde{p}_{0,2}(\tau,x)\gle \tilde{p}_{0,3}(\tau,x)\vee \tilde{p}_{0,2}(\tau,x)\geql \tilde{p}_{0,3}(\tau,x)\vee \tilde{p}_{0,1}(\tau,x)\geql \tilde{p}_{0,3}(\tau,x),
  \tilde{p}_{0,3}(\tau,x)\gle \tilde{p}_{0,2}(\tau,x)\vee \tilde{p}_{0,1}(\tau,x)\geql \tilde{p}_{0,2}(\tau,x),
& \\
\phantom{\Big\{}
  \tilde{p}_{0,2}(\tau,x)\geql \mi{time}(\tau),
  \tilde{p}_{0,3}(\tau,x)\leftrightarrow \big(\underbrace{\forall x\, (\mi{uni}(x)\rightarrow \tilde{H}_{X_3}(\tau,x)\geql \tilde{G}_{\mi{positive}_{\dot{t}}}(x))}_{\tilde{p}_{0,4}(\tau,x)}\wedge
                                              \underbrace{\forall x\, (\mi{uni}(x)\rightarrow \tilde{H}_{X_5}(\tau,x)\geql \tilde{G}_{\mi{negative}_{\dot{r}}}(x))}_{\tilde{p}_{0,5}(\tau,x)}\big)\Big\}                                              
& \quad (\ref{eq0rr1+}) \\
\Big\{
  \tilde{p}_{0,0}(\tau,x)\gle \gu,
  \tilde{p}_{0,0}(\tau,x)\geql \exists \tau\, \tilde{p}_{0,1}(\tau,x),
& \\
\phantom{\Big\{}
  \tilde{p}_{0,2}(\tau,x)\gle \tilde{p}_{0,3}(\tau,x)\vee \tilde{p}_{0,2}(\tau,x)\geql \tilde{p}_{0,3}(\tau,x)\vee \tilde{p}_{0,1}(\tau,x)\geql \tilde{p}_{0,3}(\tau,x),
  \tilde{p}_{0,3}(\tau,x)\gle \tilde{p}_{0,2}(\tau,x)\vee \tilde{p}_{0,1}(\tau,x)\geql \tilde{p}_{0,2}(\tau,x),
& \\
\phantom{\Big\{}
  \tilde{p}_{0,2}(\tau,x)\geql \mi{time}(\tau),
& \\
\phantom{\Big\{}
  \tilde{p}_{0,4}(\tau,x)\gle \tilde{p}_{0,5}(\tau,x)\vee \tilde{p}_{0,4}(\tau,x)\geql \tilde{p}_{0,5}(\tau,x)\vee \tilde{p}_{0,3}(\tau,x)\geql \tilde{p}_{0,5}(\tau,x),
  \tilde{p}_{0,5}(\tau,x)\gle \tilde{p}_{0,4}(\tau,x)\vee \tilde{p}_{0,3}(\tau,x)\geql \tilde{p}_{0,4}(\tau,x),
& \\
\phantom{\Big\{}
  \tilde{p}_{0,4}(\tau,x)\leftrightarrow \forall x\, (\underbrace{\mi{uni}(x)\rightarrow \tilde{H}_{X_3}(\tau,x)\geql \tilde{G}_{\mi{positive}_{\dot{t}}}(x)}_{\tilde{p}_{0,6}(\tau,x)}),
  \tilde{p}_{0,5}(\tau,x)\leftrightarrow \forall x\, (\underbrace{\mi{uni}(x)\rightarrow \tilde{H}_{X_5}(\tau,x)\geql \tilde{G}_{\mi{negative}_{\dot{r}}}(x)}_{\tilde{p}_{0,7}(\tau,x)})\Big\}
& \quad (\ref{eq0rr5+}) \\
\Big\{
  \tilde{p}_{0,0}(\tau,x)\gle \gu,
  \tilde{p}_{0,0}(\tau,x)\geql \exists \tau\, \tilde{p}_{0,1}(\tau,x),
& \\
\phantom{\Big\{}
  \tilde{p}_{0,2}(\tau,x)\gle \tilde{p}_{0,3}(\tau,x)\vee \tilde{p}_{0,2}(\tau,x)\geql \tilde{p}_{0,3}(\tau,x)\vee \tilde{p}_{0,1}(\tau,x)\geql \tilde{p}_{0,3}(\tau,x),
  \tilde{p}_{0,3}(\tau,x)\gle \tilde{p}_{0,2}(\tau,x)\vee \tilde{p}_{0,1}(\tau,x)\geql \tilde{p}_{0,2}(\tau,x),
& \\
\phantom{\Big\{}
  \tilde{p}_{0,2}(\tau,x)\geql \mi{time}(\tau),
& \\
\phantom{\Big\{}
  \tilde{p}_{0,4}(\tau,x)\gle \tilde{p}_{0,5}(\tau,x)\vee \tilde{p}_{0,4}(\tau,x)\geql \tilde{p}_{0,5}(\tau,x)\vee \tilde{p}_{0,3}(\tau,x)\geql \tilde{p}_{0,5}(\tau,x),
  \tilde{p}_{0,5}(\tau,x)\gle \tilde{p}_{0,4}(\tau,x)\vee \tilde{p}_{0,3}(\tau,x)\geql \tilde{p}_{0,4}(\tau,x),
& \\
\phantom{\Big\{}
  \tilde{p}_{0,4}(\tau,x)\geql \forall x\, \tilde{p}_{0,6}(\tau,x),
  \tilde{p}_{0,6}(\tau,x)\leftrightarrow \big(\underbrace{\mi{uni}(x)}_{\tilde{p}_{0,8}(\tau,x)}\rightarrow \underbrace{\tilde{H}_{X_3}(\tau,x)\geql \tilde{G}_{\mi{positive}_{\dot{t}}}(x)}_{\tilde{p}_{0,9}(\tau,x)}\big),
& \\
\phantom{\Big\{}
  \tilde{p}_{0,5}(\tau,x)\geql \forall x\, \tilde{p}_{0,7}(\tau,x),
  \tilde{p}_{0,7}(\tau,x)\leftrightarrow \big(\underbrace{\mi{uni}(x)}_{\tilde{p}_{0,10}(\tau,x)}\rightarrow \underbrace{\tilde{H}_{X_5}(\tau,x)\geql \tilde{G}_{\mi{negative}_{\dot{r}}}(x)}_{\tilde{p}_{0,11}(\tau,x)})\Big\}
& \quad (\ref{eq0rr3+}) \\
\Big\{
  \tilde{p}_{0,0}(\tau,x)\gle \gu,
  \tilde{p}_{0,0}(\tau,x)\geql \exists \tau\, \tilde{p}_{0,1}(\tau,x),
& \\
\phantom{\Big\{}
  \tilde{p}_{0,2}(\tau,x)\gle \tilde{p}_{0,3}(\tau,x)\vee \tilde{p}_{0,2}(\tau,x)\geql \tilde{p}_{0,3}(\tau,x)\vee \tilde{p}_{0,1}(\tau,x)\geql \tilde{p}_{0,3}(\tau,x),
  \tilde{p}_{0,3}(\tau,x)\gle \tilde{p}_{0,2}(\tau,x)\vee \tilde{p}_{0,1}(\tau,x)\geql \tilde{p}_{0,2}(\tau,x),
& \\
\phantom{\Big\{}
  \tilde{p}_{0,2}(\tau,x)\geql \mi{time}(\tau),
& \\
\phantom{\Big\{}
  \tilde{p}_{0,4}(\tau,x)\gle \tilde{p}_{0,5}(\tau,x)\vee \tilde{p}_{0,4}(\tau,x)\geql \tilde{p}_{0,5}(\tau,x)\vee \tilde{p}_{0,3}(\tau,x)\geql \tilde{p}_{0,5}(\tau,x),
  \tilde{p}_{0,5}(\tau,x)\gle \tilde{p}_{0,4}(\tau,x)\vee \tilde{p}_{0,3}(\tau,x)\geql \tilde{p}_{0,4}(\tau,x),
& \\
\phantom{\Big\{}
  \tilde{p}_{0,4}(\tau,x)\geql \forall x\, \tilde{p}_{0,6}(\tau,x),
& \\
\phantom{\Big\{}
  \tilde{p}_{0,8}(\tau,x)\gle \tilde{p}_{0,9}(\tau,x)\vee \tilde{p}_{0,8}(\tau,x)\geql \tilde{p}_{0,9}(\tau,x)\vee \tilde{p}_{0,6}(\tau,x)\geql \tilde{p}_{0,9}(\tau,x),  
  \tilde{p}_{0,9}(\tau,x)\gle \tilde{p}_{0,8}(\tau,x)\vee \tilde{p}_{0,6}(\tau,x)\geql \gu,
& \\
\phantom{\Big\{}
  \tilde{p}_{0,8}(\tau,x)\geql \mi{uni}(x),
  \tilde{p}_{0,9}(\tau,x)\leftrightarrow \underbrace{\tilde{H}_{X_3}(\tau,x)}_{\tilde{p}_{0,12}(\tau,x)}\geql \underbrace{\tilde{G}_{\mi{positive}_{\dot{t}}}(x)}_{\tilde{p}_{0,13}(\tau,x)},
& \\
\phantom{\Big\{}
  \tilde{p}_{0,5}(\tau,x)\geql \forall x\, \tilde{p}_{0,7}(\tau,x),
& \\
\phantom{\Big\{}
  \tilde{p}_{0,10}(\tau,x)\gle \tilde{p}_{0,11}(\tau,x)\vee \tilde{p}_{0,10}(\tau,x)\geql \tilde{p}_{0,11}(\tau,x)\vee \tilde{p}_{0,7}(\tau,x)\geql \tilde{p}_{0,11}(\tau,x),  
  \tilde{p}_{0,11}(\tau,x)\gle \tilde{p}_{0,10}(\tau,x)\vee \tilde{p}_{0,7}(\tau,x)\geql \gu,
& \\
\phantom{\Big\{}
  \tilde{p}_{0,10}(\tau,x)\geql \mi{uni}(x),
  \tilde{p}_{0,11}(\tau,x)\leftrightarrow \underbrace{\tilde{H}_{X_5}(\tau,x)}_{\tilde{p}_{0,14}(\tau,x)}\geql \underbrace{\tilde{G}_{\mi{negative}_{\dot{r}}}(x)}_{\tilde{p}_{0,15}(\tau,x)}\Big\}
& \quad (\ref{eq0rr7+}) \\[1mm]
\hline \hline
\end{IEEEeqnarray*}
\end{minipage}     
\vspace{-2mm}      
\end{table*}       
\begin{table*}[p]
\vspace{-6mm}
\caption{Translation of the formula $\phi_r$ to clausal form}\label{tab12}
\vspace{-6mm}
\centering
\begin{minipage}[t]{\linewidth-10mm}
\footnotesize
\begin{IEEEeqnarray*}{LR}
\hline \hline \\[2mm]
\Big\{
  \tilde{p}_{0,0}(\tau,x)\gle \gu,
  \tilde{p}_{0,0}(\tau,x)\geql \exists \tau\, \tilde{p}_{0,1}(\tau,x),
& \\
\phantom{\Big\{}
  \tilde{p}_{0,2}(\tau,x)\gle \tilde{p}_{0,3}(\tau,x)\vee \tilde{p}_{0,2}(\tau,x)\geql \tilde{p}_{0,3}(\tau,x)\vee \tilde{p}_{0,1}(\tau,x)\geql \tilde{p}_{0,3}(\tau,x),
  \tilde{p}_{0,3}(\tau,x)\gle \tilde{p}_{0,2}(\tau,x)\vee \tilde{p}_{0,1}(\tau,x)\geql \tilde{p}_{0,2}(\tau,x),
& \\
\phantom{\Big\{}
  \tilde{p}_{0,2}(\tau,x)\geql \mi{time}(\tau),
& \\
\phantom{\Big\{}
  \tilde{p}_{0,4}(\tau,x)\gle \tilde{p}_{0,5}(\tau,x)\vee \tilde{p}_{0,4}(\tau,x)\geql \tilde{p}_{0,5}(\tau,x)\vee \tilde{p}_{0,3}(\tau,x)\geql \tilde{p}_{0,5}(\tau,x),
  \tilde{p}_{0,5}(\tau,x)\gle \tilde{p}_{0,4}(\tau,x)\vee \tilde{p}_{0,3}(\tau,x)\geql \tilde{p}_{0,4}(\tau,x),
& \\
\phantom{\Big\{}
  \tilde{p}_{0,4}(\tau,x)\geql \forall x\, \tilde{p}_{0,6}(\tau,x),
& \\
\phantom{\Big\{}
  \tilde{p}_{0,8}(\tau,x)\gle \tilde{p}_{0,9}(\tau,x)\vee \tilde{p}_{0,8}(\tau,x)\geql \tilde{p}_{0,9}(\tau,x)\vee \tilde{p}_{0,6}(\tau,x)\geql \tilde{p}_{0,9}(\tau,x),  
  \tilde{p}_{0,9}(\tau,x)\gle \tilde{p}_{0,8}(\tau,x)\vee \tilde{p}_{0,6}(\tau,x)\geql \gu,
& \\
\phantom{\Big\{}
  \tilde{p}_{0,8}(\tau,x)\geql \mi{uni}(x),
  \tilde{p}_{0,12}(\tau,x)\geql \tilde{p}_{0,13}(\tau,x)\vee \tilde{p}_{0,9}(\tau,x)\geql \gz,
  \tilde{p}_{0,12}(\tau,x)\gle \tilde{p}_{0,13}(\tau,x)\vee \tilde{p}_{0,13}(\tau,x)\gle \tilde{p}_{0,12}(\tau,x)\vee \tilde{p}_{0,9}(\tau,x)\geql \gu,     
& \\
\phantom{\Big\{}
  \tilde{p}_{0,12}(\tau,x)\geql \tilde{H}_{X_3}(\tau,x),
  \tilde{p}_{0,13}(\tau,x)\geql \tilde{G}_{\mi{positive}_{\dot{t}}}(x),  
& \\
\phantom{\Big\{}
  \tilde{p}_{0,5}(\tau,x)\geql \forall x\, \tilde{p}_{0,7}(\tau,x),
& \\
\phantom{\Big\{}
  \tilde{p}_{0,10}(\tau,x)\gle \tilde{p}_{0,11}(\tau,x)\vee \tilde{p}_{0,10}(\tau,x)\geql \tilde{p}_{0,11}(\tau,x)\vee \tilde{p}_{0,7}(\tau,x)\geql \tilde{p}_{0,11}(\tau,x),  
  \tilde{p}_{0,11}(\tau,x)\gle \tilde{p}_{0,10}(\tau,x)\vee \tilde{p}_{0,7}(\tau,x)\geql \gu,
& \\
\phantom{\Big\{}
  \tilde{p}_{0,10}(\tau,x)\geql \mi{uni}(x),
  \tilde{p}_{0,14}(\tau,x)\geql \tilde{p}_{0,15}(\tau,x)\vee \tilde{p}_{0,11}(\tau,x)\geql \gz,
  \tilde{p}_{0,14}(\tau,x)\gle \tilde{p}_{0,15}(\tau,x)\vee \tilde{p}_{0,15}(\tau,x)\gle \tilde{p}_{0,14}(\tau,x)\vee \tilde{p}_{0,11}(\tau,x)\geql \gu,     
& \\
\phantom{\Big\{}
  \tilde{p}_{0,14}(\tau,x)\geql \tilde{H}_{X_5}(\tau,x),
  \tilde{p}_{0,15}(\tau,x)\geql \tilde{G}_{\mi{negative}_{\dot{r}}}(x)\Big\} \\[1mm]
\hline \hline
\end{IEEEeqnarray*}
\end{minipage}     
\vspace{-2mm}      
\end{table*}        
\begin{table*}[p]
\vspace{-6mm}
\caption{Translation of the theory $T_D$ to clausal form}\label{tab13}
\vspace{-6mm}
\centering
\begin{minipage}[t]{\linewidth}
\scriptsize
\begin{IEEEeqnarray*}{LL}
\hline \hline \\[2mm]
\Big\{
  \tilde{p}_{60,0}\geql \gu,
  \tilde{p}_{60,0}\leftrightarrow \mi{nat}(\tilde{z})\Big\}
& \\
\Big\{
  \tilde{p}_{60,0}\geql \gu, 
  \tilde{p}_{60,0}\geql \mi{nat}(\tilde{z})\Big\}
& \\[2mm]
\Big\{
  \tilde{p}_{61,0}(x)\geql \gu,
  \tilde{p}_{61,0}(x)\leftrightarrow \big(\underbrace{\mi{nat}(\tilde{s}(x))}_{\tilde{p}_{61,1}(x)}\leftrightarrow 
                                          \underbrace{\mi{nat}(x)}_{\tilde{p}_{61,2}(x)}\big)\Big\}
& \hspace{-4mm} (\ref{eq0rr33+}) \\
\Big\{
  \tilde{p}_{61,0}(x)\geql \gu,
  \tilde{p}_{61,1}(x)\gle \tilde{p}_{61,2}(x)\vee \tilde{p}_{61,1}(x)\geql \tilde{p}_{61,2}(x)\vee \tilde{p}_{61,0}(x)\geql \tilde{p}_{61,2}(x),
  \tilde{p}_{61,2}(x)\gle \tilde{p}_{61,1}(x)\vee \tilde{p}_{61,2}(x)\geql \tilde{p}_{61,1}(x)\vee \tilde{p}_{61,0}(x)\geql \tilde{p}_{61,1}(x),
& \\
\phantom{\Big\{}
  \tilde{p}_{61,1}(x)\gle \tilde{p}_{61,2}(x)\vee \tilde{p}_{61,2}(x)\gle \tilde{p}_{61,1}(x)\vee \tilde{p}_{61,0}(x)\geql \gu,
  \tilde{p}_{61,1}(x)\geql \mi{nat}(\tilde{s}(x)), 
  \tilde{p}_{61,2}(x)\geql \mi{nat}(x)\Big\}
& \\[2mm]
\Big\{
  \tilde{p}_{62,0}(x,y)\geql \gu,
  \tilde{p}_{62,0}(x,y)\leftrightarrow \big(\underbrace{\mi{rat}(\mi{frac}(x,\tilde{s}(y)))}_{\tilde{p}_{62,1}(x,y)}\leftrightarrow 
                                            \underbrace{\mi{nat}(x)\wedge \mi{nat}(y)}_{\tilde{p}_{62,2}(x,y)}\big)\Big\}
& \hspace{-4mm} (\ref{eq0rr33+}) \\
\Big\{
  \tilde{p}_{62,0}(x,y)\geql \gu,
  \tilde{p}_{62,1}(x,y)\gle \tilde{p}_{62,2}(x,y)\vee \tilde{p}_{62,1}(x,y)\geql \tilde{p}_{62,2}(x,y)\vee \tilde{p}_{62,0}(x,y)\geql \tilde{p}_{62,2}(x,y),
& \\
\phantom{\Big\{}
  \tilde{p}_{62,2}(x,y)\gle \tilde{p}_{62,1}(x,y)\vee \tilde{p}_{62,2}(x,y)\geql \tilde{p}_{62,1}(x,y)\vee \tilde{p}_{62,0}(x,y)\geql \tilde{p}_{62,1}(x,y),
  \tilde{p}_{62,1}(x,y)\gle \tilde{p}_{62,2}(x,y)\vee \tilde{p}_{62,2}(x,y)\gle \tilde{p}_{62,1}(x,y)\vee \tilde{p}_{62,0}(x,y)\geql \gu,
& \\
\phantom{\Big\{}
  \tilde{p}_{62,1}(x,y)\geql \mi{rat}(\mi{frac}(x,\tilde{s}(y))),
  \tilde{p}_{62,2}(x,y)\leftrightarrow \underbrace{\mi{nat}(x)}_{\tilde{p}_{62,3}(x,y)}\wedge \underbrace{\mi{nat}(y)}_{\tilde{p}_{62,4}(x,y)}\Big\}
& \hspace{-4mm} (\ref{eq0rr1+}) \\
\Big\{
  \tilde{p}_{62,0}(x,y)\geql \gu,
  \tilde{p}_{62,1}(x,y)\gle \tilde{p}_{62,2}(x,y)\vee \tilde{p}_{62,1}(x,y)\geql \tilde{p}_{62,2}(x,y)\vee \tilde{p}_{62,0}(x,y)\geql \tilde{p}_{62,2}(x,y),
& \\
\phantom{\Big\{}
  \tilde{p}_{62,2}(x,y)\gle \tilde{p}_{62,1}(x,y)\vee \tilde{p}_{62,2}(x,y)\geql \tilde{p}_{62,1}(x,y)\vee \tilde{p}_{62,0}(x,y)\geql \tilde{p}_{62,1}(x,y),
  \tilde{p}_{62,1}(x,y)\gle \tilde{p}_{62,2}(x,y)\vee \tilde{p}_{62,2}(x,y)\gle \tilde{p}_{62,1}(x,y)\vee \tilde{p}_{62,0}(x,y)\geql \gu,
& \\
\phantom{\Big\{}
  \tilde{p}_{62,1}(x,y)\geql \mi{rat}(\mi{frac}(x,\tilde{s}(y))),
  \tilde{p}_{62,3}(x,y)\gle \tilde{p}_{62,4}(x,y)\vee \tilde{p}_{62,3}(x,y)\geql \tilde{p}_{62,4}(x,y)\vee \tilde{p}_{62,2}(x,y)\geql \tilde{p}_{62,4}(x,y),
& \\
\phantom{\Big\{}
  \tilde{p}_{62,4}(x,y)\gle \tilde{p}_{62,3}(x,y)\vee \tilde{p}_{62,4}(x,y)\geql \tilde{p}_{62,3}(x,y)\vee \tilde{p}_{62,2}(x,y)\geql \tilde{p}_{62,3}(x,y),
  \tilde{p}_{62,3}(x,y)\geql \mi{nat}(x), \tilde{p}_{62,4}(x,y)\geql \mi{nat}(y)\Big\}
& \\[2mm]
\Big\{
  \tilde{p}_{63,0}(x,y)\geql \gu,
  \tilde{p}_{63,0}(x,y)\leftrightarrow \big(\underbrace{\mi{rat}(\mi{-frac}(\tilde{s}(x),\tilde{s}(y)))}_{\tilde{p}_{63,1}(x,y)}\leftrightarrow 
                                            \underbrace{\mi{nat}(x)\wedge \mi{nat}(y)}_{\tilde{p}_{63,2}(x,y)}\big)\Big\}
& \hspace{-4mm} (\ref{eq0rr33+}) \\
\Big\{
  \tilde{p}_{63,0}(x,y)\geql \gu,
  \tilde{p}_{63,1}(x,y)\gle \tilde{p}_{63,2}(x,y)\vee \tilde{p}_{63,1}(x,y)\geql \tilde{p}_{63,2}(x,y)\vee \tilde{p}_{63,0}(x,y)\geql \tilde{p}_{63,2}(x,y),
& \\
\phantom{\Big\{}
  \tilde{p}_{63,2}(x,y)\gle \tilde{p}_{63,1}(x,y)\vee \tilde{p}_{63,2}(x,y)\geql \tilde{p}_{63,1}(x,y)\vee \tilde{p}_{63,0}(x,y)\geql \tilde{p}_{63,1}(x,y),
  \tilde{p}_{63,1}(x,y)\gle \tilde{p}_{63,2}(x,y)\vee \tilde{p}_{63,2}(x,y)\gle \tilde{p}_{63,1}(x,y)\vee \tilde{p}_{63,0}(x,y)\geql \gu,
& \\
\phantom{\Big\{}
  \tilde{p}_{63,1}(x,y)\geql \mi{rat}(\mi{-frac}(\tilde{s}(x),\tilde{s}(y))),
  \tilde{p}_{63,2}(x,y)\leftrightarrow \underbrace{\mi{nat}(x)}_{\tilde{p}_{63,3}(x,y)}\wedge \underbrace{\mi{nat}(y)}_{\tilde{p}_{63,4}(x,y)}\Big\}
& \hspace{-4mm} (\ref{eq0rr1+}) \\
\Big\{
  \tilde{p}_{63,0}(x,y)\geql \gu,
  \tilde{p}_{63,1}(x,y)\gle \tilde{p}_{63,2}(x,y)\vee \tilde{p}_{63,1}(x,y)\geql \tilde{p}_{63,2}(x,y)\vee \tilde{p}_{63,0}(x,y)\geql \tilde{p}_{63,2}(x,y),
& \\
\phantom{\Big\{}
  \tilde{p}_{63,2}(x,y)\gle \tilde{p}_{63,1}(x,y)\vee \tilde{p}_{63,2}(x,y)\geql \tilde{p}_{63,1}(x,y)\vee \tilde{p}_{63,0}(x,y)\geql \tilde{p}_{63,1}(x,y),
  \tilde{p}_{63,1}(x,y)\gle \tilde{p}_{63,2}(x,y)\vee \tilde{p}_{63,2}(x,y)\gle \tilde{p}_{63,1}(x,y)\vee \tilde{p}_{63,0}(x,y)\geql \gu,
& \\
\phantom{\Big\{}
  \tilde{p}_{63,1}(x,y)\geql \mi{rat}(\mi{-frac}(\tilde{s}(x),\tilde{s}(y))),
  \tilde{p}_{63,3}(x,y)\gle \tilde{p}_{63,4}(x,y)\vee \tilde{p}_{63,3}(x,y)\geql \tilde{p}_{63,4}(x,y)\vee \tilde{p}_{63,2}(x,y)\geql \tilde{p}_{63,4}(x,y),
& \\
\phantom{\Big\{}
  \tilde{p}_{63,4}(x,y)\gle \tilde{p}_{63,3}(x,y)\vee \tilde{p}_{63,4}(x,y)\geql \tilde{p}_{63,3}(x,y)\vee \tilde{p}_{63,2}(x,y)\geql \tilde{p}_{63,3}(x,y),
  \tilde{p}_{63,3}(x,y)\geql \mi{nat}(x), \tilde{p}_{63,4}(x,y)\geql \mi{nat}(y)\Big\}
& \\[2mm]
\Big\{
  \mi{nat}(\mi{frac}(x,y))\geql \gz,
  \mi{nat}(\mi{-frac}(x,y))\geql \gz,
  \mi{rat}(\tilde{z})\geql \gz,
  \mi{rat}(\tilde{s}(x))\geql \gz\Big\}
& \\[2mm]
\Big\{
  \tilde{p}_{64,0}(x)\geql \gu,
  \tilde{p}_{64,0}(x)\leftrightarrow \big(\underbrace{\mi{time}(x)}_{\tilde{p}_{64,1}(x)}\leftrightarrow 
                                          \underbrace{\mi{nat}(x)}_{\tilde{p}_{64,2}(x)}\big)\Big\}
& \hspace{-4mm} (\ref{eq0rr33+}) \\
\Big\{
  \tilde{p}_{64,0}(x)\geql \gu,
  \tilde{p}_{64,1}(x)\gle \tilde{p}_{64,2}(x)\vee \tilde{p}_{64,1}(x)\geql \tilde{p}_{64,2}(x)\vee \tilde{p}_{64,0}(x)\geql \tilde{p}_{64,2}(x),
& \\
\phantom{\Big\{}
  \tilde{p}_{64,2}(x)\gle \tilde{p}_{64,1}(x)\vee \tilde{p}_{64,2}(x)\geql \tilde{p}_{64,1}(x)\vee \tilde{p}_{64,0}(x)\geql \tilde{p}_{64,1}(x),
  \tilde{p}_{64,1}(x)\gle \tilde{p}_{64,2}(x)\vee \tilde{p}_{64,2}(x)\gle \tilde{p}_{64,1}(x)\vee \tilde{p}_{64,0}(x)\geql \gu,
& \\
\phantom{\Big\{}  
  \tilde{p}_{64,1}(x)\geql \mi{time}(x), \tilde{p}_{64,2}(x)\geql \mi{nat}(x)\Big\}
& \\[2mm]
\Big\{
  \tilde{p}_{65,0}(x)\geql \gu,
  \tilde{p}_{65,0}(x)\leftrightarrow \big(\underbrace{\mi{uni}(x)}_{\tilde{p}_{65,1}(x)}\rightarrow
                                          \underbrace{\mi{rat}(x)}_{\tilde{p}_{65,2}(x)}\big)\Big\}
& \hspace{-4mm} (\ref{eq0rr3+}) \\
\Big\{
  \tilde{p}_{65,0}(x)\geql \gu,
  \tilde{p}_{65,1}(x)\gle \tilde{p}_{65,2}(x)\vee \tilde{p}_{65,1}(x)\geql \tilde{p}_{65,2}(x)\vee \tilde{p}_{65,0}(x)\geql \tilde{p}_{65,2}(x),
  \tilde{p}_{65,2}(x)\gle \tilde{p}_{65,1}(x)\vee \tilde{p}_{65,0}(x)\geql \gu,
& \\
\phantom{\Big\{}
  \tilde{p}_{65,1}(x)\geql \mi{uni}(x), \tilde{p}_{65,2}(x)\geql \mi{rat}(x)\Big\} \\[1mm]
\hline \hline
\end{IEEEeqnarray*}
\end{minipage}     
\vspace{-2mm}
\end{table*}
\begin{table*}[p]
\vspace{-6mm}
\caption{Clausal theory $S_D\cup S_U^+$}\label{tab14}
\vspace{-6mm}
\centering
\begin{minipage}[t]{\linewidth-70mm}
\scriptsize
\begin{IEEEeqnarray*}{LR}
\hline \hline \\[2mm]
\framebox{$\tilde{p}_{60,0}\geql \gu$}
& [1] \\
\framebox{$\tilde{p}_{60,0}\geql \mi{nat}(\tilde{z})$}
& [2] \\[2mm]
\framebox{$\tilde{p}_{61,0}(x)\geql \gu$}
& [3] \\
\tilde{p}_{61,1}(x)\gle \tilde{p}_{61,2}(x)\vee \tilde{p}_{61,1}(x)\geql \tilde{p}_{61,2}(x)\vee \tilde{p}_{61,0}(x)\geql \tilde{p}_{61,2}(x)
& [4] \\
\framebox{$\tilde{p}_{61,2}(x)\gle \tilde{p}_{61,1}(x)$}\vee \tilde{p}_{61,2}(x)\geql \tilde{p}_{61,1}(x)\vee \tilde{p}_{61,0}(x)\geql \tilde{p}_{61,1}(x)
& [5] \\
\tilde{p}_{61,1}(x)\gle \tilde{p}_{61,2}(x)\vee \tilde{p}_{61,2}(x)\gle \tilde{p}_{61,1}(x)\vee \tilde{p}_{61,0}(x)\geql \gu
& [6] \\  
\framebox{$\tilde{p}_{61,1}(x)\geql \mi{nat}(\tilde{s}(x))$}
& [7] \\  
\framebox{$\tilde{p}_{61,2}(x)\geql \mi{nat}(x)$}
& [8] \\[2mm]
\tilde{p}_{62,0}(x,y)\geql \gu
& [9] \\
\tilde{p}_{62,1}(x,y)\gle \tilde{p}_{62,2}(x,y)\vee \tilde{p}_{62,1}(x,y)\geql \tilde{p}_{62,2}(x,y)\vee \tilde{p}_{62,0}(x,y)\geql \tilde{p}_{62,2}(x,y)
& \qquad [10] \\
\tilde{p}_{62,2}(x,y)\gle \tilde{p}_{62,1}(x,y)\vee \tilde{p}_{62,2}(x,y)\geql \tilde{p}_{62,1}(x,y)\vee \tilde{p}_{62,0}(x,y)\geql \tilde{p}_{62,1}(x,y)
& [11] \\
\tilde{p}_{62,1}(x,y)\gle \tilde{p}_{62,2}(x,y)\vee \tilde{p}_{62,2}(x,y)\gle \tilde{p}_{62,1}(x,y)\vee \tilde{p}_{62,0}(x,y)\geql \gu
& [12] \\
\tilde{p}_{62,1}(x,y)\geql \mi{rat}(\mi{frac}(x,\tilde{s}(y)))
& [13] \\
\tilde{p}_{62,3}(x,y)\gle \tilde{p}_{62,4}(x,y)\vee \tilde{p}_{62,3}(x,y)\geql \tilde{p}_{62,4}(x,y)\vee \tilde{p}_{62,2}(x,y)\geql \tilde{p}_{62,4}(x,y)
& [14] \\
\tilde{p}_{62,4}(x,y)\gle \tilde{p}_{62,3}(x,y)\vee \tilde{p}_{62,4}(x,y)\geql \tilde{p}_{62,3}(x,y)\vee \tilde{p}_{62,2}(x,y)\geql \tilde{p}_{62,3}(x,y)
& [15] \\
\tilde{p}_{62,3}(x,y)\geql \mi{nat}(x)
& [16] \\
\tilde{p}_{62,4}(x,y)\geql \mi{nat}(y)
& [17] \\[2mm]
\tilde{p}_{63,0}(x,y)\geql \gu
& [18] \\
\tilde{p}_{63,1}(x,y)\gle \tilde{p}_{63,2}(x,y)\vee \tilde{p}_{63,1}(x,y)\geql \tilde{p}_{63,2}(x,y)\vee \tilde{p}_{63,0}(x,y)\geql \tilde{p}_{63,2}(x,y)
& [19] \\
\tilde{p}_{63,2}(x,y)\gle \tilde{p}_{63,1}(x,y)\vee \tilde{p}_{63,2}(x,y)\geql \tilde{p}_{63,1}(x,y)\vee \tilde{p}_{63,0}(x,y)\geql \tilde{p}_{63,1}(x,y)
& [20] \\
\tilde{p}_{63,1}(x,y)\gle \tilde{p}_{63,2}(x,y)\vee \tilde{p}_{63,2}(x,y)\gle \tilde{p}_{63,1}(x,y)\vee \tilde{p}_{63,0}(x,y)\geql \gu
& [21] \\
\tilde{p}_{63,1}(x,y)\geql \mi{rat}(\mi{-frac}(\tilde{s}(x),\tilde{s}(y)))
& [22] \\
\tilde{p}_{63,3}(x,y)\gle \tilde{p}_{63,4}(x,y)\vee \tilde{p}_{63,3}(x,y)\geql \tilde{p}_{63,4}(x,y)\vee \tilde{p}_{63,2}(x,y)\geql \tilde{p}_{63,4}(x,y)
& [23] \\
\tilde{p}_{63,4}(x,y)\gle \tilde{p}_{63,3}(x,y)\vee \tilde{p}_{63,4}(x,y)\geql \tilde{p}_{63,3}(x,y)\vee \tilde{p}_{63,2}(x,y)\geql \tilde{p}_{63,3}(x,y)
& [24] \\
\tilde{p}_{63,3}(x,y)\geql \mi{nat}(x) 
& [25] \\
\tilde{p}_{63,4}(x,y)\geql \mi{nat}(y)
& [26] \\[2mm]
\mi{nat}(\mi{frac}(x,y))\geql \gz
& [27] \\
\mi{nat}(\mi{-frac}(x,y))\geql \gz
& [28] \\
\mi{rat}(\tilde{z})\geql \gz
& [29] \\  
\mi{rat}(\tilde{s}(x))\geql \gz
& [30] \\[2mm]
\framebox{$\tilde{p}_{64,0}(x)\geql \gu$}
& [31] \\
\tilde{p}_{64,1}(x)\gle \tilde{p}_{64,2}(x)\vee \tilde{p}_{64,1}(x)\geql \tilde{p}_{64,2}(x)\vee \tilde{p}_{64,0}(x)\geql \tilde{p}_{64,2}(x)
& [32] \\
\framebox{$\tilde{p}_{64,2}(x)\gle \tilde{p}_{64,1}(x)$}\vee \tilde{p}_{64,2}(x)\geql \tilde{p}_{64,1}(x)\vee \tilde{p}_{64,0}(x)\geql \tilde{p}_{64,1}(x)
& [33] \\  
\tilde{p}_{64,1}(x)\gle \tilde{p}_{64,2}(x)\vee \tilde{p}_{64,2}(x)\gle \tilde{p}_{64,1}(x)\vee \tilde{p}_{64,0}(x)\geql \gu
& [34] \\  
\framebox{$\tilde{p}_{64,1}(x)\geql \mi{time}(x)$} 
& [35] \\
\framebox{$\tilde{p}_{64,2}(x)\geql \mi{nat}(x)$}
& [36] \\[2mm]
\tilde{p}_{65,0}(x)\geql \gu
& [37] \\
\tilde{p}_{65,1}(x)\gle \tilde{p}_{65,2}(x)\vee \tilde{p}_{65,1}(x)\geql \tilde{p}_{65,2}(x)\vee \tilde{p}_{65,0}(x)\geql \tilde{p}_{65,2}(x)
& [38] \\
\tilde{p}_{65,2}(x)\gle \tilde{p}_{65,1}(x)\vee \tilde{p}_{65,0}(x)\geql \gu
& [39] \\
\tilde{p}_{65,1}(x)\geql \mi{uni}(x) 
& [40] \\
\tilde{p}_{65,2}(x)\geql \mi{rat}(x)
& [41] \\[2mm]
\framebox{$\mi{uni}(\tilde{0})\geql \gu$}
& [42] \\
\framebox{$\mi{uni}(\tilde{1})\geql \gu$}
& [43] \\
\framebox{$\mi{uni}(\tilde{2})\geql \gu$}
& [44] \\
\framebox{$\mi{uni}(\tilde{3})\geql \gu$}
& [45] \\
\framebox{$\mi{uni}(\tilde{4})\geql \gu$}
& [46] \\[1mm]
\hline \hline
\end{IEEEeqnarray*}
\end{minipage}     
\vspace{-2mm}
\end{table*}
\begin{table*}[p]
\vspace{-6mm}
\caption{Clausal theory $S_\mbb{A}$}\label{tab15}
\vspace{-6mm}
\centering   
\begin{minipage}[t]{\linewidth-130mm}
\scriptsize
\begin{IEEEeqnarray*}{LR}
\hline \hline \\[2mm]
\framebox{$\tilde{G}_{\mi{low}_k}(\tilde{0})\geql \bar{1}$}
& [47k] \\
\tilde{G}_{\mi{low}_k}(\tilde{1})\geql \overline{0.5}
& [48k] \\
\tilde{G}_{\mi{low}_k}(\tilde{2})\geql \bar{0}
& [49k] \\
\tilde{G}_{\mi{low}_k}(\tilde{3})\geql \bar{0}
& [50k] \\
\tilde{G}_{\mi{low}_k}(\tilde{4})\geql \bar{0}
& [51k] \\[2mm]
\tilde{G}_{\mi{medium}_k}(\tilde{0})\geql \bar{0}
& [52k] \\
\tilde{G}_{\mi{medium}_k}(\tilde{1})\geql \overline{0.5}
& [53k] \\
\framebox{$\tilde{G}_{\mi{medium}_k}(\tilde{2})\geql \bar{1}$}
& [54k] \\
\tilde{G}_{\mi{medium}_k}(\tilde{3})\geql \overline{0.5}
& [55k] \\
\tilde{G}_{\mi{medium}_k}(\tilde{4})\geql \bar{0}
& [56k] \\[2mm]
\tilde{G}_{\mi{high}_k}(\tilde{0})\geql \bar{0}
& [57k] \\
\tilde{G}_{\mi{high}_k}(\tilde{1})\geql \bar{0}
& [58k] \\
\tilde{G}_{\mi{high}_k}(\tilde{2})\geql \bar{0}
& [59k] \\
\tilde{G}_{\mi{high}_k}(\tilde{3})\geql \overline{0.5}
& [60k] \\
\framebox{$\tilde{G}_{\mi{high}_k}(\tilde{4})\geql \bar{1}$}
& [61k] \\[2mm]
\tilde{G}_{\mi{negative}_{\dot{k}}}(\tilde{0})\geql \bar{1}
& [62k] \\
\tilde{G}_{\mi{negative}_{\dot{k}}}(\tilde{1})\geql \overline{0.5}
& [63k] \\
\tilde{G}_{\mi{negative}_{\dot{k}}}(\tilde{2})\geql \bar{0}
& [64k] \\
\tilde{G}_{\mi{negative}_{\dot{k}}}(\tilde{3})\geql \bar{0}
& [65k] \\  
\tilde{G}_{\mi{negative}_{\dot{k}}}(\tilde{4})\geql \bar{0}
& [66k] \\[2mm]
\tilde{G}_{\mi{zero}_{\dot{k}}}(\tilde{0})\geql \bar{0}
& [67k] \\
\tilde{G}_{\mi{zero}_{\dot{k}}}(\tilde{1})\geql \overline{0.5}
& [68k] \\
\tilde{G}_{\mi{zero}_{\dot{k}}}(\tilde{2})\geql \bar{1}
& [69k] \\
\tilde{G}_{\mi{zero}_{\dot{k}}}(\tilde{3})\geql \overline{0.5}
& [70k] \\
\tilde{G}_{\mi{zero}_{\dot{k}}}(\tilde{4})\geql \bar{0}
& [71k] \\[2mm]
\tilde{G}_{\mi{positive}_{\dot{k}}}(\tilde{0})\geql \bar{0}
& [72k] \\
\tilde{G}_{\mi{positive}_{\dot{k}}}(\tilde{1})\geql \bar{0}
& [73k] \\
\tilde{G}_{\mi{positive}_{\dot{k}}}(\tilde{2})\geql \bar{0}
& [74k] \\
\tilde{G}_{\mi{positive}_{\dot{k}}}(\tilde{3})\geql \overline{0.5}
& [75k] \\
\framebox{$\tilde{G}_{\mi{positive}_{\dot{k}}}(\tilde{4})\geql \bar{1}$}
& [76k] \\[2mm]
& \qquad k\in \{t,d,r\} \\[1mm]
\hline \hline
\end{IEEEeqnarray*}
\end{minipage}
\vspace{-2mm}
\end{table*}       
\begin{table*}[p]
\vspace{-6mm}
\caption{Clausal theory $S_B$}\label{tab16}
\vspace{-6mm}
\centering   
\begin{minipage}[t]{\linewidth-50mm}
\scriptsize
\begin{IEEEeqnarray*}{LR}
\hline \hline \\[2mm]
\framebox{$\tilde{p}_{1,0}(\tau,x,y)\geql \gu$}
& [77] \\
\framebox{$\tilde{p}_{1,1}(\tau,x,y)\gle \tilde{p}_{1,2}(\tau,x,y)\vee \tilde{p}_{1,1}(\tau,x,y)\geql \tilde{p}_{1,2}(\tau,x,y)\vee
           \tilde{p}_{1,0}(\tau,x,y)\geql \tilde{p}_{1,2}(\tau,x,y)$}
& [78] \\
\framebox{$\tilde{p}_{1,2}(\tau,x,y)\gle \tilde{p}_{1,1}(\tau,x,y)\vee \tilde{p}_{1,0}(\tau,x,y)\geql \gu$}
& [79] \\
\framebox{$\tilde{p}_{1,3}(\tau,x,y)\gle \tilde{p}_{1,4}(\tau,x,y)\vee \tilde{p}_{1,3}(\tau,x,y)\geql \tilde{p}_{1,4}(\tau,x,y)\vee
           \tilde{p}_{1,1}(\tau,x,y)\geql \tilde{p}_{1,4}(\tau,x,y)$}
& [80] \\
\framebox{$\tilde{p}_{1,4}(\tau,x,y)\gle \tilde{p}_{1,3}(\tau,x,y)\vee \tilde{p}_{1,1}(\tau,x,y)\geql \tilde{p}_{1,3}(\tau,x,y)$}
& [81] \\
\framebox{$\tilde{p}_{1,3}(\tau,x,y)\geql \mi{time}(\tau)$} 
& [82] \\
\framebox{$\tilde{p}_{1,4}(\tau,x,y)\geql \mi{uni}(y)$}
& [83] \\
\tilde{p}_{1,5}(\tau,x,y)\geql \tilde{p}_{1,6}(\tau,x,y)\vee \framebox{$\tilde{p}_{1,2}(\tau,x,y)\geql \gz$}
& [84] \\
\tilde{p}_{1,5}(\tau,x,y)\gle \tilde{p}_{1,6}(\tau,x,y)\vee \tilde{p}_{1,6}(\tau,x,y)\gle \tilde{p}_{1,5}(\tau,x,y)\vee
\tilde{p}_{1,2}(\tau,x,y)\geql \gu
& [85] \\
\framebox{$\tilde{p}_{1,5}(\tau,x,y)\geql \tilde{H}_{X_1}(\tilde{s}(\tau),y)$}
& [86] \\
\framebox{$\tilde{p}_{1,7}(\tau,x,y)\gle \tilde{p}_{1,8}(\tau,x,y)\vee \tilde{p}_{1,7}(\tau,x,y)\geql \tilde{p}_{1,8}(\tau,x,y)\vee
           \tilde{p}_{1,6}(\tau,x,y)\geql \tilde{p}_{1,8}(\tau,x,y)$}
& [87] \\
\framebox{$\tilde{p}_{1,8}(\tau,x,y)\gle \tilde{p}_{1,7}(\tau,x,y)\vee \tilde{p}_{1,6}(\tau,x,y)\geql \tilde{p}_{1,7}(\tau,x,y)$}
& [88] \\
\framebox{$\tilde{p}_{1,7}(\tau,x,y)\geql \exists x\, \tilde{p}_{1,9}(\tau,x,y)$}
& [89] \\
\tilde{p}_{1,10}(\tau,x,y)\gle \tilde{p}_{1,11}(\tau,x,y)\vee \tilde{p}_{1,10}(\tau,x,y)\geql \tilde{p}_{1,11}(\tau,x,y)\vee
\tilde{p}_{1,9}(\tau,x,y)\geql \tilde{p}_{1,11}(\tau,x,y)
& [90] \\
\framebox{$\tilde{p}_{1,11}(\tau,x,y)\gle \tilde{p}_{1,10}(\tau,x,y)$}\vee \tilde{p}_{1,9}(\tau,x,y)\geql \tilde{p}_{1,10}(\tau,x,y)
& [91] \\
\framebox{$\tilde{p}_{1,10}(\tau,x,y)\geql \mi{uni}(x)$}
& [92] \\
\tilde{p}_{1,12}(\tau,x,y)\gle \tilde{p}_{1,13}(\tau,x,y)\vee \tilde{p}_{1,12}(\tau,x,y)\geql \tilde{p}_{1,13}(\tau,x,y)\vee
\tilde{p}_{1,11}(\tau,x,y)\geql \tilde{p}_{1,13}(\tau,x,y)
& \qquad [93] \\
\framebox{$\tilde{p}_{1,13}(\tau,x,y)\gle \tilde{p}_{1,12}(\tau,x,y)$}\vee \tilde{p}_{1,11}(\tau,x,y)\geql \tilde{p}_{1,12}(\tau,x,y)
& [94] \\
\framebox{$\tilde{p}_{1,12}(\tau,x,y)\geql \tilde{H}_{X_0}(\tau,x)$}
& [95] \\
\framebox{$\tilde{p}_{1,13}(\tau,x,y)\geql \tilde{G}_{\mi{low}_t}(x)$}
& [96] \\
\framebox{$\tilde{p}_{1,8}(\tau,x,y)\geql \tilde{G}_{\mi{high}_d}(y)$}
& [97] \\[2mm]
\framebox{$\tilde{p}_{6,0}(\tau,x,y)\geql \gu$}
& [98] \\
\framebox{$\tilde{p}_{6,1}(\tau,x,y)\gle \tilde{p}_{6,2}(\tau,x,y)\vee \tilde{p}_{6,1}(\tau,x,y)\geql \tilde{p}_{6,2}(\tau,x,y)\vee
           \tilde{p}_{6,0}(\tau,x,y)\geql \tilde{p}_{6,2}(\tau,x,y)$}
& [99] \\
\framebox{$\tilde{p}_{6,2}(\tau,x,y)\gle \tilde{p}_{6,1}(\tau,x,y)\vee \tilde{p}_{6,0}(\tau,x,y)\geql \gu$}
& [100] \\
\framebox{$\tilde{p}_{6,3}(\tau,x,y)\gle \tilde{p}_{6,4}(\tau,x,y)\vee \tilde{p}_{6,3}(\tau,x,y)\geql \tilde{p}_{6,4}(\tau,x,y)\vee
           \tilde{p}_{6,1}(\tau,x,y)\geql \tilde{p}_{6,4}(\tau,x,y)$}
& [101] \\
\framebox{$\tilde{p}_{6,4}(\tau,x,y)\gle \tilde{p}_{6,3}(\tau,x,y)\vee \tilde{p}_{6,1}(\tau,x,y)\geql \tilde{p}_{6,3}(\tau,x,y)$}
& [102] \\
\framebox{$\tilde{p}_{6,3}(\tau,x,y)\geql \mi{time}(\tau)$} 
& [103] \\
\framebox{$\tilde{p}_{6,4}(\tau,x,y)\geql \mi{uni}(y)$}
& [104] \\
\tilde{p}_{6,5}(\tau,x,y)\geql \tilde{p}_{6,6}(\tau,x,y)\vee \framebox{$\tilde{p}_{6,2}(\tau,x,y)\geql \gz$}
& [105] \\
\tilde{p}_{6,5}(\tau,x,y)\gle \tilde{p}_{6,6}(\tau,x,y)\vee \tilde{p}_{6,6}(\tau,x,y)\gle \tilde{p}_{6,5}(\tau,x,y)\vee
\tilde{p}_{6,2}(\tau,x,y)\geql \gu
& [106] \\
\framebox{$\tilde{p}_{6,5}(\tau,x,y)\geql \tilde{H}_{X_3}(\tilde{s}(\tau),y)$}
& [107] \\
\framebox{$\tilde{p}_{6,7}(\tau,x,y)\gle \tilde{p}_{6,8}(\tau,x,y)\vee \tilde{p}_{6,7}(\tau,x,y)\geql \tilde{p}_{6,8}(\tau,x,y)\vee
           \tilde{p}_{6,6}(\tau,x,y)\geql \tilde{p}_{6,8}(\tau,x,y)$}
& [108] \\
\framebox{$\tilde{p}_{6,8}(\tau,x,y)\gle \tilde{p}_{6,7}(\tau,x,y)\vee \tilde{p}_{6,6}(\tau,x,y)\geql \tilde{p}_{6,7}(\tau,x,y)$}
& [109] \\
\framebox{$\tilde{p}_{6,7}(\tau,x,y)\geql \exists x\, \tilde{p}_{6,9}(\tau,x,y)$}
& [110] \\
\tilde{p}_{6,10}(\tau,x,y)\gle \tilde{p}_{6,11}(\tau,x,y)\vee \tilde{p}_{6,10}(\tau,x,y)\geql \tilde{p}_{6,11}(\tau,x,y)\vee
\tilde{p}_{6,9}(\tau,x,y)\geql \tilde{p}_{6,11}(\tau,x,y)
& [111] \\
\framebox{$\tilde{p}_{6,11}(\tau,x,y)\gle \tilde{p}_{6,10}(\tau,x,y)$}\vee \tilde{p}_{6,9}(\tau,x,y)\geql \tilde{p}_{6,10}(\tau,x,y)
& [112] \\
\framebox{$\tilde{p}_{6,10}(\tau,x,y)\geql \mi{uni}(x)$}
& [113] \\
\tilde{p}_{6,12}(\tau,x,y)\gle \tilde{p}_{6,13}(\tau,x,y)\vee \tilde{p}_{6,12}(\tau,x,y)\geql \tilde{p}_{6,13}(\tau,x,y)\vee
\tilde{p}_{6,11}(\tau,x,y)\geql \tilde{p}_{6,13}(\tau,x,y)
& [114] \\
\framebox{$\tilde{p}_{6,13}(\tau,x,y)\gle \tilde{p}_{6,12}(\tau,x,y)$}\vee \tilde{p}_{6,11}(\tau,x,y)\geql \tilde{p}_{6,12}(\tau,x,y)
& [115] \\
\framebox{$\tilde{p}_{6,12}(\tau,x,y)\geql \tilde{H}_{X_2}(\tau,x)$}
& [116] \\
\framebox{$\tilde{p}_{6,13}(\tau,x,y)\geql \tilde{G}_{\mi{high}_r}(x)$}
& [117] \\
\framebox{$\tilde{p}_{6,8}(\tau,x,y)\geql \tilde{G}_{\mi{positive}_{\dot{t}}}(y)$}
& [118] \\[1mm]
\hline \hline
\end{IEEEeqnarray*}
\end{minipage}
\vspace{-2mm}
\end{table*}       
\begin{table*}[p]
\vspace{-6mm}
\caption{Clausal theory $S_B$}\label{tab17}
\vspace{-6mm}
\centering
\begin{minipage}[t]{\linewidth-50mm}
\scriptsize
\begin{IEEEeqnarray*}{LR}
\hline \hline \\[2mm]
\framebox{$\tilde{p}_{8,0}(\tau,x,y)\geql \gu$}
& [119] \\
\framebox{$\tilde{p}_{8,1}(\tau,x,y)\gle \tilde{p}_{8,2}(\tau,x,y)\vee \tilde{p}_{8,1}(\tau,x,y)\geql \tilde{p}_{8,2}(\tau,x,y)\vee
           \tilde{p}_{8,0}(\tau,x,y)\geql \tilde{p}_{8,2}(\tau,x,y)$}
& [120] \\
\framebox{$\tilde{p}_{8,2}(\tau,x,y)\gle \tilde{p}_{8,1}(\tau,x,y)\vee \tilde{p}_{8,0}(\tau,x,y)\geql \gu$}
& [121] \\
\framebox{$\tilde{p}_{8,3}(\tau,x,y)\gle \tilde{p}_{8,4}(\tau,x,y)\vee \tilde{p}_{8,3}(\tau,x,y)\geql \tilde{p}_{8,4}(\tau,x,y)\vee
           \tilde{p}_{8,1}(\tau,x,y)\geql \tilde{p}_{8,4}(\tau,x,y)$}
& [122] \\
\framebox{$\tilde{p}_{8,4}(\tau,x,y)\gle \tilde{p}_{8,3}(\tau,x,y)\vee \tilde{p}_{8,1}(\tau,x,y)\geql \tilde{p}_{8,3}(\tau,x,y)$}
& [123] \\
\framebox{$\tilde{p}_{8,3}(\tau,x,y)\geql \mi{time}(\tau)$}
& [124] \\
\framebox{$\tilde{p}_{8,4}(\tau,x,y)\geql \mi{uni}(y)$}
& [125] \\
\tilde{p}_{8,5}(\tau,x,y)\geql \tilde{p}_{8,6}(\tau,x,y)\vee \framebox{$\tilde{p}_{8,2}(\tau,x,y)\geql \gz$}
& [126] \\
\tilde{p}_{8,5}(\tau,x,y)\gle \tilde{p}_{8,6}(\tau,x,y)\vee \tilde{p}_{8,6}(\tau,x,y)\gle \tilde{p}_{8,5}(\tau,x,y)\vee
\tilde{p}_{8,2}(\tau,x,y)\geql \gu
& [127] \\
\framebox{$\tilde{p}_{8,5}(\tau,x,y)\geql \tilde{H}_{X_5}(\tilde{s}(\tau),y)$}
& [128] \\
\framebox{$\tilde{p}_{8,7}(\tau,x,y)\gle \tilde{p}_{8,8}(\tau,x,y)\vee \tilde{p}_{8,7}(\tau,x,y)\geql \tilde{p}_{8,8}(\tau,x,y)\vee 
           \tilde{p}_{8,6}(\tau,x,y)\geql \tilde{p}_{8,8}(\tau,x,y)$}
& [129] \\
\framebox{$\tilde{p}_{8,8}(\tau,x,y)\gle \tilde{p}_{8,7}(\tau,x,y)\vee \tilde{p}_{8,6}(\tau,x,y)\geql \tilde{p}_{8,7}(\tau,x,y)$}
& [130] \\
\tilde{p}_{8,9}(\tau,x,y)\gle \tilde{p}_{8,10}(\tau,x,y)\vee \tilde{p}_{8,9}(\tau,x,y)\geql \tilde{p}_{8,10}(\tau,x,y)\vee 
\tilde{p}_{8,7}(\tau,x,y)\geql \tilde{p}_{8,10}(\tau,x,y)
& [131] \\
\framebox{$\tilde{p}_{8,10}(\tau,x,y)\gle \tilde{p}_{8,9}(\tau,x,y)$}\vee \tilde{p}_{8,7}(\tau,x,y)\geql \tilde{p}_{8,9}(\tau,x,y)
& [132] \\
\framebox{$\tilde{p}_{8,9}(\tau,x,y)\geql \exists x\, \tilde{p}_{8,11}(\tau,x,y)$}
& [133] \\
\tilde{p}_{8,13}(\tau,x,y)\gle \tilde{p}_{8,14}(\tau,x,y)\vee \tilde{p}_{8,13}(\tau,x,y)\geql \tilde{p}_{8,14}(\tau,x,y)\vee
\tilde{p}_{8,11}(\tau,x,y)\geql \tilde{p}_{8,14}(\tau,x,y)
& \qquad [134] \\
\framebox{$\tilde{p}_{8,14}(\tau,x,y)\gle \tilde{p}_{8,13}(\tau,x,y)$}\vee \tilde{p}_{8,11}(\tau,x,y)\geql \tilde{p}_{8,13}(\tau,x,y)
& [135] \\
\framebox{$\tilde{p}_{8,13}(\tau,x,y)\geql \mi{uni}(x)$}
& [136] \\
\tilde{p}_{8,17}(\tau,x,y)\gle \tilde{p}_{8,18}(\tau,x,y)\vee \tilde{p}_{8,17}(\tau,x,y)\geql \tilde{p}_{8,18}(\tau,x,y)\vee
\tilde{p}_{8,14}(\tau,x,y)\geql \tilde{p}_{8,18}(\tau,x,y)
& [137] \\
\framebox{$\tilde{p}_{8,18}(\tau,x,y)\gle \tilde{p}_{8,17}(\tau,x,y)$}\vee \tilde{p}_{8,14}(\tau,x,y)\geql \tilde{p}_{8,17}(\tau,x,y)
& [138] \\
\framebox{$\tilde{p}_{8,17}(\tau,x,y)\geql \tilde{H}_{X_1}(\tau,x)$}
& [139] \\
\framebox{$\tilde{p}_{8,18}(\tau,x,y)\geql \tilde{G}_{\mi{high}_d}(x)$}
& [140] \\
\framebox{$\tilde{p}_{8,10}(\tau,x,y)\geql \exists x\, \tilde{p}_{8,12}(\tau,x,y)$}
& [141] \\
\tilde{p}_{8,15}(\tau,x,y)\gle \tilde{p}_{8,16}(\tau,x,y)\vee \tilde{p}_{8,15}(\tau,x,y)\geql \tilde{p}_{8,16}(\tau,x,y)\vee
\tilde{p}_{8,12}(\tau,x,y)\geql \tilde{p}_{8,16}(\tau,x,y)
& [142] \\
\framebox{$\tilde{p}_{8,16}(\tau,x,y)\gle \tilde{p}_{8,15}(\tau,x,y)$}\vee \tilde{p}_{8,12}(\tau,x,y)\geql \tilde{p}_{8,15}(\tau,x,y)
& [143] \\
\framebox{$\tilde{p}_{8,15}(\tau,x,y)\geql \mi{uni}(x)$}
& [144] \\
\tilde{p}_{8,19}(\tau,x,y)\gle \tilde{p}_{8,20}(\tau,x,y)\vee \tilde{p}_{8,19}(\tau,x,y)\geql \tilde{p}_{8,20}(\tau,x,y)\vee
\tilde{p}_{8,16}(\tau,x,y)\geql \tilde{p}_{8,20}(\tau,x,y)
& [145] \\
\framebox{$\tilde{p}_{8,20}(\tau,x,y)\gle \tilde{p}_{8,19}(\tau,x,y)$}\vee \tilde{p}_{8,16}(\tau,x,y)\geql \tilde{p}_{8,19}(\tau,x,y)
& [146] \\
\framebox{$\tilde{p}_{8,19}(\tau,x,y)\geql \tilde{H}_{X_2}(\tau,x)$}
& [147] \\
\framebox{$\tilde{p}_{8,20}(\tau,x,y)\geql \tilde{G}_{\mi{high}_r}(x)$}
& [148] \\
\framebox{$\tilde{p}_{8,8}(\tau,x,y)\geql \tilde{G}_{\mi{negative}_{\dot{r}}}(y)$}
& [149] \\[1mm]
\hline \hline
\end{IEEEeqnarray*}
\end{minipage}
\vspace{-2mm}
\end{table*}       
\begin{table*}[p]
\vspace{-6mm}
\caption{Clausal theory $S_B$}\label{tab17b}
\vspace{-6mm}
\centering
\begin{minipage}[t]{\linewidth-50mm}
\scriptsize
\begin{IEEEeqnarray*}{LR}
\hline \hline \\[2mm]
\framebox{$\tilde{p}_{29,0}(\tau,x,y)\geql \gu$}
& [150] \\
\framebox{$\tilde{p}_{29,1}(\tau,x,y)\gle \tilde{p}_{29,2}(\tau,x,y)\vee \tilde{p}_{29,1}(\tau,x,y)\geql \tilde{p}_{29,2}(\tau,x,y)\vee
           \tilde{p}_{29,0}(\tau,x,y)\geql \tilde{p}_{29,2}(\tau,x,y)$}
& [151] \\
\framebox{$\tilde{p}_{29,2}(\tau,x,y)\gle \tilde{p}_{29,1}(\tau,x,y)\vee \tilde{p}_{29,0}(\tau,x,y)\geql \gu$}
& [152] \\
\framebox{$\tilde{p}_{29,3}(\tau,x,y)\gle \tilde{p}_{29,4}(\tau,x,y)\vee \tilde{p}_{29,3}(\tau,x,y)\geql \tilde{p}_{29,4}(\tau,x,y)\vee
           \tilde{p}_{29,1}(\tau,x,y)\geql \tilde{p}_{29,4}(\tau,x,y)$}
& [153] \\
\framebox{$\tilde{p}_{29,4}(\tau,x,y)\gle \tilde{p}_{29,3}(\tau,x,y)\vee \tilde{p}_{29,1}(\tau,x,y)\geql \tilde{p}_{29,3}(\tau,x,y)$}
& [154] \\
\framebox{$\tilde{p}_{29,3}(\tau,x,y)\geql \mi{time}(\tau)$} 
& [155] \\
\framebox{$\tilde{p}_{29,4}(\tau,x,y)\geql \mi{uni}(y)$}
& [156] \\
\tilde{p}_{29,5}(\tau,x,y)\geql \tilde{p}_{29,6}(\tau,x,y)\vee \framebox{$\tilde{p}_{29,2}(\tau,x,y)\geql \gz$}
& [157] \\
\tilde{p}_{29,5}(\tau,x,y)\gle \tilde{p}_{29,6}(\tau,x,y)\vee \tilde{p}_{29,6}(\tau,x,y)\gle \tilde{p}_{29,5}(\tau,x,y)\vee
\tilde{p}_{29,2}(\tau,x,y)\geql \gu
& [158] \\
\framebox{$\tilde{p}_{29,5}(\tau,x,y)\geql \tilde{H}_{X_2}(\tilde{s}(\tau),y)$}
& [159] \\
\framebox{$\tilde{p}_{29,7}(\tau,x,y)\gle \tilde{p}_{29,8}(\tau,x,y)\vee \tilde{p}_{29,7}(\tau,x,y)\geql \tilde{p}_{29,8}(\tau,x,y)\vee 
           \tilde{p}_{29,6}(\tau,x,y)\geql \tilde{p}_{29,8}(\tau,x,y)$}
& [160] \\
\framebox{$\tilde{p}_{29,8}(\tau,x,y)\gle \tilde{p}_{29,7}(\tau,x,y)\vee \tilde{p}_{29,6}(\tau,x,y)\geql \tilde{p}_{29,7}(\tau,x,y)$}
& [161] \\
\tilde{p}_{29,9}(\tau,x,y)\gle \tilde{p}_{29,10}(\tau,x,y)\vee \tilde{p}_{29,9}(\tau,x,y)\geql \tilde{p}_{29,10}(\tau,x,y)\vee 
\tilde{p}_{29,7}(\tau,x,y)\geql \tilde{p}_{29,10}(\tau,x,y)
& [162] \\
\framebox{$\tilde{p}_{29,10}(\tau,x,y)\gle \tilde{p}_{29,9}(\tau,x,y)$}\vee \tilde{p}_{29,7}(\tau,x,y)\geql \tilde{p}_{29,9}(\tau,x,y)
& [163] \\
\framebox{$\tilde{p}_{29,9}(\tau,x,y)\geql \exists x\, \tilde{p}_{29,11}(\tau,x,y)$}
& [164] \\
\tilde{p}_{29,13}(\tau,x,y)\gle \tilde{p}_{29,14}(\tau,x,y)\vee \tilde{p}_{29,13}(\tau,x,y)\geql \tilde{p}_{29,14}(\tau,x,y)\vee
\tilde{p}_{29,11}(\tau,x,y)\geql \tilde{p}_{29,14}(\tau,x,y)
& \qquad [165] \\
\framebox{$\tilde{p}_{29,14}(\tau,x,y)\gle \tilde{p}_{29,13}(\tau,x,y)$}\vee \tilde{p}_{29,11}(\tau,x,y)\geql \tilde{p}_{29,13}(\tau,x,y)
& [166] \\
\framebox{$\tilde{p}_{29,13}(\tau,x,y)\geql \mi{uni}(x)$}
& [167] \\
\tilde{p}_{29,17}(\tau,x,y)\gle \tilde{p}_{29,18}(\tau,x,y)\vee \tilde{p}_{29,17}(\tau,x,y)\geql \tilde{p}_{29,18}(\tau,x,y)\vee
\tilde{p}_{29,14}(\tau,x,y)\geql \tilde{p}_{29,18}(\tau,x,y)
& [168] \\
\framebox{$\tilde{p}_{29,18}(\tau,x,y)\gle \tilde{p}_{29,17}(\tau,x,y)$}\vee \tilde{p}_{29,14}(\tau,x,y)\geql \tilde{p}_{29,17}(\tau,x,y)
& [169] \\
\framebox{$\tilde{p}_{29,17}(\tau,x,y)\geql \tilde{H}_{X_2}(\tau,x)$}
& [170] \\
\framebox{$\tilde{p}_{29,18}(\tau,x,y)\geql \tilde{G}_{\mi{medium}_r}(x)$}
& [171] \\
\framebox{$\tilde{p}_{29,10}(\tau,x,y)\geql \exists x\, \tilde{p}_{29,12}(\tau,x,y)$}
& [172] \\
\tilde{p}_{29,15}(\tau,x,y)\gle \tilde{p}_{29,16}(\tau,x,y)\vee \tilde{p}_{29,15}(\tau,x,y)\geql \tilde{p}_{29,16}(\tau,x,y)\vee
\tilde{p}_{29,12}(\tau,x,y)\geql \tilde{p}_{29,16}(\tau,x,y)
& [173] \\
\framebox{$\tilde{p}_{29,16}(\tau,x,y)\gle \tilde{p}_{29,15}(\tau,x,y)$}\vee \tilde{p}_{29,12}(\tau,x,y)\geql \tilde{p}_{29,15}(\tau,x,y)
& [174] \\
\framebox{$\tilde{p}_{29,15}(\tau,x,y)\geql \mi{uni}(x)$}
& [175] \\
\tilde{p}_{29,19}(\tau,x,y)\gle \tilde{p}_{29,20}(\tau,x,y)\vee \tilde{p}_{29,19}(\tau,x,y)\geql \tilde{p}_{29,20}(\tau,x,y)\vee
\tilde{p}_{29,16}(\tau,x,y)\geql \tilde{p}_{29,20}(\tau,x,y)
& [176] \\
\framebox{$\tilde{p}_{29,20}(\tau,x,y)\gle \tilde{p}_{29,19}(\tau,x,y)$}\vee \tilde{p}_{29,16}(\tau,x,y)\geql \tilde{p}_{29,19}(\tau,x,y)
& [177] \\
\framebox{$\tilde{p}_{29,19}(\tau,x,y)\geql \tilde{H}_{X_5}(\tau,x)$}
& [178] \\
\framebox{$\tilde{p}_{29,20}(\tau,x,y)\geql \tilde{G}_{\mi{positive}_{\dot{r}}}(x)$}
& [179] \\
\framebox{$\tilde{p}_{29,8}(\tau,x,y)\geql \tilde{G}_{\mi{high}_r}(y)$}
& [180] \\[1mm]  
\hline \hline
\end{IEEEeqnarray*}
\end{minipage}
\vspace{-2mm} 
\end{table*}  
\begin{table*}[p]
\vspace{-6mm}
\caption{Clausal theory $S_{e_0}(\tau/\tilde{z})\cup S_{\phi_r}$}\label{tab18}
\vspace{-6mm}
\centering
\begin{minipage}[t]{\linewidth-70mm}
\scriptsize
\begin{IEEEeqnarray*}{LR}
\hline \hline \\[2mm]
\framebox{$\tilde{H}_{X_0}(\tilde{z},\tilde{0})\geql \bar{1}$}
& [181] \\
\tilde{H}_{X_0}(\tilde{z},\tilde{1})\geql \overline{0.5}
& [182] \\
\tilde{H}_{X_0}(\tilde{z},\tilde{2})\geql \bar{0}
& [183] \\
\tilde{H}_{X_0}(\tilde{z},\tilde{3})\geql \bar{0}
& [184] \\
\tilde{H}_{X_0}(\tilde{z},\tilde{4})\geql \bar{0}
& [185] \\[2mm]
\tilde{H}_{X_2}(\tilde{z},\tilde{0})\geql \bar{0}
& [186] \\
\tilde{H}_{X_2}(\tilde{z},\tilde{1})\geql \overline{0.5}
& [187] \\
\framebox{$\tilde{H}_{X_2}(\tilde{z},\tilde{2})\geql \bar{1}$}
& [188] \\
\tilde{H}_{X_2}(\tilde{z},\tilde{3})\geql \overline{0.5}
& [189] \\
\tilde{H}_{X_2}(\tilde{z},\tilde{4})\geql \bar{0}
& [190] \\[2mm]
\tilde{H}_{X_5}(\tilde{z},\tilde{0})\geql \bar{0}
& [191] \\
\tilde{H}_{X_5}(\tilde{z},\tilde{1})\geql \bar{0}
& [192] \\
\tilde{H}_{X_5}(\tilde{z},\tilde{2})\geql \bar{0}
& [193] \\
\tilde{H}_{X_5}(\tilde{z},\tilde{3})\geql \overline{0.5}
& [194] \\
\framebox{$\tilde{H}_{X_5}(\tilde{z},\tilde{4})\geql \bar{1}$}
& [195] \\[2mm]
\framebox{$\tilde{p}_{0,0}(\tau,x)\gle \gu$}
& [196] \\
\framebox{$\tilde{p}_{0,0}(\tau,x)\geql \exists \tau\, \tilde{p}_{0,1}(\tau,x)$}
& [197] \\
\tilde{p}_{0,2}(\tau,x)\gle \tilde{p}_{0,3}(\tau,x)\vee \tilde{p}_{0,2}(\tau,x)\geql \tilde{p}_{0,3}(\tau,x)\vee 
\tilde{p}_{0,1}(\tau,x)\geql \tilde{p}_{0,3}(\tau,x)
& [198] \\  
\framebox{$\tilde{p}_{0,3}(\tau,x)\gle \tilde{p}_{0,2}(\tau,x)$}\vee \tilde{p}_{0,1}(\tau,x)\geql \tilde{p}_{0,2}(\tau,x)
& [199] \\
\framebox{$\tilde{p}_{0,2}(\tau,x)\geql \mi{time}(\tau)$}
& [200] \\
\tilde{p}_{0,4}(\tau,x)\gle \tilde{p}_{0,5}(\tau,x)\vee \tilde{p}_{0,4}(\tau,x)\geql \tilde{p}_{0,5}(\tau,x)\vee \tilde{p}_{0,3}(\tau,x)\geql \tilde{p}_{0,5}(\tau,x)
& [201] \\
\framebox{$\tilde{p}_{0,5}(\tau,x)\gle \tilde{p}_{0,4}(\tau,x)$}\vee \tilde{p}_{0,3}(\tau,x)\geql \tilde{p}_{0,4}(\tau,x)
& [202] \\
\framebox{$\tilde{p}_{0,4}(\tau,x)\geql \forall x\, \tilde{p}_{0,6}(\tau,x)$}
& [203] \\
\tilde{p}_{0,8}(\tau,x)\gle \tilde{p}_{0,9}(\tau,x)\vee \tilde{p}_{0,8}(\tau,x)\geql \tilde{p}_{0,9}(\tau,x)\vee \tilde{p}_{0,6}(\tau,x)\geql \tilde{p}_{0,9}(\tau,x)
& [204] \\
\framebox{$\tilde{p}_{0,9}(\tau,x)\gle \tilde{p}_{0,8}(\tau,x)$}\vee \tilde{p}_{0,6}(\tau,x)\geql \gu
& [205] \\
\framebox{$\tilde{p}_{0,8}(\tau,x)\geql \mi{uni}(x)$}
& [206] \\
\tilde{p}_{0,12}(\tau,x)\geql \tilde{p}_{0,13}(\tau,x)\vee \tilde{p}_{0,9}(\tau,x)\geql \gz
& [207] \\
\framebox{$\tilde{p}_{0,12}(\tau,x)\gle \tilde{p}_{0,13}(\tau,x)\vee \tilde{p}_{0,13}(\tau,x)\gle \tilde{p}_{0,12}(\tau,x)$}\vee 
           \tilde{p}_{0,9}(\tau,x)\geql \gu     
& [208] \\
\framebox{$\tilde{p}_{0,12}(\tau,x)\geql \tilde{H}_{X_3}(\tau,x)$}
& [209] \\
\framebox{$\tilde{p}_{0,13}(\tau,x)\geql \tilde{G}_{\mi{positive}_{\dot{t}}}(x)$}
& [210] \\
\framebox{$\tilde{p}_{0,5}(\tau,x)\geql \forall x\, \tilde{p}_{0,7}(\tau,x)$}
& [211] \\
\tilde{p}_{0,10}(\tau,x)\gle \tilde{p}_{0,11}(\tau,x)\vee \tilde{p}_{0,10}(\tau,x)\geql \tilde{p}_{0,11}(\tau,x)\vee \tilde{p}_{0,7}(\tau,x)\geql \tilde{p}_{0,11}(\tau,x)  
& \qquad [212] \\
\framebox{$\tilde{p}_{0,11}(\tau,x)\gle \tilde{p}_{0,10}(\tau,x)$}\vee \tilde{p}_{0,7}(\tau,x)\geql \gu
& [213] \\
\framebox{$\tilde{p}_{0,10}(\tau,x)\geql \mi{uni}(x)$}
& [214] \\
\tilde{p}_{0,14}(\tau,x)\geql \tilde{p}_{0,15}(\tau,x)\vee \tilde{p}_{0,11}(\tau,x)\geql \gz
& [215] \\
\framebox{$\tilde{p}_{0,14}(\tau,x)\gle \tilde{p}_{0,15}(\tau,x)\vee \tilde{p}_{0,15}(\tau,x)\gle \tilde{p}_{0,14}(\tau,x)$}\vee 
           \tilde{p}_{0,11}(\tau,x)\geql \gu     
& [216] \\
\framebox{$\tilde{p}_{0,14}(\tau,x)\geql \tilde{H}_{X_5}(\tau,x)$}
& [217] \\
\framebox{$\tilde{p}_{0,15}(\tau,x)\geql \tilde{G}_{\mi{negative}_{\dot{r}}}(x)$}
& [218] \\[1mm]
\hline \hline
\end{IEEEeqnarray*}
\end{minipage}
\vspace{-2mm} 
\end{table*}
\begin{table*}[p]
\vspace{-6mm}
\caption{Refutation of $S_D\cup S_U\cup S_\mbb{A}\cup S_B\cup S_{e_0}(\tau/\tilde{z})\cup S_{\phi_r}$}\label{tab19}
\vspace{-6mm}
\centering
\begin{minipage}[t]{\linewidth-70mm}
\scriptsize
\begin{IEEEeqnarray*}{LR}
\hline \hline \\[2mm]
\text{\bf Rule (\cref{ceq4hr10})} : \mi{nat}(\tilde{z}) :                                                                              
& \\
\framebox{$\mi{nat}(\tilde{z})\gle \gu$}\vee \mi{nat}(\tilde{z})\geql \gu
& [219] \\
\text{\bf Rule (\cref{ceq4hr5})} : [1] [2] [219] : 
& \\
\framebox{$\mi{nat}(\tilde{z})\geql \gu$}
& [220] \\
\text{\bf Rule (\cref{ceq4hr5})} : [5] [8]; x/\tilde{z} : [220] :
& \\
\framebox{$\tilde{p}_{61,2}(\tilde{z})\geql \tilde{p}_{61,1}(\tilde{z})\vee \tilde{p}_{61,0}(\tilde{z})\geql \tilde{p}_{61,1}(\tilde{z})$}
& [221] \\
\text{\bf Rule (\cref{ceq4hr10})} : \mi{nat}(\tilde{s}(\tilde{z})) :           
& \\
\framebox{$\mi{nat}(\tilde{s}(\tilde{z}))\gle \gu$}\vee \mi{nat}(\tilde{s}(\tilde{z}))\geql \gu
& [222] \\
\text{repeatedly \bf Rule (\cref{ceq4hr5})} : [3] [7] [8]; x/\tilde{z} : [220] [221] [222] :
& \\
\framebox{$\mi{nat}(\tilde{s}(\tilde{z}))\geql \gu$}
& [223] \\
\text{\bf Rule (\cref{ceq4hr5})} : [5] [8]; x/\tilde{s}(\tilde{z}) : [223] :
& \\
\framebox{$\tilde{p}_{61,2}(\tilde{s}(\tilde{z}))\geql \tilde{p}_{61,1}(\tilde{s}(\tilde{z}))\vee 
           \tilde{p}_{61,0}(\tilde{s}(\tilde{z}))\geql \tilde{p}_{61,1}(\tilde{s}(\tilde{z}))$}
& [224] \\
\text{\bf Rule (\cref{ceq4hr10})} : \mi{nat}(\tilde{s}(\tilde{s}(\tilde{z}))) :
& \\
\framebox{$\mi{nat}(\tilde{s}(\tilde{s}(\tilde{z})))\gle \gu$}\vee \mi{nat}(\tilde{s}(\tilde{s}(\tilde{z})))\geql \gu
& [225] \\
\text{repeatedly \bf Rule (\cref{ceq4hr5})} : [3] [7] [8]; x/\tilde{s}(\tilde{z}) : [223] [224] [225] :
& \\
\framebox{$\mi{nat}(\tilde{s}(\tilde{s}(\tilde{z})))\geql \gu$}
& [226] \\
\text{\bf Rule (\cref{ceq4hr5})} : [33] [36]; x/\tilde{z} : [220] :
& \\
\framebox{$\tilde{p}_{64,2}(\tilde{z})\geql \tilde{p}_{64,1}(\tilde{z})\vee \tilde{p}_{64,0}(\tilde{z})\geql \tilde{p}_{64,1}(\tilde{z})$}
& [227] \\
\text{\bf Rule (\cref{ceq4hr10})} : \mi{time}(\tilde{z}) :
& \\
\framebox{$\mi{time}(\tilde{z})\gle \gu$}\vee \mi{time}(\tilde{z})\geql \gu
& [228] \\
\text{repeatedly \bf Rule (\cref{ceq4hr5})} : [31] [35] [36]; x/\tilde{z} : [220] [227] [228] :
& \\
\framebox{$\mi{time}(\tilde{z})\geql \gu$}
& [229] \\
\text{\bf Rule (\cref{ceq4hr5})} : [33] [36]; x/\tilde{s}(\tilde{z}) : [223] :
& \\
\framebox{$\tilde{p}_{64,2}(\tilde{s}(\tilde{z}))\geql \tilde{p}_{64,1}(\tilde{s}(\tilde{z}))\vee 
           \tilde{p}_{64,0}(\tilde{s}(\tilde{z}))\geql \tilde{p}_{64,1}(\tilde{s}(\tilde{z}))$}
& [230] \\
\text{\bf Rule (\cref{ceq4hr10})} : \mi{time}(\tilde{s}(\tilde{z})) :
& \\
\framebox{$\mi{time}(\tilde{s}(\tilde{z}))\gle \gu$}\vee \mi{time}(\tilde{s}(\tilde{z}))\geql \gu
& [231] \\
\text{repeatedly \bf Rule (\cref{ceq4hr5})} : [31] [35] [36]; x/\tilde{s}(\tilde{z}) : [223] [230] [231] :
& \\
\framebox{$\mi{time}(\tilde{s}(\tilde{z}))\geql \gu$}
& [232] \\
\text{\bf Rule (\cref{ceq4hr5})} : [33] [36]; x/\tilde{s}(\tilde{s}(\tilde{z})) : [226] :
& \\
\framebox{$\tilde{p}_{64,2}(\tilde{s}(\tilde{s}(\tilde{z})))\geql \tilde{p}_{64,1}(\tilde{s}(\tilde{s}(\tilde{z})))\vee
           \tilde{p}_{64,0}(\tilde{s}(\tilde{s}(\tilde{z})))\geql \tilde{p}_{64,1}(\tilde{s}(\tilde{s}(\tilde{z})))$}  
& \qquad [233] \\
\text{\bf Rule (\cref{ceq4hr10})} : \mi{time}(\tilde{s}(\tilde{s}(\tilde{z}))) :
& \\
\framebox{$\mi{time}(\tilde{s}(\tilde{s}(\tilde{z})))\gle \gu$}\vee \mi{time}(\tilde{s}(\tilde{s}(\tilde{z})))\geql \gu
& [234] \\
\text{repeatedly \bf Rule (\cref{ceq4hr5})} : [31] [35] [36]; x/\tilde{s}(\tilde{s}(\tilde{z})) : [226] [233] [234] :
& \\
\framebox{$\mi{time}(\tilde{s}(\tilde{s}(\tilde{z})))\geql \gu$}
& [235] \\[1mm]  
\hline \hline
\end{IEEEeqnarray*}
\end{minipage}
\vspace{-2mm}
\end{table*}
\begin{table*}[p]
\vspace{-6mm}
\caption{Refutation of $S_D\cup S_U\cup S_\mbb{A}\cup S_B\cup S_{e_0}(\tau/\tilde{z})\cup S_{\phi_r}$}\label{tab20}
\vspace{-6mm}
\centering
\begin{minipage}[t]{\linewidth-40mm}
\scriptsize
\begin{IEEEeqnarray*}{LR}
\hline \hline \\[2mm]
\text{\bf Rule (\cref{ceq4hr10})} : \tilde{p}_{1,7}(\tau,x,y) :
& \\
\framebox{$\tilde{p}_{1,7}(\tau,x,y)\gle \gu$}\vee \tilde{p}_{1,7}(\tau,x,y)\geql \gu
& [236] \\
\text{\bf Rule (\cref{ceq4hr7})} : \exists x\, \tilde{p}_{1,9}(\tau,x,y) :
& \\
\framebox{$\tilde{p}_{1,9}(\tau,x,y)\gle \exists x\, \tilde{p}_{1,9}(\tau,x,y)\vee 
           \tilde{p}_{1,9}(\tau,x,y)\geql \exists x\, \tilde{p}_{1,9}(\tau,x,y)$}
& [237] \\
\text{\bf Rule (\cref{ceq4hr5})} : [47t] : [94] [96]; x/\tilde{0} : 
& \\
\framebox{$\tilde{p}_{1,11}(\tau,\tilde{0},y)\geql \tilde{p}_{1,12}(\tau,\tilde{0},y)$}
& [238] \\
\text{\bf Rule (\cref{ceq4hr5})} : [91] [95]; \tau/\tilde{z}, x/\tilde{0} : [181] : [238]; \tau/\tilde{z} : 
& \\
\framebox{$\tilde{p}_{1,9}(\tilde{z},\tilde{0},y)\geql \tilde{p}_{1,10}(\tilde{z},\tilde{0},y)$}
& [239] \\
\text{repeatedly \bf Rule (\cref{ceq4hr5})} : [42] [239] : [89] [236]; \tau/\tilde{z} : [92] [237]; \tau/\tilde{z}, x/\tilde{0} :
& \\
\framebox{$\tilde{p}_{1,7}(\tilde{z},x,y)\geql \gu$}
& [240] \\
\text{\bf Rule (\cref{ceq4hr12})} : [87] [88] :
& \\
\framebox{$\tilde{p}_{1,7}(\tau,x,y)\gle \tilde{p}_{1,8}(\tau,x,y)$}\vee \tilde{p}_{1,6}(\tau,x,y)\geql \tilde{p}_{1,8}(\tau,x,y)
& [241] \\
\text{\bf Rule (\cref{ceq4hr5})} : [240] : [241]; \tau/\tilde{z} :
& \\
\framebox{$\tilde{p}_{1,6}(\tilde{z},x,y)\geql \tilde{p}_{1,8}(\tilde{z},x,y)$}
& [242] \\
\text{\bf Rule (\cref{ceq4hr12})} : [80] [81] :
& \\
\framebox{$\tilde{p}_{1,3}(\tau,x,y)\gle \tilde{p}_{1,4}(\tau,x,y)$}\vee \tilde{p}_{1,1}(\tau,x,y)\geql \tilde{p}_{1,4}(\tau,x,y)
& [243] \\
\text{\bf Rule (\cref{ceq4hr5})} : [82] [243]; \tau/\tilde{z} : [229] :
& \\
\framebox{$\tilde{p}_{1,1}(\tilde{z},x,y)\geql \tilde{p}_{1,4}(\tilde{z},x,y)$}
& [244] \\
\text{\bf Rule (\cref{ceq4hr12})} : [78] [79] :
& \\
\framebox{$\tilde{p}_{1,1}(\tau,x,y)\gle \tilde{p}_{1,2}(\tau,x,y)$}\vee \tilde{p}_{1,0}(\tau,x,y)\geql \tilde{p}_{1,2}(\tau,x,y)
& [245] \\
\text{\bf Rule (\cref{ceq4hr5})} : [42-46] : [83] [245]; \tau/\tilde{z}, y/\tilde{u} : [244]; y/\tilde{u} : \tilde{u}\in \tilde{\mbb{U}} :
& \\
\framebox{$\tilde{p}_{1,0}(\tilde{z},x,\tilde{u})\geql \tilde{p}_{1,2}(\tilde{z},x,\tilde{u}), \tilde{u}\in \tilde{\mbb{U}}$}
& \qquad [246-250] \\
\mbi{\gz}\mbi{\gle} \mbi{\gu}\in \mi{ordtcons}(S_D\cup S_U\cup S_\mbb{A}\cup S_B\cup S_{e_0}(\tau/\tilde{z})\cup S_{\phi_r})                                                                    
& \\
\framebox{$\gz\gle \gu$}                                                                                                  
& [251] \\
\text{\bf Rule (\cref{ceq4hr5})} : [77] [84]; \tau/\tilde{z}, y/\tilde{u} : [246-250] [251] : \tilde{u}\in \tilde{\mbb{U}} :
& \\
\framebox{$\tilde{p}_{1,5}(\tilde{z},x,\tilde{u})\geql \tilde{p}_{1,6}(\tilde{z},x,\tilde{u}), \tilde{u}\in \tilde{\mbb{U}}$}
& [252-256] \\
\text{\bf Rule (\cref{ceq4hr55})} : \tilde{H}_{X_1}(\tilde{s}(\tau),y'), \tilde{G}_{\mi{high}_d}(y'') :
& \\
\framebox{$\tilde{H}_{X_1}(\tilde{s}(\tau),y')\gle \tilde{G}_{\mi{high}_d}(y'')$}\vee 
\tilde{H}_{X_1}(\tilde{s}(\tau),y')\geql \tilde{G}_{\mi{high}_d}(y'')\vee
\framebox{$\tilde{G}_{\mi{high}_d}(y'')\gle \tilde{H}_{X_1}(\tilde{s}(\tau),y')$}
& [257] \\
\text{repeatedly \bf Rule (\cref{ceq4hr5})} : [86] [97]; \tau/\tilde{z}, y/\tilde{u} : [242]; y/\tilde{u} : [252-256] : 
                                              [257]; \tau/\tilde{z}, y'/\tilde{u}, y''/\tilde{u} : \tilde{u}\in \tilde{\mbb{U}} :
& \\
\framebox{$\tilde{H}_{X_1}(\tilde{s}(\tilde{z}),\tilde{u})\geql \tilde{G}_{\mi{high}_d}(\tilde{u}), \tilde{u}\in \tilde{\mbb{U}}$}
& [258-262] \\[1mm]
\hline \hline
\end{IEEEeqnarray*}
\end{minipage}
\vspace{-2mm}
\end{table*}
\begin{table*}[p]
\vspace{-6mm}
\caption{Refutation of $S_D\cup S_U\cup S_\mbb{A}\cup S_B\cup S_{e_0}(\tau/\tilde{z})\cup S_{\phi_r}$}\label{tab21}
\vspace{-6mm}
\centering
\begin{minipage}[t]{\linewidth-40mm}
\scriptsize
\begin{IEEEeqnarray*}{LR}
\hline \hline \\[2mm]
\text{\bf Rule (\cref{ceq4hr10})} : \tilde{p}_{29,9}(\tau,x,y) :
& \\
\framebox{$\tilde{p}_{29,9}(\tau,x,y)\gle \gu$}\vee \tilde{p}_{29,9}(\tau,x,y)\geql \gu
& [263] \\
\text{\bf Rule (\cref{ceq4hr7})} : \exists x\, \tilde{p}_{29,11}(\tau,x,y) :
& \\
\framebox{$\tilde{p}_{29,11}(\tau,x,y)\gle \exists x\, \tilde{p}_{29,11}(\tau,x,y)\vee 
           \tilde{p}_{29,11}(\tau,x,y)\geql \exists x\, \tilde{p}_{29,11}(\tau,x,y)$}
& [264] \\
\text{\bf Rule (\cref{ceq4hr5})} : [54r] : [169] [171]; x/\tilde{2} : 
& \\
\framebox{$\tilde{p}_{29,14}(\tau,\tilde{2},y)\geql \tilde{p}_{29,17}(\tau,\tilde{2},y)$}
& [265] \\
\text{\bf Rule (\cref{ceq4hr5})} : [166] [170]; \tau/\tilde{z}, x/\tilde{2} : [188] : [265]; \tau/\tilde{z} : 
& \\
\framebox{$\tilde{p}_{29,11}(\tilde{z},\tilde{2},y)\geql \tilde{p}_{29,13}(\tilde{z},\tilde{2},y)$}
& [266] \\
\text{repeatedly \bf Rule (\cref{ceq4hr5})} : [44] [266] : [164] [263]; \tau/\tilde{z} : [167] [264]; \tau/\tilde{z}, x/\tilde{2} :
& \\
\framebox{$\tilde{p}_{29,9}(\tilde{z},x,y)\geql \gu$}
& [267] \\
\text{\bf Rule (\cref{ceq4hr10})} : \tilde{p}_{29,10}(\tau,x,y) :
& \\
\framebox{$\tilde{p}_{29,10}(\tau,x,y)\gle \gu$}\vee \tilde{p}_{29,10}(\tau,x,y)\geql \gu
& [268] \\
\text{\bf Rule (\cref{ceq4hr7})} : \exists x\, \tilde{p}_{29,12}(\tau,x,y) :
& \\
\framebox{$\tilde{p}_{29,12}(\tau,x,y)\gle \exists x\, \tilde{p}_{29,12}(\tau,x,y)\vee
           \tilde{p}_{29,12}(\tau,x,y)\geql \exists x\, \tilde{p}_{29,12}(\tau,x,y)$} 
& [269] \\
\text{\bf Rule (\cref{ceq4hr5})} : [76r] : [177] [179]; x/\tilde{4} :
& \\
\framebox{$\tilde{p}_{29,16}(\tau,\tilde{4},y)\geql \tilde{p}_{29,19}(\tau,\tilde{4},y)$}
& [270] \\
\text{\bf Rule (\cref{ceq4hr5})} : [174] [178]; \tau/\tilde{z}, x/\tilde{4} : [195] : [270]; \tau/\tilde{z} :
& \\
\framebox{$\tilde{p}_{29,12}(\tilde{z},\tilde{4},y)\geql \tilde{p}_{29,15}(\tilde{z},\tilde{4},y)$}
& [271] \\
\text{repeatedly \bf Rule (\cref{ceq4hr5})} : [46] [271] : [172] [268]; \tau/\tilde{z} : [175] [269]; \tau/\tilde{z}, x/\tilde{4} :
& \\
\framebox{$\tilde{p}_{29,10}(\tilde{z},x,y)\geql \gu$}
& [272] \\
\text{\bf Rule (\cref{ceq4hr5})} : [272] : [163]; \tau/\tilde{z} :
& \\
\framebox{$\tilde{p}_{29,7}(\tilde{z},x,y)\geql \tilde{p}_{29,9}(\tilde{z},x,y)$}
& [273] \\
\text{\bf Rule (\cref{ceq4hr12})} : [160] [161] :
& \\
\framebox{$\tilde{p}_{29,7}(\tau,x,y)\gle \tilde{p}_{29,8}(\tau,x,y)$}\vee \tilde{p}_{29,6}(\tau,x,y)\geql \tilde{p}_{29,8}(\tau,x,y)
& [274] \\
\text{\bf Rule (\cref{ceq4hr5})} : [267] [273] : [274]; \tau/\tilde{z} :
& \\
\framebox{$\tilde{p}_{29,6}(\tilde{z},x,y)\geql \tilde{p}_{29,8}(\tilde{z},x,y)$}
& [275] \\
\text{\bf Rule (\cref{ceq4hr12})} : [153] [154] :
& \\
\framebox{$\tilde{p}_{29,3}(\tau,x,y)\gle \tilde{p}_{29,4}(\tau,x,y)$}\vee \tilde{p}_{29,1}(\tau,x,y)\geql \tilde{p}_{29,4}(\tau,x,y)
& [276] \\
\text{\bf Rule (\cref{ceq4hr5})} : [155] [276]; \tau/\tilde{z} : [229] :
& \\
\framebox{$\tilde{p}_{29,1}(\tilde{z},x,y)\geql \tilde{p}_{29,4}(\tilde{z},x,y)$}
& [277] \\
\text{\bf Rule (\cref{ceq4hr12})} : [151] [152] :
& \\
\framebox{$\tilde{p}_{29,1}(\tau,x,y)\gle \tilde{p}_{29,2}(\tau,x,y)$}\vee \tilde{p}_{29,0}(\tau,x,y)\geql \tilde{p}_{29,2}(\tau,x,y)
& [278] \\
\text{\bf Rule (\cref{ceq4hr5})} : [42-46] : [156] [278]; \tau/\tilde{z}, y/\tilde{u} : [277]; y/\tilde{u} : \tilde{u}\in \tilde{\mbb{U}} :
& \\
\framebox{$\tilde{p}_{29,0}(\tilde{z},x,\tilde{u})\geql \tilde{p}_{29,2}(\tilde{z},x,\tilde{u}), \tilde{u}\in \tilde{\mbb{U}}$}
& \qquad [279-283] \\
\mbi{\gz}\mbi{\gle} \mbi{\gu}\in \mi{ordtcons}(S_D\cup S_U\cup S_\mbb{A}\cup S_B\cup S_{e_0}(\tau/\tilde{z})\cup S_{\phi_r})                                                                    
& \\
\framebox{$\gz\gle \gu$}                                                                                                  
& [284] \\
\text{\bf Rule (\cref{ceq4hr5})} : [150] [157]; \tau/\tilde{z}, y/\tilde{u} : [279-283] [284] : \tilde{u}\in \tilde{\mbb{U}} :
& \\
\framebox{$\tilde{p}_{29,5}(\tilde{z},x,\tilde{u})\geql \tilde{p}_{29,6}(\tilde{z},x,\tilde{u}), \tilde{u}\in \tilde{\mbb{U}}$}
& [285-289] \\
\text{\bf Rule (\cref{ceq4hr55})} : \tilde{H}_{X_2}(\tilde{s}(\tau),y'), \tilde{G}_{\mi{high}_r}(y'') :
& \\
\framebox{$\tilde{H}_{X_2}(\tilde{s}(\tau),y')\gle \tilde{G}_{\mi{high}_r}(y'')$}\vee 
\tilde{H}_{X_2}(\tilde{s}(\tau),y')\geql \tilde{G}_{\mi{high}_r}(y'')\vee
\framebox{$\tilde{G}_{\mi{high}_r}(y'')\gle \tilde{H}_{X_2}(\tilde{s}(\tau),y')$}
& [290] \\
\text{repeatedly \bf Rule (\cref{ceq4hr5})} : [159] [180]; \tau/\tilde{z}, y/\tilde{u} : [275]; y/\tilde{u} : [285-289] : 
                                              [290]; \tau/\tilde{z}, y'/\tilde{u}, y''/\tilde{u} : \tilde{u}\in \tilde{\mbb{U}} :
& \\
\framebox{$\tilde{H}_{X_2}(\tilde{s}(\tilde{z}),\tilde{u})\geql \tilde{G}_{\mi{high}_r}(\tilde{u}), \tilde{u}\in \tilde{\mbb{U}}$}
& [291-295] \\[1mm]
\hline \hline
\end{IEEEeqnarray*}
\end{minipage}
\vspace{-2mm}
\end{table*}
\begin{table*}[p]
\vspace{-6mm}
\caption{Refutation of $S_D\cup S_U\cup S_\mbb{A}\cup S_B\cup S_{e_0}(\tau/\tilde{z})\cup S_{\phi_r}$}\label{tab22}
\vspace{-6mm}
\centering
\begin{minipage}[t]{\linewidth-30mm}
\scriptsize
\begin{IEEEeqnarray*}{LR}
\hline \hline \\[2mm]
\text{\bf Rule (\cref{ceq4hr10})} : \tilde{p}_{6,7}(\tau,x,y) :
& \\
\framebox{$\tilde{p}_{6,7}(\tau,x,y)\gle \gu$}\vee \tilde{p}_{6,7}(\tau,x,y)\geql \gu
& [296] \\
\text{\bf Rule (\cref{ceq4hr7})} : \exists x\, \tilde{p}_{6,9}(\tau,x,y) :
& \\
\framebox{$\tilde{p}_{6,9}(\tau,x,y)\gle \exists x\, \tilde{p}_{6,9}(\tau,x,y)\vee 
           \tilde{p}_{6,9}(\tau,x,y)\geql \exists x\, \tilde{p}_{6,9}(\tau,x,y)$}
& [297] \\
\text{\bf Rule (\cref{ceq4hr5})} : [61r] : [115] [117]; x/\tilde{4} : 
& \\
\framebox{$\tilde{p}_{6,11}(\tau,\tilde{4},y)\geql \tilde{p}_{6,12}(\tau,\tilde{4},y)$}
& [298] \\
\text{\bf Rule (\cref{ceq4hr5})} : [61r] [295] : [112] [116]; \tau/\tilde{s}(\tilde{z}), x/\tilde{4} : [298]; \tau/\tilde{s}(\tilde{z}) : 
& \\
\framebox{$\tilde{p}_{6,9}(\tilde{s}(\tilde{z}),\tilde{4},y)\geql \tilde{p}_{6,10}(\tilde{s}(\tilde{z}),\tilde{4},y)$}
& [299] \\
\text{repeatedly \bf Rule (\cref{ceq4hr5})} : [46] [299] : [110] [296]; \tau/\tilde{s}(\tilde{z}) : 
                                              [113] [297]; \tau/\tilde{s}(\tilde{z}), x/\tilde{4} :
& \\
\framebox{$\tilde{p}_{6,7}(\tilde{s}(\tilde{z}),x,y)\geql \gu$}
& [300] \\
\text{\bf Rule (\cref{ceq4hr12})} : [108] [109] :
& \\
\framebox{$\tilde{p}_{6,7}(\tau,x,y)\gle \tilde{p}_{6,8}(\tau,x,y)$}\vee \tilde{p}_{6,6}(\tau,x,y)\geql \tilde{p}_{6,8}(\tau,x,y)
& [301] \\
\text{\bf Rule (\cref{ceq4hr5})} : [300] : [301]; \tau/\tilde{s}(\tilde{z}) :
& \\
\framebox{$\tilde{p}_{6,6}(\tilde{s}(\tilde{z}),x,y)\geql \tilde{p}_{6,8}(\tilde{s}(\tilde{z}),x,y)$}
& [302] \\
\text{\bf Rule (\cref{ceq4hr12})} : [101] [102] :
& \\
\framebox{$\tilde{p}_{6,3}(\tau,x,y)\gle \tilde{p}_{6,4}(\tau,x,y)$}\vee \tilde{p}_{6,1}(\tau,x,y)\geql \tilde{p}_{6,4}(\tau,x,y)
& [303] \\
\text{\bf Rule (\cref{ceq4hr5})} : [103] [303]; \tau/\tilde{s}(\tilde{z}) : [232] :
& \\
\framebox{$\tilde{p}_{6,1}(\tilde{s}(\tilde{z}),x,y)\geql \tilde{p}_{6,4}(\tilde{s}(\tilde{z}),x,y)$}
& [304] \\
\text{\bf Rule (\cref{ceq4hr12})} : [99] [100] :
& \\
\framebox{$\tilde{p}_{6,1}(\tau,x,y)\gle \tilde{p}_{6,2}(\tau,x,y)$}\vee \tilde{p}_{6,0}(\tau,x,y)\geql \tilde{p}_{6,2}(\tau,x,y)
& [305] \\
\text{\bf Rule (\cref{ceq4hr5})} : [42-46] : [104] [305]; \tau/\tilde{s}(\tilde{z}), y/\tilde{u} : [304]; y/\tilde{u} : 
                                   \tilde{u}\in \tilde{\mbb{U}} :
& \\
\framebox{$\tilde{p}_{6,0}(\tilde{s}(\tilde{z}),x,\tilde{u})\geql \tilde{p}_{6,2}(\tilde{s}(\tilde{z}),x,\tilde{u}), 
           \tilde{u}\in \tilde{\mbb{U}}$}
& \qquad [306-310] \\
\mbi{\gz}\mbi{\gle} \mbi{\gu}\in \mi{ordtcons}(S_D\cup S_U\cup S_\mbb{A}\cup S_B\cup S_{e_0}(\tau/\tilde{z})\cup S_{\phi_r})                                                                    
& \\
\framebox{$\gz\gle \gu$}                                                                                                  
& [311] \\
\text{\bf Rule (\cref{ceq4hr5})} : [98] [105]; \tau/\tilde{s}(\tilde{z}), y/\tilde{u} : [306-310] [311] : \tilde{u}\in \tilde{\mbb{U}} :
& \\
\framebox{$\tilde{p}_{6,5}(\tilde{s}(\tilde{z}),x,\tilde{u})\geql \tilde{p}_{6,6}(\tilde{s}(\tilde{z}),x,\tilde{u}), 
           \tilde{u}\in \tilde{\mbb{U}}$}
& [312-316] \\
\text{\bf Rule (\cref{ceq4hr55})} : \tilde{H}_{X_3}(\tilde{s}(\tau),y'), \tilde{G}_{\mi{positive}_{\dot{t}}}(y'') :
& \\
\framebox{$\tilde{H}_{X_3}(\tilde{s}(\tau),y')\gle \tilde{G}_{\mi{positive}_{\dot{t}}}(y'')$}\vee 
\tilde{H}_{X_3}(\tilde{s}(\tau),y')\geql \tilde{G}_{\mi{positive}_{\dot{t}}}(y'')\vee
\framebox{$\tilde{G}_{\mi{positive}_{\dot{t}}}(y'')\gle \tilde{H}_{X_3}(\tilde{s}(\tau),y')$}
& [317] \\
\text{repeatedly \bf Rule (\cref{ceq4hr5})} : [107] [118]; \tau/\tilde{s}(\tilde{z}), y/\tilde{u} : [302]; y/\tilde{u} : [312-316] : 
                                              [317]; \tau/\tilde{s}(\tilde{z}), y'/\tilde{u}, y''/\tilde{u} : \tilde{u}\in \tilde{\mbb{U}} :
& \\
\framebox{$\tilde{H}_{X_3}(\tilde{s}(\tilde{s}(\tilde{z})),\tilde{u})\geql \tilde{G}_{\mi{positive}_{\dot{t}}}(\tilde{u}), 
           \tilde{u}\in \tilde{\mbb{U}}$}
& [318-322] \\[1mm]
\hline \hline
\end{IEEEeqnarray*}
\end{minipage}
\vspace{-2mm}
\end{table*}
\begin{table*}[p]
\vspace{-6mm}
\caption{Refutation of $S_D\cup S_U\cup S_\mbb{A}\cup S_B\cup S_{e_0}(\tau/\tilde{z})\cup S_{\phi_r}$}\label{tab23}
\vspace{-6mm}
\centering
\begin{minipage}[t]{\linewidth-30mm}
\scriptsize
\begin{IEEEeqnarray*}{LR}
\hline \hline \\[2mm]
\text{\bf Rule (\cref{ceq4hr10})} : \tilde{p}_{8,9}(\tau,x,y) :
& \\
\framebox{$\tilde{p}_{8,9}(\tau,x,y)\gle \gu$}\vee \tilde{p}_{8,9}(\tau,x,y)\geql \gu
& [323] \\
\text{\bf Rule (\cref{ceq4hr7})} : \exists x\, \tilde{p}_{8,11}(\tau,x,y) :
& \\
\framebox{$\tilde{p}_{8,11}(\tau,x,y)\gle \exists x\, \tilde{p}_{8,11}(\tau,x,y)\vee 
           \tilde{p}_{8,11}(\tau,x,y)\geql \exists x\, \tilde{p}_{8,11}(\tau,x,y)$}
& [324] \\
\text{\bf Rule (\cref{ceq4hr5})} : [61d] : [138] [140]; x/\tilde{4} : 
& \\
\framebox{$\tilde{p}_{8,14}(\tau,\tilde{4},y)\geql \tilde{p}_{8,17}(\tau,\tilde{4},y)$}
& [325] \\
\text{\bf Rule (\cref{ceq4hr5})} : [61d] [262] : [135] [139]; \tau/\tilde{s}(\tilde{z}), x/\tilde{4} : [325]; \tau/\tilde{s}(\tilde{z}) : 
& \\
\framebox{$\tilde{p}_{8,11}(\tilde{s}(\tilde{z}),\tilde{4},y)\geql \tilde{p}_{8,13}(\tilde{s}(\tilde{z}),\tilde{4},y)$}
& [326] \\
\text{repeatedly \bf Rule (\cref{ceq4hr5})} : [46] [326] : [133] [323]; \tau/\tilde{s}(\tilde{z}) : 
                                              [136] [324]; \tau/\tilde{s}(\tilde{z}), x/\tilde{4} :
& \\
\framebox{$\tilde{p}_{8,9}(\tilde{s}(\tilde{z}),x,y)\geql \gu$}
& [327] \\
\text{\bf Rule (\cref{ceq4hr10})} : \tilde{p}_{8,10}(\tau,x,y) :
& \\
\framebox{$\tilde{p}_{8,10}(\tau,x,y)\gle \gu$}\vee \tilde{p}_{8,10}(\tau,x,y)\geql \gu
& [328] \\
\text{\bf Rule (\cref{ceq4hr7})} : \exists x\, \tilde{p}_{8,12}(\tau,x,y) :
& \\
\framebox{$\tilde{p}_{8,12}(\tau,x,y)\gle \exists x\, \tilde{p}_{8,12}(\tau,x,y)\vee
           \tilde{p}_{8,12}(\tau,x,y)\geql \exists x\, \tilde{p}_{8,12}(\tau,x,y)$} 
& [329] \\
\text{\bf Rule (\cref{ceq4hr5})} : [61r] : [146] [148]; x/\tilde{4} :
& \\
\framebox{$\tilde{p}_{8,16}(\tau,\tilde{4},y)\geql \tilde{p}_{8,19}(\tau,\tilde{4},y)$}
& [330] \\
\text{\bf Rule (\cref{ceq4hr5})} : [61r] [295] : [143] [147]; \tau/\tilde{s}(\tilde{z}), x/\tilde{4} : [330]; \tau/\tilde{s}(\tilde{z}) :
& \\
\framebox{$\tilde{p}_{8,12}(\tilde{s}(\tilde{z}),\tilde{4},y)\geql \tilde{p}_{8,15}(\tilde{s}(\tilde{z}),\tilde{4},y)$}
& [331] \\
\text{repeatedly \bf Rule (\cref{ceq4hr5})} : [46] [331] : [141] [328]; \tau/\tilde{s}(\tilde{z}) : 
                                              [144] [329]; \tau/\tilde{s}(\tilde{z}), x/\tilde{4} :
& \\
\framebox{$\tilde{p}_{8,10}(\tilde{s}(\tilde{z}),x,y)\geql \gu$}
& [332] \\
\text{\bf Rule (\cref{ceq4hr5})} : [332] : [132]; \tau/\tilde{s}(\tilde{z}) :
& \\
\framebox{$\tilde{p}_{8,7}(\tilde{s}(\tilde{z}),x,y)\geql \tilde{p}_{8,9}(\tilde{s}(\tilde{z}),x,y)$}
& [333] \\
\text{\bf Rule (\cref{ceq4hr12})} : [129] [130] :
& \\
\framebox{$\tilde{p}_{8,7}(\tau,x,y)\gle \tilde{p}_{8,8}(\tau,x,y)$}\vee \tilde{p}_{8,6}(\tau,x,y)\geql \tilde{p}_{8,8}(\tau,x,y)
& [334] \\
\text{\bf Rule (\cref{ceq4hr5})} : [327] [333] : [334]; \tau/\tilde{s}(\tilde{z}) :
& \\
\framebox{$\tilde{p}_{8,6}(\tilde{s}(\tilde{z}),x,y)\geql \tilde{p}_{8,8}(\tilde{s}(\tilde{z}),x,y)$}
& [335] \\
\text{\bf Rule (\cref{ceq4hr12})} : [122] [123] :
& \\
\framebox{$\tilde{p}_{8,3}(\tau,x,y)\gle \tilde{p}_{8,4}(\tau,x,y)$}\vee \tilde{p}_{8,1}(\tau,x,y)\geql \tilde{p}_{8,4}(\tau,x,y)
& [336] \\
\text{\bf Rule (\cref{ceq4hr5})} : [124] [336]; \tau/\tilde{s}(\tilde{z}) : [232] :
& \\
\framebox{$\tilde{p}_{8,1}(\tilde{s}(\tilde{z}),x,y)\geql \tilde{p}_{8,4}(\tilde{s}(\tilde{z}),x,y)$}
& [337] \\
\text{\bf Rule (\cref{ceq4hr12})} : [120] [121] :
& \\
\framebox{$\tilde{p}_{8,1}(\tau,x,y)\gle \tilde{p}_{8,2}(\tau,x,y)$}\vee \tilde{p}_{8,0}(\tau,x,y)\geql \tilde{p}_{8,2}(\tau,x,y)
& [338] \\
\text{\bf Rule (\cref{ceq4hr5})} : [42-46] : [125] [338]; \tau/\tilde{s}(\tilde{z}), y/\tilde{u} : 
                                   [337]; y/\tilde{u} : \tilde{u}\in \tilde{\mbb{U}} :
& \\
\framebox{$\tilde{p}_{8,0}(\tilde{s}(\tilde{z}),x,\tilde{u})\geql \tilde{p}_{8,2}(\tilde{s}(\tilde{z}),x,\tilde{u}), \tilde{u}\in \tilde{\mbb{U}}$}
& \qquad [339-343] \\
\mbi{\gz}\mbi{\gle} \mbi{\gu}\in \mi{ordtcons}(S_D\cup S_U\cup S_\mbb{A}\cup S_B\cup S_{e_0}(\tau/\tilde{z})\cup S_{\phi_r})                                                                    
& \\
\framebox{$\gz\gle \gu$}                                                                                                  
& [344] \\
\text{\bf Rule (\cref{ceq4hr5})} : [119] [126]; \tau/\tilde{s}(\tilde{z}), y/\tilde{u} : [339-343] [344] : \tilde{u}\in \tilde{\mbb{U}} :
& \\
\framebox{$\tilde{p}_{8,5}(\tilde{s}(\tilde{z}),x,\tilde{u})\geql \tilde{p}_{8,6}(\tilde{s}(\tilde{z}),x,\tilde{u}), \tilde{u}\in \tilde{\mbb{U}}$}
& [345-349] \\
\text{\bf Rule (\cref{ceq4hr55})} : \tilde{H}_{X_5}(\tilde{s}(\tau),y'), \tilde{G}_{\mi{negative}_{\dot{r}}}(y'') :
& \\
\framebox{$\tilde{H}_{X_5}(\tilde{s}(\tau),y')\gle \tilde{G}_{\mi{negative}_{\dot{r}}}(y'')$}\vee 
\tilde{H}_{X_5}(\tilde{s}(\tau),y')\geql \tilde{G}_{\mi{negative}_{\dot{r}}}(y'')\vee
\framebox{$\tilde{G}_{\mi{negative}_{\dot{r}}}(y'')\gle \tilde{H}_{X_5}(\tilde{s}(\tau),y')$}
& [350] \\
\text{repeatedly \bf Rule (\cref{ceq4hr5})} : [128] [149]; \tau/\tilde{s}(\tilde{z}), y/\tilde{u} : [335]; y/\tilde{u} : [345-349] : 
                                              [350]; \tau/\tilde{s}(\tilde{z}), y'/\tilde{u}, y''/\tilde{u} : \tilde{u}\in \tilde{\mbb{U}} :
& \\
\framebox{$\tilde{H}_{X_5}(\tilde{s}(\tilde{s}(\tilde{z})),\tilde{u})\geql \tilde{G}_{\mi{negative}_{\dot{r}}}(\tilde{u}), \tilde{u}\in \tilde{\mbb{U}}$}
& [351-355] \\[1mm]
\hline \hline
\end{IEEEeqnarray*}
\end{minipage}
\vspace{-2mm}
\end{table*}
\begin{table*}[p]
\vspace{-6mm}
\caption{Refutation of $S_D\cup S_U\cup S_\mbb{A}\cup S_B\cup S_{e_0}(\tau/\tilde{z})\cup S_{\phi_r}$}\label{tab24}
\vspace{-6mm}
\centering
\begin{minipage}[t]{\linewidth-60mm}
\scriptsize
\begin{IEEEeqnarray*}{LR}
\hline \hline \\[2mm]
\text{repeatedly \bf Rule (\cref{ceq4hr5})} : [208] [209] [210]; \tau/\tilde{s}(\tilde{s}(\tilde{z})), x/\tilde{u} : [318-322] : 
                                              \tilde{u}\in \tilde{\mbb{U}} :
& \\
\framebox{$\tilde{p}_{0,9}(\tilde{s}(\tilde{s}(\tilde{z})),\tilde{u})\geql \gu, \tilde{u}\in \tilde{\mbb{U}}$}
& \qquad [356-360] \\
\text{\bf Rule (\cref{ceq4hr5})} : [205]; \tau/\tilde{s}(\tilde{s}(\tilde{z})), x/\tilde{u} : [356-360] : \tilde{u}\in \tilde{\mbb{U}} :
& \\
\framebox{$\tilde{p}_{0,6}(\tilde{s}(\tilde{s}(\tilde{z})),\tilde{u})\geql \gu, \tilde{u}\in \tilde{\mbb{U}}$}
& [361-365] \\ 
\text{\bf Rule (\cref{ceq4hr14})} : \tilde{p}_{0,6}(\tilde{s}(\tilde{s}(\tilde{z})),x)\geql \gu : [361-365] :
& \\
\framebox{$\mi{uni}(x)\geql \gz$}\vee \tilde{p}_{0,6}(\tilde{s}(\tilde{s}(\tilde{z})),x)\geql \gu
& [366] \\
\text{\bf Rule (\cref{ceq4hr8})} : \forall x\, \tilde{p}_{0,6}(\tilde{s}(\tilde{s}(\tilde{z})),x), \gu :
& \\
\tilde{p}_{0,6}(\tilde{s}(\tilde{s}(\tilde{z})),\tilde{w}_{(0,0)})\gle \gu\vee 
\gu\geql \forall x\, \tilde{p}_{0,6}(\tilde{s}(\tilde{s}(\tilde{z})),x)\vee 
\framebox{$\gu\gle \forall x\, \tilde{p}_{0,6}(\tilde{s}(\tilde{s}(\tilde{z})),x)$}
& [367] \\
\text{\bf Rule (\cref{ceq4hr5})} : [367] :
& \\
\framebox{$\tilde{p}_{0,6}(\tilde{s}(\tilde{s}(\tilde{z})),\tilde{w}_{(0,0)})\gle \gu$}\vee 
\gu\geql \forall x\, \tilde{p}_{0,6}(\tilde{s}(\tilde{s}(\tilde{z})),x)
& [368] \\
\text{\bf Rule (\cref{ceq4hr5})} : [205] [206] [366] :
& \\
\framebox{$\tilde{p}_{0,6}(\tau,x)\geql \gu\vee \tilde{p}_{0,6}(\tilde{s}(\tilde{s}(\tilde{z})),x)\geql \gu$}
& [369] \\
\text{\bf Rule (\cref{ceq4hr5})} : [368] : [369]; \tau/\tilde{s}(\tilde{s}(\tilde{z})), x/\tilde{w}_{(0,0)} :
& \\
\framebox{$\forall x\, \tilde{p}_{0,6}(\tilde{s}(\tilde{s}(\tilde{z})),x)\geql \gu$}
& [370] \\
\text{repeatedly \bf Rule (\cref{ceq4hr5})} : [216] [217] [218]; \tau/\tilde{s}(\tilde{s}(\tilde{z})), x/\tilde{u} : [351-355] :
                                              \tilde{u}\in \tilde{\mbb{U}} :
& \\
\framebox{$\tilde{p}_{0,11}(\tilde{s}(\tilde{s}(\tilde{z})),\tilde{u})\geql \gu, \tilde{u}\in \tilde{\mbb{U}}$}
& [371-375] \\
\text{\bf Rule (\cref{ceq4hr5})} : [213]; \tau/\tilde{s}(\tilde{s}(\tilde{z})), x/\tilde{u} : [371-375] : \tilde{u}\in \tilde{\mbb{U}} :
& \\
\framebox{$\tilde{p}_{0,7}(\tilde{s}(\tilde{s}(\tilde{z})),\tilde{u})\geql \gu, \tilde{u}\in \tilde{\mbb{U}}$}
& [376-380] \\ 
\text{\bf Rule (\cref{ceq4hr14})} : \tilde{p}_{0,7}(\tilde{s}(\tilde{s}(\tilde{z})),x)\geql \gu : [376-380] :
& \\
\framebox{$\mi{uni}(x)\geql \gz$}\vee \tilde{p}_{0,7}(\tilde{s}(\tilde{s}(\tilde{z})),x)\geql \gu
& [381] \\
\text{\bf Rule (\cref{ceq4hr8})} : \forall x\, \tilde{p}_{0,7}(\tilde{s}(\tilde{s}(\tilde{z})),x), \gu :
& \\
\tilde{p}_{0,7}(\tilde{s}(\tilde{s}(\tilde{z})),\tilde{w}_{(1,1)})\gle \gu\vee
\gu\geql \forall x\, \tilde{p}_{0,7}(\tilde{s}(\tilde{s}(\tilde{z})),x)\vee 
\framebox{$\gu\gle \forall x\, \tilde{p}_{0,7}(\tilde{s}(\tilde{s}(\tilde{z})),x)$}
& [382] \\
\text{\bf Rule (\cref{ceq4hr5})} : [382] :
& \\
\framebox{$\tilde{p}_{0,7}(\tilde{s}(\tilde{s}(\tilde{z})),\tilde{w}_{(1,1)})\gle \gu$}\vee
\gu\geql \forall x\, \tilde{p}_{0,7}(\tilde{s}(\tilde{s}(\tilde{z})),x)
& [383] \\
\text{\bf Rule (\cref{ceq4hr5})} : [213] [214] [381] :
& \\
\framebox{$\tilde{p}_{0,7}(\tau,x)\geql \gu\vee \tilde{p}_{0,7}(\tilde{s}(\tilde{s}(\tilde{z})),x)\geql \gu$}
& [384] \\
\text{\bf Rule (\cref{ceq4hr5})} : [383] : [384]; \tau/\tilde{s}(\tilde{s}(\tilde{z})), x/\tilde{w}_{(1,1)} :
& \\
\framebox{$\forall x\, \tilde{p}_{0,7}(\tilde{s}(\tilde{s}(\tilde{z})),x)\geql \gu$}
& [385] \\
\text{\bf Rule (\cref{ceq4hr5})} : [202] [211]; \tau/\tilde{s}(\tilde{s}(\tilde{z})) : [385] :
& \\
\framebox{$\tilde{p}_{0,3}(\tilde{s}(\tilde{s}(\tilde{z})),x)\geql \tilde{p}_{0,4}(\tilde{s}(\tilde{s}(\tilde{z})),x)$}
& [386] \\
\text{\bf Rule (\cref{ceq4hr5})} : [199] [203]; \tau/\tilde{s}(\tilde{s}(\tilde{z})) : [370] [386] :
& \\
\framebox{$\tilde{p}_{0,1}(\tilde{s}(\tilde{s}(\tilde{z})),x)\geql \tilde{p}_{0,2}(\tilde{s}(\tilde{s}(\tilde{z})),x)$}
& [387] \\
\text{\bf Rule (\cref{ceq4hr7})} : \exists \tau\, \tilde{p}_{0,1}(\tau,x) :
& \\
\framebox{$\tilde{p}_{0,1}(\tau,x)\gle \exists \tau\, \tilde{p}_{0,1}(\tau,x)\vee
           \tilde{p}_{0,1}(\tau,x)\geql \exists \tau\, \tilde{p}_{0,1}(\tau,x)$}
& [388] \\  
\text{repeatedly \bf Rule (\cref{ceq4hr5})} : [196] [197] [235] [387] : [200] [388]; \tau/\tilde{s}(\tilde{s}(\tilde{z})) :  
& \\
\square
& [389] \\[1mm]
\hline \hline
\end{IEEEeqnarray*}
\end{minipage}
\vspace{-2mm}
\end{table*}

\begin{table*}[p]
\caption{Binary interpolation rules for $\wedge$, $\vee$, $\rightarrow$, $\leftrightarrow$, $\geql$, $\gle$}\label{tab2}
\vspace{-6mm}
\centering
\begin{minipage}[t]{\linewidth-30mm}
\footnotesize
\begin{IEEEeqnarray}{*LL}
\hline \hline \notag \\[0mm]
\notag 
\text{\bf Case} & \\[1mm]
\hline \notag \\[2mm]
\label{eq0rr1+}
\mbi{\theta=\theta_1\wedge \theta_2} & 
\dfrac{\tilde{p}_\mbbm{i}(\bar{x})\leftrightarrow \theta_1\wedge \theta_2}
      {\left\{\begin{array}{l}
              \tilde{p}_{\mbbm{i}_1}(\bar{x})\gle \tilde{p}_{\mbbm{i}_2}(\bar{x})\vee \tilde{p}_{\mbbm{i}_1}(\bar{x})\geql \tilde{p}_{\mbbm{i}_2}(\bar{x})\vee \tilde{p}_\mbbm{i}(\bar{x})\geql \tilde{p}_{\mbbm{i}_2}(\bar{x}), \\
              \tilde{p}_{\mbbm{i}_2}(\bar{x})\gle \tilde{p}_{\mbbm{i}_1}(\bar{x})\vee \tilde{p}_\mbbm{i}(\bar{x})\geql \tilde{p}_{\mbbm{i}_1}(\bar{x}),               
              \tilde{p}_{\mbbm{i}_1}(\bar{x})\leftrightarrow \theta_1, \tilde{p}_{\mbbm{i}_2}(\bar{x})\leftrightarrow \theta_2
              \end{array}\right\}} \\[2mm]
\IEEEeqnarraymulticol{2}{l}{
|\text{Consequent}|=
15+10\cdot |\bar{x}|+|\tilde{p}_{\mbbm{i}_1}(\bar{x})\leftrightarrow \theta_1|+|\tilde{p}_{\mbbm{i}_2}(\bar{x})\leftrightarrow \theta_2|\leq
27\cdot (1+|\bar{x}|)+|\tilde{p}_{\mbbm{i}_1}(\bar{x})\leftrightarrow \theta_1|+|\tilde{p}_{\mbbm{i}_2}(\bar{x})\leftrightarrow \theta_2|} \notag \\[6mm]
\label{eq0rr2+}
\mbi{\theta=\theta_1\vee \theta_2} & 
\dfrac{\tilde{p}_\mbbm{i}(\bar{x})\leftrightarrow (\theta_1\vee \theta_2)}
      {\left\{\begin{array}{l}
              \tilde{p}_{\mbbm{i}_1}(\bar{x})\gle \tilde{p}_{\mbbm{i}_2}(\bar{x})\vee \tilde{p}_{\mbbm{i}_1}(\bar{x})\geql \tilde{p}_{\mbbm{i}_2}(\bar{x})\vee \tilde{p}_\mbbm{i}(\bar{x})\geql \tilde{p}_{\mbbm{i}_1}(\bar{x}), \\
              \tilde{p}_{\mbbm{i}_2}(\bar{x})\gle \tilde{p}_{\mbbm{i}_1}(\bar{x})\vee \tilde{p}_\mbbm{i}(\bar{x})\geql \tilde{p}_{\mbbm{i}_2}(\bar{x}),               
              \tilde{p}_{\mbbm{i}_1}(\bar{x})\leftrightarrow \theta_1, \tilde{p}_{\mbbm{i}_2}(\bar{x})\leftrightarrow \theta_2
              \end{array}\right\}} \\[2mm]
\IEEEeqnarraymulticol{2}{l}{
|\text{Consequent}|=
15+10\cdot |\bar{x}|+|\tilde{p}_{\mbbm{i}_1}(\bar{x})\leftrightarrow \theta_1|+|\tilde{p}_{\mbbm{i}_2}(\bar{x})\leftrightarrow \theta_2|\leq
27\cdot (1+|\bar{x}|)+|\tilde{p}_{\mbbm{i}_1}(\bar{x})\leftrightarrow \theta_1|+|\tilde{p}_{\mbbm{i}_2}(\bar{x})\leftrightarrow \theta_2|} \notag \\[6mm]
\label{eq0rr3+}
\begin{array}{l}
\mbi{\theta=\theta_1\rightarrow \theta_2,} \\ 
\mbi{\theta_2\neq \gz} 
\end{array} & 
\dfrac{\tilde{p}_\mbbm{i}(\bar{x})\leftrightarrow (\theta_1\rightarrow \theta_2)}
      {\left\{\begin{array}{l}
              \tilde{p}_{\mbbm{i}_1}(\bar{x})\gle \tilde{p}_{\mbbm{i}_2}(\bar{x})\vee \tilde{p}_{\mbbm{i}_1}(\bar{x})\geql \tilde{p}_{\mbbm{i}_2}(\bar{x})\vee \tilde{p}_\mbbm{i}(\bar{x})\geql \tilde{p}_{\mbbm{i}_2}(\bar{x}), \\
              \tilde{p}_{\mbbm{i}_2}(\bar{x})\gle \tilde{p}_{\mbbm{i}_1}(\bar{x})\vee \tilde{p}_\mbbm{i}(\bar{x})\geql \gu,               
              \tilde{p}_{\mbbm{i}_1}(\bar{x})\leftrightarrow \theta_1, \tilde{p}_{\mbbm{i}_2}(\bar{x})\leftrightarrow \theta_2
              \end{array}\right\}} \\[2mm]
\IEEEeqnarraymulticol{2}{l}{
|\text{Consequent}|=
15+9\cdot |\bar{x}|+|\tilde{p}_{\mbbm{i}_1}(\bar{x})\leftrightarrow \theta_1|+|\tilde{p}_{\mbbm{i}_2}(\bar{x})\leftrightarrow \theta_2|\leq
27\cdot (1+|\bar{x}|)+|\tilde{p}_{\mbbm{i}_1}(\bar{x})\leftrightarrow \theta_1|+|\tilde{p}_{\mbbm{i}_2}(\bar{x})\leftrightarrow \theta_2|} \notag \\[6mm]
\label{eq0rr33+}
\mbi{\theta=\theta_1\leftrightarrow \theta_2} & 
\dfrac{\tilde{p}_\mbbm{i}(\bar{x})\leftrightarrow (\theta_1\leftrightarrow \theta_2)}
      {\left\{\begin{array}{l}
              \tilde{p}_{\mbbm{i}_1}(\bar{x})\gle \tilde{p}_{\mbbm{i}_2}(\bar{x})\vee \tilde{p}_{\mbbm{i}_1}(\bar{x})\geql \tilde{p}_{\mbbm{i}_2}(\bar{x})\vee \tilde{p}_\mbbm{i}(\bar{x})\geql \tilde{p}_{\mbbm{i}_2}(\bar{x}), \\
              \tilde{p}_{\mbbm{i}_2}(\bar{x})\gle \tilde{p}_{\mbbm{i}_1}(\bar{x})\vee \tilde{p}_{\mbbm{i}_2}(\bar{x})\geql \tilde{p}_{\mbbm{i}_1}(\bar{x})\vee \tilde{p}_\mbbm{i}(\bar{x})\geql \tilde{p}_{\mbbm{i}_1}(\bar{x}), \\
              \tilde{p}_{\mbbm{i}_1}(\bar{x})\gle \tilde{p}_{\mbbm{i}_2}(\bar{x})\vee \tilde{p}_{\mbbm{i}_2}(\bar{x})\gle \tilde{p}_{\mbbm{i}_1}(\bar{x})\vee \tilde{p}_\mbbm{i}(\bar{x})\geql \gu, 
              \tilde{p}_{\mbbm{i}_1}(\bar{x})\leftrightarrow \theta_1, \tilde{p}_{\mbbm{i}_2}(\bar{x})\leftrightarrow \theta_2
              \end{array}\right\}} \\[2mm]
\IEEEeqnarraymulticol{2}{l}{
|\text{Consequent}|=
27+17\cdot |\bar{x}|+|\tilde{p}_{\mbbm{i}_1}(\bar{x})\leftrightarrow \theta_1|+|\tilde{p}_{\mbbm{i}_2}(\bar{x})\leftrightarrow \theta_2|\leq
27\cdot (1+|\bar{x}|)+|\tilde{p}_{\mbbm{i}_1}(\bar{x})\leftrightarrow \theta_1|+|\tilde{p}_{\mbbm{i}_2}(\bar{x})\leftrightarrow \theta_2|} \notag \\[6mm]
\label{eq0rr7+}
\begin{array}{l}
\mbi{\theta=\theta_1\geql \theta_2,} \\
\mbi{\theta_i\neq \gz, \gu} 
\end{array} & 
\dfrac{\tilde{p}_\mbbm{i}(\bar{x})\leftrightarrow (\theta_1\geql \theta_2)}
      {\left\{\begin{array}{l}
              \tilde{p}_{\mbbm{i}_1}(\bar{x})\geql \tilde{p}_{\mbbm{i}_2}(\bar{x})\vee \tilde{p}_\mbbm{i}(\bar{x})\geql \gz, \\
              \tilde{p}_{\mbbm{i}_1}(\bar{x})\gle \tilde{p}_{\mbbm{i}_2}(\bar{x})\vee \tilde{p}_{\mbbm{i}_2}(\bar{x})\gle \tilde{p}_{\mbbm{i}_1}(\bar{x})\vee \tilde{p}_\mbbm{i}(\bar{x})\geql \gu, 
              \tilde{p}_{\mbbm{i}_1}(\bar{x})\leftrightarrow \theta_1, \tilde{p}_{\mbbm{i}_2}(\bar{x})\leftrightarrow \theta_2
              \end{array}\right\}} \\[2mm]
\IEEEeqnarraymulticol{2}{l}{
|\text{Consequent}|=
15+8\cdot |\bar{x}|+|\tilde{p}_{\mbbm{i}_1}(\bar{x})\leftrightarrow \theta_1|+|\tilde{p}_{\mbbm{i}_2}(\bar{x})\leftrightarrow \theta_2|\leq
27\cdot (1+|\bar{x}|)+|\tilde{p}_{\mbbm{i}_1}(\bar{x})\leftrightarrow \theta_1|+|\tilde{p}_{\mbbm{i}_2}(\bar{x})\leftrightarrow \theta_2|} \notag \\[6mm]
\label{eq0rr8+}
\begin{array}{l}
\mbi{\theta=\theta_1\gle \theta_2,} \\ 
\mbi{\theta_1\neq \gz, \theta_2\neq \gu} 
\end{array} \qquad & 
\dfrac{\tilde{p}_\mbbm{i}(\bar{x})\leftrightarrow (\theta_1\gle \theta_2)}
      {\left\{\begin{array}{l}
              \tilde{p}_{\mbbm{i}_1}(\bar{x})\gle \tilde{p}_{\mbbm{i}_2}(\bar{x})\vee \tilde{p}_\mbbm{i}(\bar{x})\geql \gz, \\
              \tilde{p}_{\mbbm{i}_2}(\bar{x})\gle \tilde{p}_{\mbbm{i}_1}(\bar{x})\vee \tilde{p}_{\mbbm{i}_2}(\bar{x})\geql \tilde{p}_{\mbbm{i}_1}(\bar{x})\vee \tilde{p}_\mbbm{i}(\bar{x})\geql \gu, 
              \tilde{p}_{\mbbm{i}_1}(\bar{x})\leftrightarrow \theta_1, \tilde{p}_{\mbbm{i}_2}(\bar{x})\leftrightarrow \theta_2
              \end{array}\right\}} \\[2mm]
\IEEEeqnarraymulticol{2}{l}{
|\text{Consequent}|=
15+8\cdot |\bar{x}|+|\tilde{p}_{\mbbm{i}_1}(\bar{x})\leftrightarrow \theta_1|+|\tilde{p}_{\mbbm{i}_2}(\bar{x})\leftrightarrow \theta_2|\leq
27\cdot (1+|\bar{x}|)+|\tilde{p}_{\mbbm{i}_1}(\bar{x})\leftrightarrow \theta_1|+|\tilde{p}_{\mbbm{i}_2}(\bar{x})\leftrightarrow \theta_2|} \notag \\[2mm]
\hline \hline \notag
\end{IEEEeqnarray}
\end{minipage}
\vspace{-2mm}
\end{table*}
\begin{table*}[p]
\caption{Unary interpolation rules for $\rightarrow$, $\geql$, $\gle$, $\forall$, $\exists$}\label{tab3}
\vspace{-6mm}
\centering
\begin{minipage}[t]{\linewidth-65mm}
\footnotesize
\begin{IEEEeqnarray}{*LL}
\hline \hline \notag \\[0mm]
\notag 
\text{\bf Case} & \\[1mm]
\hline \notag \\[2mm]
\label{eq0rr4+}
\mbi{\theta=\theta_1\rightarrow \gz} \qquad & 
\dfrac{\tilde{p}_\mbbm{i}(\bar{x})\leftrightarrow (\theta_1\rightarrow \gz)}
      {\{\tilde{p}_{\mbbm{i}_1}(\bar{x})\geql \gz\vee \tilde{p}_\mbbm{i}(\bar{x})\geql \gz,
         \gz\gle \tilde{p}_{\mbbm{i}_1}(\bar{x})\vee \tilde{p}_\mbbm{i}(\bar{x})\geql \gu, 
         \tilde{p}_{\mbbm{i}_1}(\bar{x})\leftrightarrow \theta_1\}} \\[2mm]
\IEEEeqnarraymulticol{2}{l}{
|\text{Consequent}|=
12+4\cdot |\bar{x}|+|\tilde{p}_{\mbbm{i}_1}(\bar{x})\leftrightarrow \theta_1|\leq
27\cdot (1+|\bar{x}|)+|\tilde{p}_{\mbbm{i}_1}(\bar{x})\leftrightarrow \theta_1|} \notag \\[6mm]
\label{eq0rr77+}
\mbi{\theta=\theta_1\geql \gz} & 
\dfrac{\tilde{p}_\mbbm{i}(\bar{x})\leftrightarrow (\theta_1\geql \gz)}
      {\{\tilde{p}_{\mbbm{i}_1}(\bar{x})\geql \gz\vee \tilde{p}_\mbbm{i}(\bar{x})\geql \gz, 
         \gz\gle \tilde{p}_{\mbbm{i}_1}(\bar{x})\vee \tilde{p}_\mbbm{i}(\bar{x})\geql \gu, 
         \tilde{p}_{\mbbm{i}_1}(\bar{x})\leftrightarrow \theta_1\}} \\[2mm]
\IEEEeqnarraymulticol{2}{l}{
|\text{Consequent}|=
12+4\cdot |\bar{x}|+|\tilde{p}_{\mbbm{i}_1}(\bar{x})\leftrightarrow \theta_1|\leq
27\cdot (1+|\bar{x}|)+|\tilde{p}_{\mbbm{i}_1}(\bar{x})\leftrightarrow \theta_1|} \notag \\[6mm]
\label{eq0rr777+}
\mbi{\theta=\theta_1\geql \gu} & 
\dfrac{\tilde{p}_\mbbm{i}(\bar{x})\leftrightarrow (\theta_1\geql \gu)}
      {\{\tilde{p}_{\mbbm{i}_1}(\bar{x})\geql \gu\vee \tilde{p}_\mbbm{i}(\bar{x})\geql \gz, 
         \tilde{p}_{\mbbm{i}_1}(\bar{x})\gle \gu\vee \tilde{p}_\mbbm{i}(\bar{x})\geql \gu, 
         \tilde{p}_{\mbbm{i}_1}(\bar{x})\leftrightarrow \theta_1\}} \\[2mm]
\IEEEeqnarraymulticol{2}{l}{
|\text{Consequent}|=
12+4\cdot |\bar{x}|+|\tilde{p}_{\mbbm{i}_1}(\bar{x})\leftrightarrow \theta_1|\leq
27\cdot (1+|\bar{x}|)+|\tilde{p}_{\mbbm{i}_1}(\bar{x})\leftrightarrow \theta_1|} \notag \\[6mm]
\label{eq0rr88+}
\mbi{\theta=\gz\gle \theta_1} & 
\dfrac{\tilde{p}_\mbbm{i}(\bar{x})\leftrightarrow (\gz\gle \theta_1)}
      {\{\gz\gle \tilde{p}_{\mbbm{i}_1}(\bar{x})\vee \tilde{p}_\mbbm{i}(\bar{x})\geql \gz, 
         \tilde{p}_{\mbbm{i}_1}(\bar{x})\geql \gz\vee \tilde{p}_\mbbm{i}(\bar{x})\geql \gu,                              
         \tilde{p}_{\mbbm{i}_1}(\bar{x})\leftrightarrow \theta_1\}} \\[2mm]
\IEEEeqnarraymulticol{2}{l}{
|\text{Consequent}|=
12+4\cdot |\bar{x}|+|\tilde{p}_{\mbbm{i}_1}(\bar{x})\leftrightarrow \theta_1|\leq
27\cdot (1+|\bar{x}|)+|\tilde{p}_{\mbbm{i}_1}(\bar{x})\leftrightarrow \theta_1|} \notag \\[6mm] 
\label{eq0rr888+}
\mbi{\theta=\theta_1\gle \gu} & 
\dfrac{\tilde{p}_\mbbm{i}(\bar{x})\leftrightarrow (\theta_1\gle \gu)}
      {\{\tilde{p}_{\mbbm{i}_1}(\bar{x})\gle \gu\vee \tilde{p}_\mbbm{i}(\bar{x})\geql \gz, 
         \tilde{p}_{\mbbm{i}_1}(\bar{x})\geql \gu\vee \tilde{p}_\mbbm{i}(\bar{x})\geql \gu, 
         \tilde{p}_{\mbbm{i}_1}(\bar{x})\leftrightarrow \theta_1\}} \\[2mm]
\IEEEeqnarraymulticol{2}{l}{
|\text{Consequent}|=
12+4\cdot |\bar{x}|+|\tilde{p}_{\mbbm{i}_1}(\bar{x})\leftrightarrow \theta_1|\leq
27\cdot (1+|\bar{x}|)+|\tilde{p}_{\mbbm{i}_1}(\bar{x})\leftrightarrow \theta_1|} \notag \\[6mm]
\label{eq0rr5+}
\mbi{\theta=\forall x\, \theta_1} & 
\dfrac{\tilde{p}_\mbbm{i}(\bar{x})\leftrightarrow \forall x\, \theta_1}
      {\{\tilde{p}_\mbbm{i}(\bar{x})\geql \forall x\, \tilde{p}_{\mbbm{i}_1}(\bar{x}),
         \tilde{p}_{\mbbm{i}_1}(\bar{x})\leftrightarrow \theta_1\}} \\[2mm]
\IEEEeqnarraymulticol{2}{l}{
|\text{Consequent}|=
5+2\cdot |\bar{x}|+|\tilde{p}_{\mbbm{i}_1}(\bar{x})\leftrightarrow \theta_1|\leq
27\cdot (1+|\bar{x}|)+|\tilde{p}_{\mbbm{i}_1}(\bar{x})\leftrightarrow \theta_1|} \notag \\[6mm]
\label{eq0rr6+}
\mbi{\theta=\exists x\, \theta_1} & 
\dfrac{\tilde{p}_\mbbm{i}(\bar{x})\leftrightarrow \exists x\, \theta_1}
      {\{\tilde{p}_\mbbm{i}(\bar{x})\geql \exists x\, \tilde{p}_{\mbbm{i}_1}(\bar{x}),
         \tilde{p}_{\mbbm{i}_1}(\bar{x})\leftrightarrow \theta_1\}} \\[2mm]
\IEEEeqnarraymulticol{2}{l}{
|\text{Consequent}|=
5+2\cdot |\bar{x}|+|\tilde{p}_{\mbbm{i}_1}(\bar{x})\leftrightarrow \theta_1|\leq
27\cdot (1+|\bar{x}|)+|\tilde{p}_{\mbbm{i}_1}(\bar{x})\leftrightarrow \theta_1|} \notag \\[2mm]
\hline \hline \notag
\end{IEEEeqnarray}
\end{minipage}
\vspace{-2mm}
\end{table*}

\end{document}